\newcommand{\mathcolorbox}[2]{\colorbox{#1}{$\displaystyle #2$}}
\newcommand{\nc}{\newcommand}
\newcommand{\DMO}{\DeclareMathOperator}
\DeclareMathAlphabet\mathbfcal{OMS}{cmsy}{b}{n}
\providecommand{\Comments}{1}
\newcommand{\pasin}[1]{\textcolor{red}{[Pasin: #1]}}
\newcommand{\cs}[1]{\textcolor{red}{[Karthik: #1]}}
\newcommand{\todo}[1]{\textcolor{red}{[TODO: #1]}}
\newcommand{\pasin}[1]{}
\newcommand{\cs}[1]{}
\newcommand{\todo}[1]{}
\renewcommand{\tilde}{\widetilde}
\renewcommand{\hat}{\widehat}
\nc{\MS}{\mathcal{S}}
\nc{\MP}{\mathcal{P}}
\nc{\MR}{\mathcal{R}}
\nc{\cM}{\mathcal{M}}
\nc{\cS}{\mathcal{S}}
\nc{\cI}{\mathcal{I}}
\nc{\cA}{\mathcal{A}}
\nc{\tcA}{\widetilde{\cA}}
\nc{\MZ}{\mathcal{Z}}
\DMO{\Binom}{Binom}
\newcommand{\E}{\mathbb{E}}
\DMO{\Var}{Var}
\newcommand{\bX}{\mathbf{X}}
\renewcommand{\phi}{\varphi}
\nc{\tbx}{\widetilde{\bx}}
\nc{\tbX}{\widetilde{\bX}}
\nc{\tZ}{\widetilde{Z}}
\nc{\tz}{\widetilde{z}}
\newcommand{\bU}{\mathbf{U}}
\nc{\tbU}{\widetilde{\bU}}
\newcommand{\bT}{\mathbf{T}}
\nc{\tbT}{\widetilde{\bT}}
\newcommand{\bD}{\mathbf{D}}
\nc{\tbD}{\widetilde{\bD}}
\newcommand{\bx}{\mathbf{x}}
\newcommand{\NP}{\mbox{\sf NP}}
\renewcommand{\P}{\mbox{\sf P}}
\newcommand{\PSPACE}{\mbox{\sf PSPACE}}
\newcommand{\RIH}{\mbox{\sf RIH}}
\newcommand{\PCP}{\mbox{\sf PCP}}
\newcommand{\PCPP}{\mbox{\sf PCPP}}
\newcommand{\LTC}{\mbox{\sf LTC}}
\newcommand{\CSP}{\mbox{\sf CSP}}
\newcommand{\N}{\mathbb{N}}
\nc{\BN}{\mathbb{N}}
\nc{\BZ}{\mathbb{Z}}
\newcommand{\ind}{\mathbf{1}}
\newcommand{\bA}{\mathbf{A}}
\nc{\tbA}{\widetilde{\bA}}
\newcommand{\cP}{\mathcal{P}}
\DeclareMathOperator{\argmax}{argmax}
\DeclareMathOperator{\argmin}{argmin}
\newcommand{\eps}{\varepsilon}
\newcommand{\val}{\mathrm{val}}
\newcommand{\bpsi}{\boldsymbol{\psi}}
\newcommand{\bPsi}{\boldsymbol{\Psi}}
\newcommand{\lrsg}{\leftrightsquigarrow}
\newcommand{\tPi}{\widetilde{\Pi}}
\newcommand{\tV}{\widetilde{V}}
\newcommand{\tG}{\widetilde{G}}
\newcommand{\tE}{\widetilde{E}}
\newcommand{\tC}{\widetilde{C}}
\newcommand{\tA}{\widetilde{A}}
\newcommand{\tSigma}{\widetilde{\Sigma}}
\newcommand{\tpsi}{\widetilde{\psi}}
\newcommand{\tS}{\widetilde{S}}
\newcommand{\ttt}{\widetilde{t}}
\newcommand{\tv}{\widetilde{v}}
\newcommand{\te}{\widetilde{e}}
\newcommand{\tsigma}{\widetilde{\sigma}}
\newcommand{\dec}{\mathsf{Dec}}
\newcommand{\asgnt}{\mathsf{AsgnT}}
\newcommand{\minlab}{\textsc{MinLab}}
\newcommand{\maxpar}{\textsc{MaxPar}}
\newcommand{\SAT}{\textsc{SAT}}
\newcommand{\gapcsp}[3]{({#1,#2})\text{-}\mathrm{Gap}\text{-}\mathrm{2\text{-}\CSP}_{#3}}
\newcommand{\gapminmaxcsp}[3]{({#1,#2})\text{-}\mathrm{Gap}\mathrm{MinMax} \text{-}2\text{-}\mathrm{\CSP}_{#3}}
\newcommand{\gapmaxmincsp}[3]{({#1,#2})\text{-}\mathrm{Gap}\mathrm{MaxMin} \text{-}k\text{-}\mathrm{\CSP}_{#3}}
\newcommand{\gapmaxminthreecsp}[3]{({#1,#2})\text{-}\mathrm{Gap}\mathrm{MaxMin} \text{-}3\text{-}\mathrm{\CSP}_{#3}}
\newcommand{\gapmaxmintwocsp}[3]{({#1,#2})\text{-}\mathrm{Gap}\mathrm{MaxMin} \text{-}2\text{-}\mathrm{\CSP}_{#3}}
\newcommand{\gapmaxminvarcsp}[4]{({#1,#2})\text{-}\mathrm{Gap}\mathrm{MaxMin} \text{-}{#3}\text{-}\mathrm{\CSP}_{#4}}
\newcommand{\enc}{\mathsf{Enc}}
\newtheorem{theorem}{Theorem}
\newtheorem{lemma}[theorem]{Lemma}
\newtheorem{claim}[theorem]{Claim}
\newtheorem{definition}[theorem]{Definition}
\newtheorem{corollary}[theorem]{Corollary}
\newtheorem{remark}[theorem]{Remark}
\title{On Inapproximability of Reconfiguration Problems:\\ \PSPACE-Hardness and some Tight \NP-Hardness Results\footnote{This paper merges the preprints \cite{KM24} and \cite{GRW25}.}}
\date{}
\author{
Venkatesan Guruswami\thanks{ 
Simons Institute for the Theory of Computing and UC Berkeley. 
Research supported in part by NSF grants CCF-2228287 and CCF-2211972 and a Simons Investigator award.} \\{\tt venkatg@berkeley.edu}
\and 
Karthik C.\ S.\thanks{
Rutgers University.
Research supported by NSF grant CCF-2313372 and by the Simons Foundation, Grant Number 825876, Awardee Thu D. Nguyen.}\\ \texttt{karthik.cs@rutgers.edu}
\and
Pasin Manurangsi\thanks{ 
Google Research.
Part of this work was done while the author was visiting Rutgers University.}\\\texttt{pasin@google.com}
\and
Xuandi Ren\thanks{
UC Berkeley. 
Research supported in part by NSF CCF-2228287.}\\{\tt xuandi\_ren@berkeley.edu}
\and
Kewen Wu\thanks{
Institute for Advanced Study.}\\{\tt shlw\_kevin@hotmail.com}
}
\begin{document}
\maketitle

\thispagestyle{empty}

\begin{abstract}
The field of combinatorial reconfiguration studies search problems with a focus on transforming one feasible solution into another. \vspace{0.1cm}

Recently, Ohsaka [STACS'23] put forth the \emph{Reconfiguration Inapproximability Hypothesis} (\RIH), which roughly asserts that there is some $\varepsilon>0$ such that given as input a $k$-\CSP\ instance (for some constant $k$) over some constant sized alphabet, and two satisfying assignments $\psi_s$ and $\psi_t$, it is \PSPACE-hard to find a sequence of assignments starting from  $\psi_s$ and ending at $\psi_t$ such that every assignment in the sequence satisfies at least $(1-\varepsilon)$ fraction of the constraints and also that every   assignment in the sequence is obtained by changing its immediately preceding assignment (in the sequence) on exactly one variable. 
Assuming \RIH, many important reconfiguration problems have been shown to be \PSPACE-hard to approximate by Ohsaka [STACS'23; SODA'24]. \vspace{0.1cm}

In this paper, we prove \RIH, and consequently obtain constant-factor 
\PSPACE-hardness of approximation results for many reconfiguration 
problems, thereby resolving an open question posed by Ito et al.\ 
[TCS'11]. Our proof uses known constructions of 
Probabilistically Checkable Proofs of Proximity (in a black-box manner) 
to create the gap, and further leverages a parallelization framework 
from recent parameterized inapproximability results to analyze the 
quantitative trade-off between $\varepsilon$ and $k$ in \RIH. 
We note that Hirahara and Ohsaka [STOC'24] have also independently proved \RIH.
\vspace{0.1cm}

We also prove that the aforementioned  $k$-\CSP\ Reconfiguration problem is \NP-hard to approximate to within a factor of $1/2 + \varepsilon$ (for any $\varepsilon>0$) when $k=2$. We complement this with a polynomial time $(1/2 - \varepsilon)$-approximation algorithm, which improves upon a $(1/4 - \varepsilon)$-approximation algorithm of Ohsaka [2023] (again for any $\varepsilon>0$).\vspace{0.1cm}

Finally, we show that Set Cover Reconfiguration is \NP-hard to approximate to within a factor of $2 - \varepsilon$ for any constant $\varepsilon > 0$, which matches the simple linear-time 2-approximation algorithm by Ito et al. [TCS'11]. 
\end{abstract}

\clearpage
\tableofcontents
\thispagestyle{empty}
\clearpage
\setcounter{page}{1}
\section{Introduction}

\emph{Combinatorial reconfiguration}  is a field of research that investigates the following question: Is it possible to find a step-by-step transformation between two feasible solutions for a search problem while preserving their feasibility?
The field has received a lot of attention over the last two decades with applications to various real-world scenarios that are either dynamic or uncertain (please see the surveys of Nishimura \cite{nishimura2018introduction} or
van den Heuvel \cite{heuvel13complexity}, or the thesis of Mouawad \cite{mouawad2015reconfiguration} for details).

Many of the  reconfiguration problems studied in literature are derived from classical search problems, for example, consider the classical Set Cover problem, where given a collection of subsets of a universe and an integer $k$, the goal is to find $k$ input subsets whose union is the universe (such a collection of $k$ subsets is referred to as a {set cover}). In the corresponding reconfiguration problem, referred to as the \emph{Set Cover Reconfiguration} problem \cite{hearn2005pspace,ito2011complexity,haddadan2016complexity}, we are given as input a collection of subsets of a universe, an integer $k$, and two set covers $T_{s}$ and $T_t$, both of size $k$, and the goal is to find a sequence of set covers  starting from $T_s$ and ending at $T_t$ such that every  set cover in the sequence has small size and is obtained from its immediately preceding set cover in the sequence by adding or removing exactly one subset (see \Cref{sec:prelimSetCover} for a formal definition). In similar spirit, various reconfiguration analogues of important search problems have been studied in literature, such as  
{Boolean Satisfiability} \cite{gopalan2009connectivity,makino2011exact,mouawad2017shortest},
{Clique} \cite{ito2011complexity},  {Vertex Cover} \cite{bonsma2016independent,bonsma2014reconfiguring,kaminski2012complexity,lokshtanov2019complexity,wrochn2018reconfiguration}, {Matching} \cite{ito2011complexity},
{Coloring} \cite{cereceda2008connectedness,bonsma2009finding,cereceda2011finding},
{Subset Sum} \cite{ito2014approximability}, and
{Shortest Path} \cite{kaminski2011shortest,bonsma2013complexity}.

Many of the above mentioned reconfiguration problems are \PSPACE-hard, and thus to address this intractability, Ito et al.~\cite{ito2011complexity} initiated the study of these problems under the lens of \emph{approximation}. For many of the above mentioned reconfiguration problems, polynomial time (non-trivial) approximation algorithms are now known \cite{ito2011complexity,MatsuokaO21,Ohsaka-label-cover}. On the other hand, using the \PCP\ theorem for \NP\  \cite{AS98,ALMSS98,Dinur07}, there are also some \NP-hardness of approximation results for reconfiguration problems, such as for the reconfiguration analogs of Boolean satisfiability \cite{ito2011complexity}, Clique \cite{ito2011complexity}, Binary arity Constraint Satisfaction Problems (hereafter 2-\CSP) \cite{Ohsaka23prev,ohsaka2023gap}, and Set Cover \cite{ohsaka2023gap}. 
However, a glaring gap in the above results is that we only know of \NP-hardness of approximation results for reconfiguration problems, but their exact versions are known to be \PSPACE-hard. Thus, the authors of \cite{ito2011complexity} posed the following question, 
\begin{center}
    ``\emph{Are the problems in Section 4 \PSPACE-hard to approximate (not just \NP-hard)?}'',
\end{center}
while referring to the problems in Section~4 of their paper, for which they had shown \NP-hardness of approximation results. 

To address the above question, Ohsaka in \cite{Ohsaka23prev}, introduced the \emph{Reconfiguration Inapproximability Hypothesis} (\RIH),  a reconfiguration analogue of the \PCP\ theorem for \NP, and assuming which he showed that it is \PSPACE-hard to approximate reconfiguration analogs of 2-\CSP, Boolean satisfiability,  Indepedent set, Clique, and Vertex Cover, to some constant factor bounded away from 1. These \PSPACE-hardness of approximation factors were later improved (still under \RIH) in \cite{ohsaka2023gap} for the reconfiguration analogue of 2-\CSP\ and the Set Cover Reconfiguration problem to 0.9942 and 1.0029 respectively. 

\begin{sloppypar}In order to state \RIH, we first define the reconfiguration analog of  gap-\CSP\ (which is the centerpiece problem of the hardness of approximation in \NP).  In the $\gapmaxmincsp{1}{1-\eps}{q}$ problem, we are given as input (i) a $k$-uniform hypergraph and every hyperedge of this hypergraph corresponds to a constraint on $k$ variables (which are the $k$ nodes of the hyperedge) over an alphabet set of size $q$, and (ii) two satisfying assignments $\psi_s$ and $\psi_t$ to the $k$-\CSP\ instance. For an instance of $\gapmaxmincsp{1}{1-\eps}{q}$, a reconfiguration assignment sequence is a sequence of  assignments starting from $\psi_s$ and ending at $\psi_t$ such that    every   assignment in the sequence is obtained by changing its immediately preceding assignment (in the sequence) on exactly one variable. The goal is then to distinguish the completeness case from the soundness case: in the completeness case, there exists  a reconfiguration assignment sequence such that every assignment in the sequence is a satisfying assignment, and in the soundness case, in every reconfiguration assignment sequence there is some assignment in the sequence which violates at least $\varepsilon$ fraction of the constraints 
 (see \Cref{sec:prelimCSP} for a formal definition of $\gapmaxmincsp{1}{1-\eps}{q}$). \RIH\ then asserts that there exists universal constants $\eps>0$, and $q,k\in\mathbb{N}$ such that $\gapmaxmincsp{1}{1-\eps}{q}$ is \PSPACE-hard\footnote{Note that the \PCP\ theorem is equivalent to the following statement (see \cite{Dinur07}): For some $\varepsilon>0$, it is \NP-hard to approximate 2-\CSP{}s to $(1-\varepsilon)$ factor.}.\end{sloppypar}

Given that \RIH\ implies the aforementioned \PSPACE-hardness of approximation results for many important problems, we naturally have the following fundamental question in the study of (in)approximability of reconfiguration problems: 

\begin{center}
    \emph{Is \RIH\ true?}
\end{center}

We remark that Ito et al.~\cite{ito2011complexity} showed that $\gapmaxmincsp{1}{1-\eps}{q}$ is \NP-hard for $q=2$, $k=3$, and some\footnote{In fact, they achieve $\eps = 1/16 - o(1)$.} $\eps > 0$. However, proving \RIH\ (i.e., proving \PSPACE-hardness of $\gapmaxmincsp{1}{1-\eps}{q}$) had remained elusive. Recently, \RIH\ has been resolved independently by Hirahara–Ohsaka (via the framework of probabilistically checkable reconfiguration proofs, PCRP) \cite{hirahara2023probabilistically} and by us (this paper). Moreover, Ohsaka’s subsequent works \cite{ohsaka2023gap,ohsaka2024alphabet} give a Dinur-style route, i.e., combining gap amplification and alphabet reduction, towards proving \RIH.


\subsection{Our Contribution}

Our primary contribution is proof of \RIH\ (see \Cref{sec:result1}). 
In addition, we establish a sharp correspondence between the soundness gap of \RIH\ and the rejection probability of constant-query assignment testers via a simple parallelization trick (see \Cref{sec:result1.2}).
Finally, we also provide tight results on the \NP-hardness of approximating GapMaxMin-2-\CSP$_q$ and Set Cover Reconfiguration (see \Cref{sec:result2}).

\subsubsection[PSPACE-Hardness of Approximation of Reconfiguration: Resolving RIH]{\PSPACE-Hardness of Approximation of Reconfiguration: Resolving \RIH}\label{sec:result1}

The main result of our work is providing a  proof of \RIH:
\begin{theorem} \label{thm:rih}
There exist constants $q\in\mathbb{N}, k\in\mathbb{N}, \eps > 0$ such that $\gapmaxmincsp{1}{1-\eps}{q}$ is \PSPACE-hard.
\end{theorem}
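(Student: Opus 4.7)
My plan is to reduce from a known PSPACE-hard exact reconfiguration problem—such as reconfiguration of 3-SAT—to the gap problem $\gapmaxmincsp{1}{1-\eps}{q}$, using a Probabilistically Checkable Proof of Proximity (PCPP) verifier as the gap-amplification tool in a black-box manner. This parallels the way PCPPs power classical proofs of the PCP theorem for \NP, except that here the ``hardness base'' is PSPACE-hard reconfiguration rather than \NP-hard search, and the ``gap'' has to survive under the single-variable-change reconfiguration dynamics.

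Given a 3-SAT instance $\Phi$ on variable set $X$ and two satisfying assignments $\psi_s, \psi_t$, I would build a new CSP $\Phi'$ whose variables are $X \cup Y$, where $Y$ holds a purported PCPP proof that the $X$-part is a satisfying assignment of $\Phi$. The constraints of $\Phi'$ are precisely the local tests of the PCPP verifier on $(x, \pi)$, so each has arity $k = O(1)$ over an alphabet of size $q = O(1)$. The endpoints are $(\psi_s, \pi_s^{\mathrm{canon}})$ and $(\psi_t, \pi_t^{\mathrm{canon}})$, using the canonical honest proofs. Soundness is the cleaner direction: by PCPP soundness, any $(x,\pi)$ that satisfies at least a $(1-\eps)$ fraction of constraints must have $x$ within Hamming distance $\delta n$ of some satisfying assignment $x^* \in \{0,1\}^{|X|}$ of $\Phi$. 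Rounding each intermediate state of a hypothetical good reconfiguration sequence in $\Phi'$ to its nearest $x^*$ then yields a reconfiguration sequence in $\Phi$ between $\psi_s$ and $\psi_t$, after absorbing the $O(\delta n)$ additional single-variable flips needed to bridge consecutive rounded assignments; this contradicts the PSPACE-hardness of exact 3-SAT reconfiguration.

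The main obstacle is completeness. We need every one-variable step $\psi^{(i)} \to \psi^{(i+1)}$ of the original reconfiguration sequence for $\Phi$ to lift to a sequence of single-coordinate changes in $(x,\pi)$ such that \emph{every} intermediate state satisfies \emph{all} constraints of $\Phi'$. The difficulty is that a single bit-flip in $x$ typically forces many coordinates of the canonical proof $\pi$ to change, and editing $\pi$ bit-by-bit passes through proofs that fail a large fraction of the PCPP verifier's tests. To handle this, I would augment the reduction with auxiliary scaffolding---for instance, maintaining several parallel proof slots with a rotating ``active'' slot so that at most one slot is in transition at any time, or introducing phase-counter variables that locally toggle PCPP tests into a dormant, trivially-satisfied mode while the corresponding proof block is being rewritten---so that proof coordinates can be updated one at a time without ever creating a violation. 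The crux of the argument will be designing this scaffolding so that (i) it preserves constant arity $k$ and constant alphabet size $q$, (ii) it is locally checkable so a cheating prover cannot abuse the dormant-mode gadgets to bypass the PCPP soundness guarantee, and (iii) the completeness sequence in $\Phi'$ has polynomial length in $|\Phi|$ and in the PCPP proof length. Once the scaffolding is in place, soundness is re-argued against the augmented verifier, recovering the constant gap needed by \RIH.
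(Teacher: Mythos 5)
Your high-level plan shares the paper's skeleton: reduce from an exact, \PSPACE-hard reconfiguration problem, use a \PCPP\ as the black-box gap amplifier, and handle the single-coordinate-step difficulty on the completeness side via a rotating-slot/deactivation gadget. Your ``several parallel proof slots with a rotating active slot'' and ``phase-counter variable that toggles \PCPP\ tests into a dormant, trivially-satisfied mode'' are precisely the paper's mechanism: there are four copies $\tV_1,\dots,\tV_4$ of the (encoded) assignment, one auxiliary \PCPP\ block $\tA_i$ per circuit $\Phi_i$, and a single selector variable $v^*\in[4]$ whose value $i$ is exactly the copy allowed to be in transition; every constraint coming from $\Pi'_i$ is trivially satisfied whenever $v^*\ne i$. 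So you have correctly identified the completeness obstacle and the right shape of the gadget.

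The gap is in soundness, and it is not cosmetic. You claim soundness is ``the cleaner direction'': round each intermediate $x$ to its nearest satisfying $x^*$, then bridge consecutive $x^*$'s (which may differ in up to $2\delta n+1$ coordinates) by absorbing $O(\delta n)$ single-variable flips. That bridge does not exist: the assignments visited by those extra flips need not satisfy $\Phi$, so the bridged sequence is not a valid reconfiguration sequence for the exact problem you started from, and the contradiction you are aiming for never materializes. Two ingredients you omitted are what make the soundness go through in the paper. First, the \PCPP\ is applied not to the raw assignment but to $\enc(\psi)$ under a constant-rate, constant-relative-distance error-correcting code; the \PCPP\ proximity guarantee bounds Hamming distance in the \emph{encoded} domain, and the code's distance then forces any two nearby encoded strings to decode to the \emph{same} assignment, so the ``rounded'' assignment cannot silently drift between steps. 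Second, and essentially, the circuit handed to the \PCPP\ is not merely ``$x$ satisfies $\Phi$''; for each $i$ the circuit $\Phi_i$ checks that each of the three copies $\tV_{\bar i_1},\tV_{\bar i_2},\tV_{\bar i_3}$ decodes to a satisfying assignment of $\Pi$ \emph{and} that every pair of decoded assignments differs in Hamming distance at most $1$. This ``reconfiguration membership check'' is exactly what ensures that the majority decode of the three active copies moves by at most one coordinate across consecutive states — directly producing a valid reconfiguration sequence, with no bridging. (The number four is also load-bearing: with only three copies you cannot take a majority among the non-transitioning copies, and with two you lose consistency when $v^*$ changes.) Without the encoding, the richer circuit, and the majority-of-three decoding, the soundness step as you wrote it breaks down.
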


Our proof does \emph{not} follow Ohsaka's approach~\cite{ohsaka2023gap} of adapting Dinur's proof. Rather, we give a rather direct (and short) proof of \RIH\ by using efficient constructions of \emph{Probabilistically Checkable Proofs of Proximity} (\PCPP) in a black-box way (see \Cref{sec:proof-overview} for details).

As a consequence of Ohsaka's aforementioned reductions~\cite{Ohsaka23prev}, we immediately get the following as a corollary. (We do not define all the problems here and refer to~\cite{Ohsaka23prev} for more details.)

\begin{corollary}
There exists a constant $\eps > 0$, such that the following are all \PSPACE-hard:
\begin{itemize}
\item $(1 - \eps)$-approximation of 3SAT Reconfiguration,
\item $(1 + \eps)$-approximation of Vertex Cover Reconfiguration (and thus Set Cover Reconfiguration),
\item $(1 - \eps)$-approximation of Independent Set Reconfiguration and Clique Reconfiguration,
\item $\gapmaxmintwocsp{1}{1-\eps}{3}$.
\end{itemize}
\end{corollary}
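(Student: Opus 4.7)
The plan is to derive the corollary by plugging Theorem~\ref{thm:rih} into the chain of gap-preserving black-box reductions from $\gapmaxmincsp{1}{1-\eps}{q}$ to each of the listed reconfiguration problems, all of which are already established by Ohsaka~\cite{Ohsaka23prev}. Since Theorem~\ref{thm:rih} unconditionally gives \PSPACE-hardness of $\gapmaxmincsp{1}{1-\eps_0}{q}$ for universal constants $q,k\in\mathbb{N}$ and $\eps_0 > 0$, each item of the corollary follows by invoking the corresponding reduction from~\cite{Ohsaka23prev} and noting that the final gap parameter $\eps$ is obtained from $\eps_0$ by a constant-factor loss that depends only on $q$ and $k$.

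Concretely, for 3SAT Reconfiguration I would encode each $q$-ary $k$-ary constraint by a constant-size 3-CNF gadget on the original variables (bit-encoded as $\lceil \log_2 q\rceil$ Booleans each) together with auxiliary variables, so that a violated constraint forces a constant fraction of its gadget's clauses to fail while a single CSP-variable flip is simulated by $O(\log q)$ Boolean flips. For Independent Set and Clique Reconfiguration I would use the FGLSS-style graph whose vertices are (constraint, local satisfying assignment) pairs and whose edges connect pairwise-consistent pairs, so that satisfying CSP assignments correspond to maximum cliques and each violated constraint removes exactly one clique vertex, transferring the gap. Vertex Cover Reconfiguration hardness follows by complementation from Independent Set Reconfiguration, and Set Cover Reconfiguration hardness from Vertex Cover Reconfiguration via the standard incidence reduction. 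Finally, for $\gapmaxmintwocsp{1}{1-\eps}{3}$ I would first collapse the $k$-CSP to a 2-CSP over a large constant alphabet via the standard constraint/variable-copy bipartite construction, and then compose with a constant-size alphabet-reduction gadget (of the sort used in~\cite{Ohsaka23prev,ohsaka2023gap}) to push the alphabet down to $3$.

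The one genuinely non-cosmetic step in each reduction is verifying that it is gap-preserving in the \emph{reconfiguration} sense, not merely in the static CSP sense: every intermediate assignment along a valid source-side reconfiguration sequence must lift to a short, high-value subsequence on the target side, so each gadget must admit local modifications without wrecking too many constraints simultaneously. This is precisely the property that Ohsaka's reductions are engineered to have, so once Theorem~\ref{thm:rih} is in hand the corollary is essentially immediate. The main obstacle, therefore, is not inventing new gadgets but rather carefully tracking the blow-up in the sequence length and the multiplicative loss in $\eps$ through each composition, and confirming that both remain universal constants so that the final hardness factor is a fixed constant bounded away from $1$.
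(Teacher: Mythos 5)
Your proposal matches the paper's proof: the corollary is obtained exactly by combining \Cref{thm:rih} with Ohsaka's existing gap-preserving reductions from $\gapmaxmincsp{1}{1-\eps}{q}$ to the listed problems, and the paper gives no further argument beyond citing \cite{Ohsaka23prev}. Your additional gadget sketches are unnecessary for the paper's route, but your core reasoning is the same and correct.
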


This paper merges \cite{KM24} and \cite{GRW25}. The former appeared around the same time as the independent work of Hirahara and Ohsaka \cite{hirahara2023probabilistically}, which also proved \RIH\ using \PCPP\ but through a different approach that relies on additional properties of the \PCPP\ (see \Cref{rem:HO24} for more details).

{
\subsubsection{Sharp Query and Soundness Trade-offs of \RIH\ by Parallelization}\label{sec:result1.2}

The state-of-the-art set of parameters in the PSPACE-hardness of $\gapmaxmincsp{1}{1-\eps}{q}$ is $\eps=0.9942$, $k=2$, and $q=O(1)$ \cite{ohsaka2023gap}\footnote{A recent work by Ohsaka \cite{ohsaka2024alphabet} shows the \PSPACE-hardness of $\gapmaxmincsp{1}{1-\eps}{q}$ for $\eps>10^{-18}$, $k=2$, and $q<2\times10^6$. Furthermore, as we will later show, if the arity $k$ and the alphabet size $q$ are arbitrarily large constants, then the soundness can be arbitrarily small. Hence we focus on the soundness when $k=2$ and $q=O(1)$ here.}.
Note that the $0.0058$ soundness gap will become even worse in downstream applications \cite{Ohsaka23prev}; and it is unclear if the gap amplification techniques in \cite{ohsaka2023gap} can be improved to give significantly better bounds.

To address this issue, we apply a parallelization trick, inspired by recent works in parameterized hardness of approximation \cite{LRSW23,GLRSW24,guruswami2024almost,BafnaKM25}, to obtain a sharp connection between the soundness gap of \RIH\ and the one of assignment tester (a.k.a \PCPP).
We introduce assignment tester formally in Section~\ref{sec:pcpp}. At this point, let us just comment that it is a widely studied concept in complexity theory \cite{BGHSV06,DinurR06,Harsha2004Robust}.

\begin{theorem}\label{thm:main_grw}
For some fixed constant $\kappa>0$, if for every Boolean circuit of size $n$, there is a polynomial-time $k$-query assignment tester with proximity parameter $\kappa$ and rejection probability $\gamma$, then for some alphabet of constant size $q$, $\gapmaxminvarcsp{1}{1-\gamma}{(k+1)}{q}$ is \PSPACE-hard.
\end{theorem}

In other words, up to an extra query (arity of \CSP), the parameters of \RIH\ and assignment tester are identical.
In addition, the idea of parallelization is fairly general, and it works as long as there are multiple testers with the same structure of queries used by the verifier.

To get a result for arity $k=2$, we note that \cite[Lemma 5.4]{ohsaka2024alphabet} gives a way to trade the arity of the CSP with the soundness gap. Combined with \Cref{thm:main_grw}, this leads to the following corollary. 

\begin{corollary}\label{cor:main_grw}
For some fixed constant $\kappa>0$, if for every Boolean circuit of size $n$, there is a polynomial-time $k$-query assignment tester with proximity parameter $\kappa$ and rejection probability $\gamma$, then for some alphabet of constant size $q$, $\gapmaxmintwocsp{1}{1-\frac\gamma{k+1}}{q}$ is \PSPACE-hard.
\end{corollary}

Due to the lack of assignment tester constructions with explicit constant query complexity and soundness, we are not yet able to use \Cref{cor:main_grw} to improve the $0.9942$ soundness in \cite{ohsaka2023gap}. 
We view our \Cref{thm:main_grw} and \Cref{cor:main_grw} as another concrete motivation to understand the soundness-query tradeoff of assignment testers.

Finally, we briefly explain the testers needed in \Cref{thm:main_grw} and defer its formal definition to \Cref{sec:pcpp}. 
Let $\Phi\colon\{0,1\}^n\to\{0,1\}$ be a Boolean function, given by its circuit representation.
Intuitively, the tester is a verifier $\mathcal V$ that efficiently decides, given access to an auxiliary proof $\pi$, if a particular input $x\in\{0,1\}^n$ is a solution of $\Phi$ or $x$ is far from any solution of $\Phi$.
The \emph{query complexity} is the maximum number of bits that $\mathcal V$ reads on $(x,\pi)$; the \emph{proximity parameter} is the distance of $x$ to solutions; the \emph{rejection probability} is the verifier's rejection probability given $x$ is far from solutions.
The assignment tester is implicit in the low-degree testers \cite{AS98,ALMSS98} that utilize auxiliary oracles to test if a given function is close to a low-degree polynomial. It is then explicitly defined in \cite{BGHSV06}, and concurrently in \cite{DinurR06} with the name \emph{Assignment Tester}. We refer interested readers to Harsha's thesis \cite{Harsha2004Robust} for an overview of the history.

While the PCP theorem admits verifiers of subconstant soundness (i.e., rejection probability $1-o(1)$) and two queries~\cite{moshkovitz2008two}, we are not aware of explicit tradeoffs between soundness and query complexity for assignment testers.
}

\subsubsection[Tight NP-Hardness of Approximation for GapMaxMin-2-CSP and Set Cover Reconfiguration]{Tight \NP-Hardness of Approximation for GapMaxMin-2-\CSP$_q$ and Set Cover Reconfiguration}\label{sec:result2}

In addition to the previously mentioned \PSPACE-hardness results, we also provide \NP-hardness results that have improved (and nearly tight) inapproximability ratios.

In \cite{ohsaka2023gap}, the author showed that for every $\varepsilon>0$ there exists  some $q\in\mathbb N$ such that deciding $\gapmaxmintwocsp{1}{3/4+\eps}{q}$ is \NP-hard. We improve this result to the following.
\begin{theorem} \label{thm:maxmin-csp-np-hardness}
For any $\eps > 0$ there exists $q \in \N$ such that 
deciding $\gapmaxmintwocsp{1}{1/2+\eps}{q}$ is \NP-hard.
\end{theorem}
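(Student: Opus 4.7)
The plan is to reduce from gap Label Cover, which is NP-hard to distinguish $\val = 1$ vs $\val \leq \eta$ for any constant $\eta > 0$, by the PCP theorem combined with Raz's parallel repetition. Given a Label Cover instance $\Pi = (V, E, \Sigma, \{\pi_e\}_{e \in E})$, I would construct a 2-CSP reconfiguration instance over an alphabet $\Sigma' = \Sigma \cup \{\bot_s, \bot_t\}$ of size $q = |\Sigma| + 2$. The variables are $\{X_v\}_{v \in V}$, and I would set $\psi_s(X_v) = \bot_s$ and $\psi_t(X_v) = \bot_t$ for every $v$. A key design choice is that $\psi_s$ and $\psi_t$ are satisfying \emph{by construction}, regardless of $\val(\Pi)$, so the reduction always produces a valid promise instance.

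For each edge $e = (u, v) \in E$ with projection $\pi_e$, I would add a constraint $C_e(X_u, X_v)$ whose satisfying pairs are: the matched-phase pairs $(\bot_s, \bot_s)$ and $(\bot_t, \bot_t)$, the projection pairs $\{(a,b) \in \Sigma \times \Sigma : \pi_e(a) = b\}$, and a carefully chosen subset $B_e$ of ``boundary'' pairs between $\Sigma$ and $\{\bot_s, \bot_t\}$ needed to allow the completeness reconfiguration. Crucially, the cross-phase pairs $(\bot_s, \bot_t)$ and $(\bot_t, \bot_s)$ are \emph{never} among the satisfying pairs --- this forbidden transition is the driver of the soundness bound.

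For completeness, if $\val(\Pi) = 1$ with satisfying labeling $\sigma$, I would exhibit the reconfiguration sequence $\psi_s \to (\text{all-}\sigma) \to \psi_t$: first change $X_v$ from $\bot_s$ to $\sigma(v)$ one variable at a time (the induced boundary pairs $(\bot_s, \sigma(\cdot))$ lie in each $B_e$ by design), reaching a state where every $C_e$ is satisfied via the projection $\pi_e(\sigma(u)) = \sigma(v)$; then gradually change each $X_v$ from $\sigma(v)$ to $\bot_t$, using the analogous $(\sigma(\cdot), \bot_t)$ boundaries. For soundness, when $\val(\Pi) \leq \eta$, I would track the partition $A(t) = X^{-1}(\bot_s)$, $B(t) = X^{-1}(\bot_t)$, $C(t) = X^{-1}(\Sigma)$ along any reconfiguration, and argue that at the crossing time when $|A(t)| = |B(t)|$ (which must occur since $|A|$ drops from $|V|$ to $0$ while $|B|$ grows from $0$ to $|V|$, each by unit steps), the forbidden $(\bot_s, \bot_t)$ pairs contribute a large number of violated edges, while the label-cover-soundness bound caps the satisfied projection edges inside $C$ at $\eta |E|$.

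The main obstacle is nailing down the precise $1/2$ constant. A naive instantiation yields only the weaker $3/4$-style bound obtained by Ohsaka: the adversary can hide mass in the $\Sigma$-partition $C(t)$ at the crossing time, and boundary pairs in $B_e$ can absorb many violations. To push the bound to $1/2 + \eps$, the set $B_e$ must be restricted enough that the reconfiguration is effectively forced to pass through a nearly bipartite $(A, B)$ state with $|C|$ small at the crossing, at which point the contribution $\approx 2 \alpha \beta |E|$ from the $(\bot_s, \bot_t)$ cut approaches $|E|/2$ when $\alpha \approx \beta \approx 1/2$; the label cover soundness is then used to rule out using $C$ as a shortcut. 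The $(1/2 - \eps)$-approximation algorithm proved in the same paper, which essentially averages constraint satisfaction between $\psi_s$ and $\psi_t$ and achieves exactly $1/2$ for any completeness-1 instance, matches this threshold and shows it is tight.
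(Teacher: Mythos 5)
Your construction captures the right shape of the reduction (two terminal characters at opposite ends of the transition, a forbidden cross-phase pair driving soundness), and you correctly diagnose that a naive instantiation plateaus at the $3/4$-threshold. But you never close the two gaps you name, and both are essential; the paper's proof is built around resolving exactly them.

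\textbf{Missing ingredient 1: balancedness of the constraint graph.} Your soundness argument tracks vertex counts: at the crossing time $|A(t)| = |B(t)|$. But the quantity you need to lower-bound is the number of \emph{edges} between $A(t)$ and $B(t)$, and these two are unrelated for a general constraint graph. If $G$ admits a balanced bisection with $o(|E|)$ edges across, the adversary reorders the transition so that $A(t)$ and $B(t)$ realize that bisection at crossing time, and the forbidden pairs hit only $o(|E|)$ constraints. The bound then degrades back to $3/4$. The paper's fix is to start not from arbitrary Label Cover but from a $\delta$-\emph{balanced} Gap-2-CSP instance: every near-balanced vertex partition cuts at least $(1/2 - \delta)|E|$ edges. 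Such instances are still NP-hard to decide because one can reduce through extractor (bi-regular) constraint graphs, via Moshkovitz's fortification result, together with an ``expander-mixing'' style lemma for extractors. Without this structural preprocessing, the inequality you want at crossing time simply fails.

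\textbf{Missing ingredient 2: keeping every variable in a ``phase.''} You let a vertex leave the phase-tracking alphabet entirely once it is assigned a $\Sigma$ value, which is why the set $C(t)$ becomes a shelter. You identify this and propose restricting $B_e$, but no choice of $B_e$ prevents the adversary from first moving a large constant fraction of vertices into $\Sigma$ values and then running the $(1/2-\eps)$-approximation routine on the rest; this is exactly the attack the paper notes this reduction is vulnerable to. The paper's construction avoids it by taking alphabet $(\tSigma \cup \{\sigma^*\}) \times \{0,1\}$: every symbol, including the ones coming from $\tSigma$, carries a phase bit, and a constraint with at least one $\sigma^*$-labeled endpoint checks \emph{only} the bits. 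So even when the adversary assigns $\tSigma$-characters, those vertices still participate fully in the bisection accounting; the ``hidden'' set $C(t)$ never exists. Coupled with the balancedness guarantee, the crossing-time value is at most $\val(\tPi) + (1/2+\delta)$, which is $< 1/2 + \eps$.

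In short, your proposal is a plausible sketch of the reduction template, but it stops at the same barrier as Ohsaka's earlier $3/4$-bound and does not supply either of the two new ideas (fortified/balanced instances, phase-tagged alphabet) that are precisely what push the soundness down to $1/2 + \eps$.
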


The above hardness result is the essentially the best possible result, because in \Cref{sec:ourcontrialg}, we will show that for every $\varepsilon>0$ and every $q\in\mathbb N$,  deciding $\gapmaxmintwocsp{1}{1/2-\eps}{q}$ is in \P.

Next, we consider the (in)approximability of the Set Cover Reconfiguration problem.  Previously, in \cite{ohsaka2023gap}, the author showed that  approximating the Set Cover Reconfiguration problem to 1.0029 factor is \NP-hard. We improve this result to the following.
\begin{theorem} \label{thm:minmax-setcover-np-hardness}
For any $\eps > 0$, it is \NP-hard to approximate Set Cover Reconfiguration to within a factor of $(2 - \eps)$.
\end{theorem}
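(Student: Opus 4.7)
My plan is to reduce from the NP-hardness of $\gapmaxmintwocsp{1}{1/2+\eps}{q}$ established in Theorem~\ref{thm:maxmin-csp-np-hardness}. Given a 2-CSP instance on a $d$-regular constraint graph $G=(V,E)$ (which we may assume by standard degree-regularization) with alphabet $\Sigma$ of size $q$ and satisfying source/target assignments $\psi_s,\psi_t$, I would construct a Set Cover Reconfiguration instance whose universe consists of one ``variable marker'' $a_v$ per $v\in V$, together with carefully designed ``violation markers'' for each edge--bad-pair $(e,(\sigma,\tau))$ with $(\sigma,\tau)\notin R_e$. For each $(v,\sigma)\in V\times\Sigma$ there is a set $S_{v,\sigma}$ that contains $a_v$ and a specific subset of violation markers. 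The source and target covers are $T_s=\{S_{v,\psi_s(v)}:v\in V\}$ and $T_t=\{S_{v,\psi_t(v)}:v\in V\}$, both of size $|V|$.

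For completeness, a satisfying 2-CSP reconfiguration translates into a Set Cover reconfiguration of width $|V|+1$ via the standard add-then-remove trick: to relabel $v$ from $\sigma$ to $\sigma'$, first add $S_{v,\sigma'}$ (the resulting cover is valid since $v$ is now ``plural'' and thereby covers all of $v$'s incident violation markers) and then remove $S_{v,\sigma}$.

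For soundness, suppose a Set Cover reconfiguration $T_0,\ldots,T_N$ has $\max_i|T_i|<(2-\eps)(|V|+1)$. Define $\psi_i$ by reading off the unique label at each $v$ with $|L_v(T_i)|=1$ (``singular'') and extending arbitrarily on variables with $|L_v(T_i)|\ge 2$ (``plural''); consecutive $T_i$'s differ by a single set, so $(\psi_i)$ is a valid CSP reconfiguration from $\psi_s$ to $\psi_t$. Theorem~\ref{thm:maxmin-csp-np-hardness} then guarantees that some $\psi_i$ violates at least a $(1/2-\eps')$ fraction of the constraints, and validity of $T_i$ as a cover forces each violated constraint to be ``paid for'' by one or more extra sets beyond the canonical $|V|$.

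The main obstacle is achieving the factor of $2$ (as opposed to the factor of about $5/4$ that a naive construction would yield): in the naive construction, a single extra label at a variable covers every incident violation marker, allowing one plural variable to discharge up to $d$ violations simultaneously. I would address this by carefully designing the violation markers so that each extra label at a variable covers only $O(1)$ violations---for instance, by splitting each $c_e^{\sigma,\tau}$ into several copies, each coverable only by a distinct ``alternative label'' at $u$ or $v$ via specific threshold/partition-system gadgets so that all copies require many alternative labels to be jointly covered. With this tight discharging, each violation of $\psi_i$ contributes $\Omega((q-1)/d)$ extras to $|T_i|$, and the soundness lower bound $(1/2-\eps')m=(1/2-\eps')d|V|/2$ on violations translates---once $q$ and $d$ are chosen so that $2d/(q-1)$ is strictly less than $1/2-\eps'$---into a lower bound of roughly $(1-O(\eps'))|V|$ on extras, forcing $|T_i|\ge(2-O(\eps))|V|$ and contradicting the hypothesis.
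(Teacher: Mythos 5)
Your reduction goes through \Cref{thm:maxmin-csp-np-hardness} (GapMaxMin, a \emph{value}-based hardness), whereas the paper goes through a genuinely different intermediate statement, \Cref{thm:minmax-csp-np-hardness} (GapMinMax, a \emph{label-count}/multi-assignment hardness), and that distinction is exactly where your argument breaks.

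Here is the obstruction, made concrete. In your Lund--Yannakakis-style gadget, a valid set cover $T_i$ forces every edge with both endpoints singular to be \emph{satisfied} by those two singular labels: a violation marker $c_e^{\sigma,\tau}$ on such an edge would have no covering set in $T_i$. So the singular variables of $T_i$ always form a \emph{satisfying partial assignment} of the base CSP. The quantity you need to bound is therefore the maximum size of a satisfying partial assignment, i.e.\ $\maxpar(\tPi)$, because the number of plural variables is at least $|V| - \maxpar(\tPi)$ and each contributes one extra to $|T_i|$. GapMaxMin soundness (``some $\psi_i$ violates at least $(1/2-\eps')$ of the edges'') does not control $\maxpar$: an adversary can choose $T_i$ so that the plural set is a \emph{small vertex cover} of the violated edges (in a $d$-regular graph, as small as $(1/2-\eps')|V|/2$), and all the violated constraints concentrate on those few plural vertices, where $\psi_i$ is arbitrary anyway. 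This is precisely why the naive bound tops out around $5/4$, and your partition-system fix does not escape it: splitting $c_e^{\sigma,\tau}$ into copies that demand many alternative labels still charges the extras to the same small set of plural vertices, and a single vertex given the full alphabet discharges all its incident edges at once. The cost is measured in plural vertices, not per-edge, so no per-edge multiplier gets you past the vertex-cover bottleneck. There is also a secondary inconsistency in your arithmetic: you want to choose $q$ and $d$ with $2d/(q-1) < 1/2-\eps'$, but both $q$ and the constraint-graph degree are dictated by the hardness result you invoke; they are not free parameters.

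The paper avoids all of this by starting from the FGLSS/clique-style hardness (its Theorem~\ref{thm:csp-part}), which says $\maxpar(\tPi) < \delta|V|$ in the soundness case, proves the $\gapminmaxcsp{1}{2-\eps}{q}$ hardness (\Cref{thm:minmax-csp-np-hardness}) from it (finding the first step where no variable is $\{(\sigma^*,0)\}$ and arguing the singular part then sits inside $\tSigma$ and so must be small), and only then applies the black-box Lund--Yannakakis/Feige reduction, under which multi-assignment size equals set-cover size exactly. To repair your proof you would need to replace the GapMaxMin starting point with a $\maxpar$-type hardness and argue about the size of the satisfying partial assignment on singular variables, at which point you would essentially be reproducing the paper's proof of \Cref{thm:minmax-csp-np-hardness}.
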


The above hardness result is the essentially the best possible result, because in \cite{ito2011complexity}, the authors show that approximating Set Cover Reconfiguration to a factor of 2 is in \P.

As an intermediate step in the above result for the Set Cover Reconfiguration problem, we prove the tight inapproximability of the minimization variant of the GapMaxMin-2-\CSP$_q$ problem, namely, the  $\gapminmaxcsp{1}{s}{q}$ problem, where  we are given the same  input as for the GapMaxMin-2-\CSP$_q$ problem,  and the goal is then to distinguish the completeness case from the soundness case: in the completeness case, there exists  a reconfiguration assignment sequence such that every assignment in the sequence is a satisfying assignment, and in the soundness case, in every reconfiguration satisfying multiassignment sequence\footnote{A reconfiguration satisfying multiassignment sequence is the same as a reconfiguration assignment sequence but where we allow each sequence element to be a multiassignment instead of just an assignment and also insist that every multiassignment in the sequence satisfies all the constraints.} there is some multiassignment in the sequence whose average number of labels per variable is more than $s$. 
 (See \Cref{sec:prelimCSP} for a formal definition of $\gapminmaxcsp{1}{s}{q}$).

\begin{theorem} \label{thm:minmax-csp-np-hardness}
For any $\eps > 0$ there exists $q \in \N$ such that 
deciding $\gapminmaxcsp{1}{2-\eps}{q}$ is \NP-hard.
\end{theorem}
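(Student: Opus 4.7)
The plan is to reduce from a sufficiently strong gap 2-\CSP\ problem. By the \PCP\ theorem together with parallel repetition, for any desired $\delta > 0$ there is a constant $q_0$ such that $\gapcsp{1}{\delta}{q_0}$ is \NP-hard; I would choose $\delta$ as a small function of $\eps$ (polynomial in $\eps$ should suffice).

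Given an instance $\Phi = (V, E, \Sigma_0, \{\pi_e\})$ of $\gapcsp{1}{\delta}{q_0}$ along with a distinguished satisfying assignment $\sigma^* \colon V \to \Sigma_0$ in the YES case, I would construct a 2-\CSP\ $\Phi'$ on variable set $V$ (plus a small number of auxiliary variables used as ``global tags'') over a ``doubled'' alphabet of the form $\Sigma' = (\Sigma_0 \times \{s,t\}) \cup \{\bot\}$. Intuitively, the tags $s$ and $t$ indicate which of the two endpoint labelings a variable currently belongs to, while the symbol $\bot$ is a restricted bridging label used to enable single-variable transitions. The constraints of $\Phi'$ would enforce: (a) \emph{within-tag consistency}: if both endpoints of a constraint carry the same tag $\tau \in \{s,t\}$, their underlying $\Sigma_0$-labels must satisfy the corresponding original constraint of $\Phi$; (b) \emph{cross-tag forbiddance}: pairs with opposing tags are disallowed; (c) \emph{bridging}: the symbol $\bot$ is admissible at a variable $v$ only when its neighbors' labels are consistent with $\sigma^*(v)$, making $\bot$ a local ``placeholder'' compatible with the fixed satisfying assignment. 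The endpoint assignments are $\psi_s(v) = (\sigma^*(v), s)$ and $\psi_t(v) = (\sigma^*(v), t)$.

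For completeness, when $\Phi$ is satisfiable, I would exhibit a single-assignment reconfiguration sequence: visiting variables $v \in V$ in some order, each variable is switched from $(\sigma^*(v), s)$ to $\bot$ and then to $(\sigma^*(v), t)$ in two single-label steps; the design of clause (c) ensures that each intermediate single-label assignment is still satisfying. The peak average size along this sequence is $1 + O(1/|V|) \ll 2 - \eps$. For soundness, assume for contradiction that a multi-assignment reconfiguration sequence exists whose every multi-assignment has average size at most $2 - \eps$. At every step the fraction of variables carrying exactly one label is at least $\eps$. Since the sequence transitions a global tag variable from $s$ to $t$, there is some step at which the ``$s$-only'' variables and ``$t$-only'' variables are each a nonnegligible fraction of $V$ (say $\Omega(\eps)$), while fewer than an $\eps$ fraction of variables can carry $\bot$ without forcing average size above $2 - \eps$. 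Using clauses (a) and (c), a careful rounding argument would extract from this multi-assignment a labeling of $V$ in $\Sigma_0$ that satisfies strictly more than a $\delta$ fraction of the constraints of $\Phi$, contradicting the soundness of the gap CSP.

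The main obstacle is the soundness analysis: I need the bridging rule (c) to be loose enough to admit the single-assignment reconfiguration in the YES case but tight enough that, in the NO case, any economical multi-assignment is forced to restrict the $\bot$ labels to a small set of variables whose $\Sigma_0$-labels then behave like an honest attempted solution to $\Phi$. Getting the tight factor $2 - \eps$ (rather than something weaker like $4/3 - \eps$) will require that a constant fraction of variables be ``decisively tagged'' at the critical intermediate step, combined with an averaging argument over the sequence in the spirit of classical factor-2 Set Cover reductions; tuning the auxiliary tag variables and the bridging constraint is where the construction must be designed most carefully.
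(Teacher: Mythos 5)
Your reduction is not well-defined on NO instances: you define the endpoint assignments $\psi_s(v) = (\sigma^*(v), s)$ and $\psi_t(v) = (\sigma^*(v), t)$ and the bridging constraint~(c) all in terms of a \emph{satisfying} assignment $\sigma^*$ of $\Phi$, but when $\val(\Phi) < \delta$ there is no such $\sigma^*$. Under the definition of $\minlab$, the endpoints $\psi_s, \psi_t$ must themselves satisfy $\Phi'$, and your constraint~(a) makes $\psi_s$ satisfying exactly when $\sigma^*$ satisfies $\Phi$, so there is no consistent way to pick $\sigma^*$ in the soundness case. Furthermore, starting from $\gapcsp{1}{\delta}{q_0}$ (value hardness) is too weak for the target factor: your soundness analysis needs to conclude that the single-labeled vertices at some critical step are few, but the single-labeled vertices form a \emph{satisfying partial assignment} on the induced constraints, and low value of $\Phi$ does not bound the size of such partial assignments. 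For example, $\Phi$ can have $\val(\Phi)<\delta$ while still admitting a satisfying partial assignment on $0.9|V|$ vertices; the adversary can single-label those vertices and double-label the remaining $0.1|V|$, giving a multi-assignment of size only $1.1|V|$, which defeats any bound of the form $(2-\eps)|V|$. Your proposed rounding ``extract a full labeling satisfying $>\delta$ fraction'' is aiming at the wrong target; what needs to be ruled out is a large satisfying \emph{partial} assignment.

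The paper's proof handles both issues by (i) starting not from a value-gap but from the FGLSS/Clique-style hardness, i.e.\ the $\maxpar$ variant in which it is \NP-hard to distinguish $\val(\Pi)=1$ from $\maxpar(\Pi) < \delta|V|$, and (ii) making the constraint graph complete by inserting trivial constraints, a step that preserves $\maxpar$ but would destroy value hardness. The paper's reduction adds a single fresh symbol $\sigma^*$ (a new \emph{letter}, not a satisfying assignment) and two tag copies, with $\psi_s \equiv (\sigma^*,0)$ and $\psi_t \equiv (\sigma^*,1)$; these are satisfying by construction regardless of $\Pi$, so the reduction is always well-defined. The soundness then identifies the first step where no vertex is exclusively $(\sigma^*,0)$; completeness of the graph forces that at this step no vertex can be exclusively $(\sigma^*,1)$ either, so all single-labeled vertices carry original-alphabet labels and therefore form a satisfying partial assignment, which by $\maxpar$ hardness has size $<\delta|V|$. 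Summing gives multi-assignment size $> (2-\delta)|V|$. Your $\bot$-bridging mechanism and auxiliary tag variables are additional moving parts the paper avoids, and the averaging argument you sketch does not establish that $s$-only and $t$-only vertices are each a constant fraction at the same step.
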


Again, we note that the above hardness result is the best possible, as there is a 2-factor polynomial time approximation algorithm for the GapMinMax-2-\CSP$_q$ problem via the same approach as Ito et al.'s algorithm for Set Cover Reconfiguraiton~\cite{ito2011complexity}. In particular, we first start from $\psi_s$ and sequentially include the assignment in $\psi_t$ to each of the variables, to eventually obtain the multiassignment $\tilde \psi$ where for each variable we have both the assignments of $\psi_s$ and $\psi_t$ to it as part of the multiassignment. Then starting from $\tilde \psi$ we sequentially remove the assignment to each variable in $\psi_s$ to eventually obtain $\psi_t$.

\subsubsection[Improved Approximation Algorithm for GapMaxMin-2-CSP]{Improved Approximation Algorithm for GapMaxMin-2-\CSP$_q$}\label{sec:ourcontrialg} 
For GapMaxMin-2-\CSP$_q$, the previous best polynomial-time approximation algorithm was due to Ohsaka~\cite{Ohsaka-label-cover}, which yields gives $(1/4 - \eps)$-approximation but only works when the average degree of the graph is sufficiently large. We improve on this, by giving a $(1/2 - \eps)$-approximation algorithm which works even \emph{without} the average degree assumption:

\begin{theorem} \label{thm:approx-algo-2csp}
For any constant $\eps \in (0, 1/2]$ and $q \in \N$, there exists a (randomized) polynomial-time  algorithm 
for $\gapmaxmintwocsp{1}{1/2 - \eps}{q}$.
\end{theorem}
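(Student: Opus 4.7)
We plan to analyze a simple \emph{monotone random-order} algorithm. Let $D := \{v : \psi_s(v) \neq \psi_t(v)\}$ denote the differ set, sample a uniformly random permutation $\pi$ of $D$, and output the reconfiguration sequence $\psi_0, \psi_1, \ldots, \psi_{|D|}$ where $\psi_i$ agrees with $\psi_t$ on the first $i$ variables of $\pi$ and with $\psi_s$ elsewhere. The algorithm accepts iff every intermediate $\psi_i$ satisfies at least a $(1/2 - \eps)$ fraction of constraints, and we repeat the random draw $O(1)$ times to amplify the success probability. Soundness is immediate: in the soundness case no reconfiguration sequence meets the $1/2 - \eps$ threshold, so the algorithm never accepts.

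For completeness, we use that $\psi_s$ and $\psi_t$ are both fully satisfying. The key structural observation is that any constraint with at most one endpoint in $D$ is automatically satisfied at every $\psi_i$, because its side outside $D$ carries the common value $\psi_s(v) = \psi_t(v)$ and its side inside $D$ takes either the $\psi_s$- or $\psi_t$-value (both globally satisfying). So only the $m_D \leq m$ constraints $E_D$ with both endpoints in $D$ can contribute to the violation count $V_i$ at step $i$, and such a constraint $(u,v)$ is violated only at those $i$ for which exactly one of $u, v$ lies in the flipped prefix $S_i := \pi(\{1, \ldots, i\})$. A first-moment calculation over the random permutation then gives
\[
\E\!\left[|E_D(S_i, \overline{S_i})|\right] \;=\; m_D \cdot \frac{2i(|D|-i)}{|D|(|D|-1)} \;\leq\; m_D/2 \;\leq\; m/2,
\]
so $\E[V_i] \leq m/2$ at every step $i$.

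It remains to show that, with constant probability over $\pi$, $\max_i V_i \leq (1/2 + \eps) m$ holds for all $i$ simultaneously. This is the main technical content of the argument, and it is exactly where Ohsaka's earlier $(1/4-\eps)$-approximation~\cite{Ohsaka-label-cover} needs a lower bound on the average degree. To avoid that hypothesis, we plan to split the vertices of $D$ into high-$E_D$-degree and low-$E_D$-degree vertices relative to a threshold $\Delta$: only $O(m_D/\Delta)$ vertices can be high-degree, so we can pin them deterministically to evenly-spaced slots of $\pi$ so that no prefix $S_i$ accumulates too many of their incident edges; the remaining low-degree vertices are then permuted uniformly at random, and a swap-chain Azuma-style inequality (with bounded-differences constant $O(\Delta)$) yields a Gaussian tail for $V_i$ whose width we absorb into $\eps m$ by tuning $\Delta$ in terms of $\eps$, $m$, and $|D|$. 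A union bound over the $|D|+1$ steps completes the argument. The main obstacle is this decomposition: the placement of high-degree vertices must be careful enough that, at every prefix, their contribution to $V_i$ stays within the $\eps m$ budget, and the low-degree tail bound must be tight enough to survive the union bound over all $|D|+1$ steps. Small-$|D|$ instances (say $|D| = O(1/\eps)$) can be handled separately by brute-forcing the $|D|!$ permutations.
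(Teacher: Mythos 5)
Your random-order framework is sound---the observation that only constraints internal to $D$ can ever be violated, and the first-moment bound $\E[V_i]\le m/2$, are both correct---but the high-degree handling is where the proof breaks, and it is the whole ballgame. Pinning the high-degree vertices to evenly-spaced slots of $\pi$ does \emph{not} keep their contribution to the cut within budget. Take $E_D$ to be a disjoint union of two stars with centers $v_1,v_2$ of degrees $m-\Delta-1$ and $\Delta+1$; both are high-degree, and evenly-spaced pinning puts $v_1$ at roughly position $|D|/3$. At the step where $v_1$ flips, only about a third of the leaves have been flipped, so the cut contributed by $v_1$ alone is already $\approx \tfrac{2}{3}(m-\Delta)\approx \tfrac{2m}{3}$---and this holds in \emph{expectation} over the low-degree permutation, so no concentration bound can rescue it. The correct rule is adaptive: a high-degree vertex should leave $S$ only once at most half of its low-degree neighbors remain in $S$. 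This ``voting'' step (which is exactly the greedy high-degree correction in the paper's proof of \Cref{thm:sequence-balanced-full}) caps each high-degree vertex's cut contribution at half its degree for every prefix simultaneously; a fixed, data-oblivious placement cannot.

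There is a second, quantitative problem with the Azuma-plus-union-bound plan. A transposition-difference constant of $O(\Delta)$ yields a tail of $\exp\!\bigl(-\Theta(t^2/(|D|\Delta^2))\bigr)$, so absorbing the width into $\eps m$ while surviving the union bound over $|D|+1$ steps forces $\Delta = O(\eps m/\sqrt{|D|\log|D|})$, which can be as small as $\Theta(\eps\sqrt{m/\log m})$ when $|D|=\Theta(m)$. At that threshold the number of high-degree vertices can be $\Omega(\sqrt{m\log m}/\eps)$, and then the number of high--high edges is no longer $o(m)$, so their cut contribution is unbounded. The paper avoids the union bound altogether: it applies Chebyshev to a \emph{single} random bisection of the low-degree vertices (\Cref{lem:balanced-partition}) and then certifies via a greedy extension (\Cref{lem:balanced-rearrangement-low-deg}) that the cut never exceeds that bisection by more than $\sqrt{m\Delta}+\Delta$. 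Losing no $\log$ factors allows the much larger threshold $\Delta=2m^{3/5}$, which keeps the high--high edge count at $O\bigl((m/\Delta)^2\bigr)=O(m^{4/5})$.
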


As mentioned earlier, this matches our \NP-hardness provided in \Cref{thm:maxmin-csp-np-hardness}. We also note that our algorithm running time is in fact completely independent of the alphabet size $q$ (and thus can handle arbitrarily large $q$). 

\subsection{Proof Overview}
\label{sec:proof-overview}
In this subsection, we provide the proof overview for all of the main results in this paper. 

\subsubsection[Proof of RIH and its Query-Soundness Trade-offs]{Proof of \RIH\ and its Query-Soundness Trade-offs}

In order to prove \Cref{thm:rih}, we need two tools. The first is just a \emph{good binary code} (see \Cref{sec:code} for relevant definitions). We use the notation $\enc: \{0,1\}^k \to \{0,1\}^n$ to denote an encoding algorithm for a code  of message length $k$  and block length $n$ whose distance is $d_{\enc}$. We use a code that has positive constant rate and relative distance.

The second tool we need is an \emph{assignment tester} (a.k.a.\ Probabilistically Checkable Proofs of Proximity) which is simply an algorithm   which takes as input   a Boolean circuit $\Phi$ and outputs a 2-\CSP\ whose variable set  is a superset  of the variable set of $\Phi$ with the following guarantee. For every assignment $\psi$ of $\Phi$, and any extension of $\psi$ to a total assignment to the variables of the 2-\CSP, we have that the number of unsatisfied constraints in the 2-\CSP\ w.r.t.\ that assignment is linearly proportional to the distance to the closest  assignment to $\psi$ (under Hamming distance) which makes $\Phi$ evaluate to 1 
 (see \Cref{sec:pcpp} for a formal definition).

It is known that $\gapmaxmintwocsp{1}{1}{q}$ is \PSPACE-hard even for $q=3$ \cite{GopalanKMP09}. 
Given a $\gapmaxmintwocsp{1}{1}{q}$ instance $\Pi = (G = (V, E), \Sigma, \{C_e\}_{e \in E})$ with two satisfying assignments $\psi_s,\psi_t:V\to \Sigma$, we construct  a $\gapmaxminthreecsp{1}{1-\eps}{q_0}$ instance $\tPi = (\tG = (\tV, \tE), \tSigma, \{\tC_e\}_{e \in \tE})$ where, $$\tV := \{v^*\} \uplus \left(\biguplus_{i \in [4]} \tV_i\right) \uplus \left(\biguplus_{i \in [4]} \tA_i\right) \quad\text{ and }\quad \tE := \biguplus_{i \in [4]} \tE_i$$ where the variables and constraints are defined as follows. 
\begin{description}
\item[Vertex Set:] For every $i \in [4]$, let $\tV_i$ denote a set of $n$ fresh variables, where $n$ is the block length of the code given by $\enc$ whose message length  is $|V|\cdot \lceil\log q\rceil $. We use the notation that for  every $i\in[4]$, let $\bar{i}_1, \bar{i}_2,$ and $\bar{i}_3$ denote the elements of $[4] \setminus \{i\}$ and define  the Boolean circuit $\Phi_i$   on variable set $\tV_{\bar{i}_1}\uplus \tV_{\bar{i}_2}\uplus \tV_{\bar{i}_3}$   which   evaluates   to 1 if and only if  for all $\ell,\ell'\in [3]$ 
   we have the input to $\tV_{\bar{i}_{\ell}}$ is a valid codeword such that it's decoded message is a satisfying assignment to the variables of $\Pi$, and   the decoded input to $\tV_{\bar{i}_{\ell}}$ and the decoded input to $\tV_{\bar{i}_{\ell'}}$ differ by at most 1 in Hamming distance.  
  Let $\Pi'_i = (G'_i = (V'_i, E'_i), \tilde\Sigma, \{C^i_e\}_{e \in E'})$ be the 2-\CSP$_{q_0}$ instance produced by applying the assignment tester on $\Phi_i$ where $V'_i = \tV_{\bar{i}_1} \uplus \tV_{\bar{i}_2} \uplus \tV_{\bar{i}_3} \uplus \tA_i$ (i.e., for each $i\in[4]$, $\tA_i$ is the additional set of variables produced by the assignment tester). 
 \item[Hyperedge Set and Constraints:] For all $i\in[4]$, and for each $e = (u, v) \in E'_i$, create a hyperedge $\te = (v^*, u, v)$ in $\tE_i$ with the following constraint:
\begin{align*}
\forall \tsigma^*, \tsigma_u, \tsigma_v\in\Sigma,\ C_{\te}(\tsigma^*, \tsigma_u, \tsigma_v) = 1 \Longleftrightarrow  \left(\left(\left(\tsigma^* = i\right) \wedge \left(C^i_{e}(\tsigma_u, \tsigma_v)=1\right)\right) \text{ or }\left(\tsigma^* \in [4] \setminus \{i\}\right)\right).
\end{align*}
See \Cref{fig} for an illustration of the design of these constraints. 
\item[Beginning and End of the Reconfiguration Assignment Sequence:] In order to define $\tpsi_s$ and $\tpsi_t$,  the starting and terminating satisfying assignments of $\tPi$, it will be convenient to first define an additional notion. For every satisfying assignment $\psi: V \to \Sigma$ of $\Pi$, we define an assignment $\psi^{\enc}: \tV \to \tilde\Sigma$ of $\tPi$ in the following way. First, fix $i\in[4]$ and we know that if the input to $\tV_{\bar{i}_1},\tV_{\bar{i}_2},$ and $\tV_{\bar{i}_3}$ are all equal to $\enc(\psi)$ then   $\Phi_i$  evaluates to 1 and thus from the property of the assignment tester, there is some assignment to the variables in $\tA_i$, say $\phi_i$, such that all constraints of $\Pi'_i$ are satisfied.  

Then, we define $\psi^{\enc}$ as follows:
\begin{align*}
\psi^{\enc}(v^*) &= 4, \\
\psi^{\enc}|_{\tV_i} &= \enc(\psi) &\forall i \in [4], \\
\psi^{\enc}|_{\tA_i} &= \phi_{i} &\forall i \in [4].
\end{align*}
 Then, let $\tpsi_s = (\psi_s)^{\enc}$ and $\tpsi_t = (\psi_t)^{\enc}$ and it can be verified that both are satisfying assignments to $\tPi$. 
\end{description}

This construction is inspired by similar ideas appearing in literature in papers concerning hardness and lower bounds of fixed point computation (for example see \cite[{Section 4.1}]{rubinstein2018inapproximability}).

\begin{figure}[!ht]
    \centering
\resizebox{\textwidth}{!}{
\begin{tikzpicture}\huge
\filldraw [fill=pink!50!white,thick] plot [smooth cycle] coordinates {(1.25,0.15) (0.75,0.9) (-0.9,0.7) (-1.25,0.15) (-0.75,-0.9) (0.9,-0.7)};
\filldraw (0,0) circle (3pt);
\filldraw (0.5,0.7) circle (3pt);
\filldraw (-0.5,-0.7) circle (3pt);
\filldraw (1,0) circle (3pt);
\filldraw (-1,0) circle (3pt);
\filldraw (0.7,-0.5) circle (3pt);
\filldraw (-0.7,0.5) circle (3pt);
\node at (1.5,1)  {$\tA_3$}; 

\filldraw [fill=pink!50!white,thick,rotate=90,yshift = -400] plot [smooth cycle] coordinates {(1.25,0.15) (0.75,0.9) (-0.9,0.7) (-1.25,0.15) (-0.75,-0.9) (0.9,-0.7)};
\filldraw[rotate=90,yshift = -400] (0,0) circle (3pt);
\filldraw[rotate=90,yshift = -400] (0.5,0.7) circle (3pt);
\filldraw[rotate=90,yshift = -400] (-0.5,-0.7) circle (3pt);
\filldraw[rotate=90,yshift = -400] (1,0) circle (3pt);
\filldraw[rotate=90,yshift = -400] (-1,0) circle (3pt);
\filldraw[rotate=90,yshift = -400] (0.7,-0.5) circle (3pt);
\filldraw[rotate=90,yshift = -400] (-0.7,0.5) circle (3pt);
\node[yshift = 340] at (1.35,1)  {$\tA_1$}; 

\filldraw [fill=pink!50!white,thick,rotate=-90,xshift = -400] plot [smooth cycle] coordinates {(1.25,0.15) (0.75,0.9) (-0.9,0.7) (-1.25,0.15) (-0.75,-0.9) (0.9,-0.7)};
\filldraw[rotate=-90,xshift = -400] (0,0) circle (3pt);
\filldraw[rotate=-90,xshift = -400] (0.5,0.7) circle (3pt);
\filldraw[rotate=-90,xshift = -400] (-0.5,-0.7) circle (3pt);
\filldraw[rotate=-90,xshift = -400] (1,0) circle (3pt);
\filldraw[rotate=-90,xshift = -400] (-1,0) circle (3pt);
\filldraw[rotate=-90,xshift = -400] (0.7,-0.5) circle (3pt);
\filldraw[rotate=-90,xshift = -400] (-0.7,0.5) circle (3pt);
\node[xshift = 320] at (1.35,1)  {$\tA_4$};

\filldraw [fill=pink!50!white,thick,rotate=180,xshift = -400,yshift = -400] plot [smooth cycle] coordinates {(1.25,0.15) (0.75,0.9) (-0.9,0.7) (-1.25,0.15) (-0.75,-0.9) (0.9,-0.7)};
\filldraw[rotate=180,xshift = -400,yshift = -400] (0,0) circle (3pt);
\filldraw[rotate=180,xshift = -400,yshift = -400] (0.5,0.7) circle (3pt);
\filldraw[rotate=180,xshift = -400,yshift = -400] (-0.5,-0.7) circle (3pt);
\filldraw[rotate=180,xshift = -400,yshift = -400] (1,0) circle (3pt);
\filldraw[rotate=180,xshift = -400,yshift = -400] (-1,0) circle (3pt);
\filldraw[rotate=180,xshift = -400,yshift = -400] (0.7,-0.5) circle (3pt);
\filldraw[rotate=180,xshift = -400,yshift = -400] (-0.7,0.5) circle (3pt);
\node[,yshift=420,xshift = 320] at (1.25,-1.5)  {$\tA_2$};

\filldraw [xshift = 120,yshift = 120,fill=blue!30!white,thick] plot [smooth cycle] coordinates {(0.15,1.25) (0.75,0.9) (0.9,-0.7) (0.15,-1.25) (-0.75,-0.9) (-0.9,0.7)};
\filldraw[xshift = 120,yshift = 120] (0,0) circle (3pt);
\filldraw[xshift = 120,yshift = 120] (0.5,0.7) circle (3pt);
\filldraw[xshift = 120,yshift = 120] (-0.5,-0.7) circle (3pt);
\filldraw[xshift = 120,yshift = 120] (0,-1) circle (3pt);
\filldraw[xshift = 120,yshift = 120] (0,1) circle (3pt);
\filldraw[xshift = 120,yshift = 120] (0,-0.5) circle (3pt);
\filldraw[xshift = 120,yshift = 120] (0,0.5) circle (3pt);
\filldraw[xshift = 120,yshift = 120] (0.5,0) circle (3pt);
\filldraw[xshift = 120,yshift = 120] (-0.5,0) circle (3pt);
\filldraw[xshift = 120,yshift = 120] (-0.7,0.5) circle (3pt);
\filldraw[xshift = 120,yshift = 120] (0.7,-0.5) circle (3pt);
\node[yshift=120,xshift = 120] at (0,-1.75)  {$\tV_3$};

\filldraw [xshift = 280,yshift = 120,rotate=80,fill=blue!30!white,thick] plot [smooth cycle] coordinates {(0.15,1.25) (0.75,0.9) (0.9,-0.7) (0.15,-1.25) (-0.75,-0.9) (-0.9,0.7)};
\filldraw[xshift = 280,yshift = 120,rotate=80] (0,0) circle (3pt);
\filldraw[xshift = 280,yshift = 120,rotate=80] (0.5,0.7) circle (3pt);
\filldraw[xshift = 280,yshift = 120,rotate=80] (-0.5,-0.7) circle (3pt);
\filldraw[xshift = 280,yshift = 120,rotate=80] (0,-1) circle (3pt);
\filldraw[xshift = 280,yshift = 120,rotate=80] (0,1) circle (3pt);
\filldraw[xshift = 280,yshift = 120,rotate=80] (0,-0.5) circle (3pt);
\filldraw[xshift = 280,yshift = 120,rotate=80] (0,0.5) circle (3pt);
\filldraw[xshift = 280,yshift = 120,rotate=80] (0.5,0) circle (3pt);
\filldraw[xshift = 280,yshift = 120,rotate=80] (-0.5,0) circle (3pt);
\filldraw[xshift = 280,yshift = 120,rotate=80] (-0.7,0.5) circle (3pt);
\filldraw[xshift = 280,yshift = 120,rotate=80] (0.7,-0.5) circle (3pt);
\node[yshift=280,xshift = 120] at (0,1.5)  {$\tV_1$};

\filldraw [xshift = 120,yshift = 280,rotate=280,fill=blue!30!white,thick] plot [smooth cycle] coordinates {(0.15,1.25) (0.75,0.9) (0.9,-0.7) (0.15,-1.25) (-0.75,-0.9) (-0.9,0.7)};
\filldraw[xshift = 120,yshift = 280,rotate=280] (0,0) circle (3pt);
\filldraw[xshift = 120,yshift = 280,rotate=280] (0.5,0.7) circle (3pt);
\filldraw[xshift = 120,yshift = 280,rotate=280] (-0.5,-0.7) circle (3pt);
\filldraw[xshift = 120,yshift = 280,rotate=280] (0,-1) circle (3pt);
\filldraw[xshift = 120,yshift = 280,rotate=280] (0,1) circle (3pt);
\filldraw[xshift = 120,yshift = 280,rotate=280] (0,-0.5) circle (3pt);
\filldraw[xshift = 120,yshift = 280,rotate=280] (0,0.5) circle (3pt);
\filldraw[xshift = 120,yshift = 280,rotate=280] (0.5,0) circle (3pt);
\filldraw[xshift = 120,yshift = 280,rotate=280] (-0.5,0) circle (3pt);
\filldraw[xshift = 120,yshift = 280,rotate=280] (-0.7,0.5) circle (3pt);
\filldraw[xshift = 120,yshift = 280,rotate=280] (0.7,-0.5) circle (3pt);
\node[,yshift=120,xshift = 280] at (0,-1.5)  {$\tV_4$};

\filldraw [xshift = 280,yshift = 280,rotate=2,fill=blue!30!white,thick] plot [smooth cycle] coordinates {(0.15,1.25) (0.75,0.9) (0.9,-0.7) (0.15,-1.25) (-0.75,-0.9) (-0.9,0.7)};
\filldraw[xshift = 280,yshift = 280,rotate=2] (0,0) circle (3pt);
\filldraw[xshift = 280,yshift = 280,rotate=2] (0.5,0.7) circle (3pt);
\filldraw[xshift = 280,yshift = 280,rotate=2] (-0.5,-0.7) circle (3pt);
\filldraw[xshift = 280,yshift = 280,rotate=2] (0,-1) circle (3pt);
\filldraw[xshift = 280,yshift = 280,rotate=2] (0,1) circle (3pt);
\filldraw[xshift = 280,yshift = 280,rotate=2] (0,-0.5) circle (3pt);
\filldraw[xshift = 280,yshift = 280,rotate=2] (0,0.5) circle (3pt);
\filldraw[xshift = 280,yshift = 280,rotate=2] (0.5,0) circle (3pt);
\filldraw[xshift = 280,yshift = 280,rotate=2] (-0.5,0) circle (3pt);
\filldraw[xshift = 280,yshift = 280,rotate=2] (-0.7,0.5) circle (3pt);
\filldraw[xshift = 280,yshift = 280,rotate=2] (0.7,-0.5) circle (3pt);
\node[,yshift=280,xshift = 280] at (0,1.75)  {$\tV_2$};

\filldraw[xshift = 200,yshift = 200,red] (0,0) circle (3pt);
\node[,yshift=200,xshift = 200] at (0.2,-0.3)  {$v^*$};

\draw[line width=3mm,draw=blue!50!black,opacity=0.3] plot [smooth cycle] coordinates {(-1.25,-1.25) (-1.35,1.4) (2.5,10.7) (5.85,10.5) (5.6,5)   (10.75,5.85) (10.51,2.5)(1.25,-1.15)};

\draw[line width=3mm,draw=brown,rotate=90,yshift=-400,opacity=0.3] plot [smooth cycle] coordinates {(-1.25,-1.25) (-1.35,1.4) (2.5,10.7) (5.85,10.5) (5.6,5)   (10.75,5.85) (10.51,2.5)(1.25,-1.15)};

\draw[line width=3mm,draw=green!50!black,rotate=-90,xshift=-400,opacity=0.3] plot [smooth cycle] coordinates {(-1.25,-1.25) (-1.35,1.4) (2.5,10.7) (5.85,10.5) (5.6,5)   (10.75,5.85) (10.51,2.5)(1.25,-1.15)};

\draw[line width=3mm,draw=orange,rotate=180,yshift=-400,xshift=-400,opacity=0.3] plot [smooth cycle] coordinates {(-1.25,-1.25) (-1.35,1.4) (2.5,10.7) (5.85,10.5) (5.6,5)   (10.75,5.85) (10.51,2.5)(1.25,-1.15)};

\draw[line width=1mm,->] (-1,3) -- (-2,3);
\node at (-3.8,3.5) {Constraints};
\node at (-3.8,2.5) {of $E'_3$};

\draw[line width=1mm,->] (14.8,3) -- (15.8,3);
\node at (17.65,3.5) {Constraints};
\node at (17.65,2.5) {of $E'_4$};

\draw[line width=1mm,->] (-1,11.8) -- (-2,11.8);
\node at (-3.8,12.3) {Constraints};
\node at (-3.8,11.3) {of $E'_1$};

\draw[line width=1mm,->] (14.8,11.8) -- (15.8,11.8);
\node at (17.65,12.3) {Constraints};
\node at (17.65,11.3) {of $E'_2$};

\end{tikzpicture}}
    \caption{Reducing   $\gapmaxmintwocsp{1}{1}{q}$ instance $\Pi = (G = (V, E), \Sigma, \{C_e\}_{e \in E})$   to a $\gapmaxminthreecsp{1}{1-\eps}{q_0}$ instance $\tPi = (\tG = (\tV, \tE), \tSigma, \{\tC_e\}_{e \in \tE})$.}
    \label{fig}
\end{figure}

To show the completeness case, we prove that for any two satisfying assignment $\psi,\psi'$ of $\Pi$ we can construct a reconfiguration sequence starting at $\psi^{\enc}$ and ending at $(\psi')^{\enc}$ whose value is 1. 

In order to prove the soundness case, given a reconfiguration sequence for $\tPi$ of value $1-\varepsilon$, we construct a reconfiguration sequence for $\Pi$, essentially by a majority decoding argument. Note that every assignment to $\tPi$ has 4 potential assignments (not all distinct) to each variable in $V$ (i.e., $\tV_1,\ldots ,\tV_4$ each have an assignment to $V$). For every fixing of $i\in [4]$, when we look at a typical constraint in $E'_i$, then for each $v\in V$, we prove that there must be a clear majority assignment for $v$ given by the assignments to $\tV_{\bar{i}_1},\tV_{\bar{i}_2},$ and $\tV_{\bar{i}_3}$ (this is also why we needed \emph{four} copies of $\tV_i$s and not fewer). This along with few other claims finishes the analysis.

{
\paragraph{Parallelization.}
Now we discuss our tweak on top of the above analysis to prove \Cref{thm:main_grw}. For convenience, we will use the proof-verifier perspective of the assignment tester (see \Cref{rmk:pcpp_csp_game_grw} for details).

First of all, we note that there is no barrier in replacing the $2\text{-}\CSP_{q_0}$ instance (via a $2$-query assignment tester) with a $k\text{-}\CSP_{q_0}$ instance (via a $k$-query assignment tester).
The only change to the above proof is the final instance $\tilde\Pi$ has arity $k+1$ instead of $3=2+1$. For generality, we work with the general query complexity $k$.

The key observation is that the four $k\text{-}\CSP_{q_0}$ instances $\Pi_1',\Pi_2',\Pi_3',\Pi_4'$ in the construction have the same structure. In other words, they are the same constraints, albeit applied on different variables.
Specifically, for each randomness $r$ of the verifier, there is a subset of indices $I_r$, such that for each $i \in [4]$, the verifier $\Pi_i'$ queries the subset $I_r$ on the assignment of $V_i'=\tilde V_{\bar i_1}\uplus\tilde V_{\bar i_2}\uplus\tilde V_{\bar i_3}\uplus\tilde A_i$.

Given this observation, we ``parallelize'' the four instances in a natural way. 
We stack $\tilde V_1,\tilde V_2,\tilde V_3,\tilde V_4$ (and $\tilde A_1,\tilde A_2,\tilde A_3,\tilde A_4$) into four layers as $\tilde V$ (and $\tilde A$), with a quartic blowup on the alphabet size.
By querying a subset $I$ of assignments in $\tilde V,\tilde A$, we get the same subset of assignments in $\tilde V_j,\tilde A_j$ for every $j\in [4]$. 
After that, we perform the test in $\Pi_i'$, which involves $\tilde V_{\bar i_1},\tilde V_{\bar i_2},\tilde V_{\bar i_3},\tilde A_i$. 
Suppose our assignment tester has soundness $1-\gamma$ (i.e., rejection probability $\gamma$), applying this trick results in the PSPACE-hardness of $\gapmaxminvarcsp{1}{1-\gamma}{(k+1)}{q^4}$ over the alphabet $(\tilde\Sigma)^4$, where the one extra arity comes from $v^*$ and the quartic blowup comes from parallelizing four verifiers.

This idea of parallelization is inspired by recent works in parameterized hardness of approximation \cite{LRSW23,GLRSW24,guruswami2024almost}. In the world of parameterized complexity, we often encounter problems involving few variables but large alphabet size. 
The large alphabet size prevents direct applications of standard tools from the NP world, which typically requires a constant-sized alphabet. However, sometimes the problem is structured in a way that we can associate $[n]$ as $\Sigma^{\log _{|\Sigma|} n}$ for a constant-sized $\Sigma$, and view the constraints as acting in parallel across the $\log_{|\Sigma|} n$ layers of the variables. This allows us to apply known techniques, such as error correcting codes and PCPPs, in parallel to each layer, and verify the constraints across all layers simultaneously.}

\begin{remark}\label{rem:HO24}
    In \cite{hirahara2023probabilistically},  Hirahara and Ohsaka also prove \RIH, but their starting point is the Succinct Graph Reachability problem (that  is shown to be \PSPACE-complete), where given as input an efficiently computable  circuit $S:\{0,1\}^n\to \{0,1\}^n$ which is interpreted as providing the neighbor of a graph whose vertex set is $\{0,1\}^n$, the goal is to determine if vertex $\vec{1}$ can be reached from vertex $\vec{0}$. They interpret \RIH, as designing a probabilistic verifier for \PSPACE\ who makes constant queries, and then using a non-trivial combination of efficient \PCPP\ and \emph{Locally Testable Codes} (\LTC), they construct  such a verifier for the Succinct Graph Reachability problem. 
    Although our reduction and that in \cite{hirahara2023probabilistically} use a similar set of tools, there are a number of differences between the two. Chief among them is perhaps how the variables ``in transition'' are handled. Hirahara and Ohsaka create an additional symbol $\perp$ to denote the ``in transition'' state which requires modifications to the verifier and  subtle changes in the analysis; e.g. they need the \PCPP\ to be smooth~\cite{Par21}.
    Meanwhile, as explained above, we handle the ``in transition'' state by having multiple copies of the variables and an extra variable ($v^*$) to denote which copy is currently ``in transition''; this allows us to use any standard \PCPP\ without extra property.
\end{remark}

\subsubsection{Proof Overview of the Other Results}

\paragraph{Approximation Algorithm for GapMaxMin-2-\CSP$_q$\ (\Cref{thm:approx-algo-2csp}).} 
The idea of the approximation algorithm here is quite simple: We are going to change from the initial assignment $\psi_s$ to the final assignment $\psi_t$ by directly   changing the assignment sequentially to each of the  variables, one at a time. In other words, we define 
\begin{align*}
\psi_i(v) = 
\begin{cases}
\psi_s(v) &\text{ if } v \in S_i, \\
\psi_t(v) &\text{ otherwise,}
\end{cases}
\end{align*}
where $V = S_0 \subsetneq S_1 \subsetneq \cdots \subsetneq S_n = \emptyset$.
Notice that we never violate the edges inside $S_i$ or inside $V \setminus S_i$ at all. Thus, we may only violate the edges across the cut, i.e., with one endpoint in  $S_i$ and the other endpoint not in $S_i$, which we denote by $E[S_i, V \setminus S_i]$. Our main structural result is the following:

\begin{theorem}[Informal version of \Cref{thm:sequence-balanced-full}]
For any graph $G = (V, E)$ on $m$ edges, there exists an efficiently computable  downward sequence $V = S_0 \supsetneq \dots \supsetneq S_n = \emptyset$ such that for all $i\in[n]$, the number of edges between $S_i$ and $V\setminus S_i$ is at most  $ m\cdot \left(\frac{1}{2}+o(1)\right)$. 
\end{theorem}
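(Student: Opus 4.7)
The strategy is to construct the sequence by a random ordering argument. Let $\pi$ be a uniformly random permutation of $V$, and set $S_i := \{v \in V : \pi(v) > i\}$ so that $V = S_0 \supsetneq S_1 \supsetneq \cdots \supsetneq S_n = \emptyset$. For each edge $e = (u,v) \in E$ and each fixed $i$, the probability that exactly one endpoint of $e$ lies in $\{v : \pi(v) \leq i\}$ equals $\tfrac{2i(n-i)}{n(n-1)}$, which is at most $\tfrac{1}{2} + O(1/n)$. Linearity of expectation then gives $\mathbb{E}\bigl[|E[S_i, V \setminus S_i]|\bigr] \leq \tfrac{m}{2} + O(m/n)$. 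The core of the proof is upgrading this per-$i$ bound on the expectation to a high-probability bound that holds \emph{uniformly} over all $i$.

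For the concentration step, I would compute the second moment $\mathrm{Var}(|E[S_i, V\setminus S_i]|)$ by summing pairwise covariances of the edge-cut indicators. By the exchangeability of a uniform permutation, two disjoint edges contribute only $O(1/n)$ to the covariance, while two edges sharing a vertex contribute $O(1)$. Summing gives $\mathrm{Var} = O\!\left(m + m\Delta + m^2/n\right)$ where $\Delta$ is the maximum degree. Chebyshev's inequality combined with a union bound over the $n$ choices of $i$ then produces, with positive probability, an ordering satisfying $\max_i |E[S_i, V\setminus S_i]| \leq m/2 + o(m)$, at least when $\Delta$ is well controlled relative to $m$ and $n$.

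The main obstacle is the case where a small number of vertices have very large degree, since such vertices can inflate both the variance and the individual-step fluctuations. I would address this by handling those vertices separately: let $H := \{v : \deg(v) > \sqrt{m}\}$, so that $|H| \leq 2\sqrt{m}$, and commit to removing all of $H$ in a carefully chosen order at the very beginning (or very end) of the sequence. This contributes only $O(\sqrt{m} \cdot \Delta) = O(\sqrt{m} \cdot m)$ in aggregate to cuts across a few steps, but each such step incurs at most $m$ (trivially) and the \emph{positions} of these high-degree vertices can be arranged so no individual cut exceeds $m/2 + o(m)$; after removing $H$, the remaining graph has $\Delta \leq \sqrt{m}$ and the variance bound above is cleanly $o(m^2)$.

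Finally, to make the construction \emph{efficient} (i.e., deterministic polynomial time), I would derandomize via the method of conditional expectations. Using the second-moment computation above as a pessimistic estimator for $\Pr[\exists i : \text{cut}_i > m/2 + t]$, I would greedily decide, at each step $i$, which remaining vertex to place at position $i$ so as to keep the pessimistic estimator below $1$. Since the initial estimator is less than $1$ by the probabilistic argument, this greedy procedure succeeds and outputs an ordering with $\max_i |E[S_i, V\setminus S_i]| \leq m/2 + o(m)$, as required.
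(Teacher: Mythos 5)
Your approach is genuinely different from the paper's, but it has two gaps, one of which is serious. The paper does \emph{not} start from a random ordering; it picks a random \emph{bipartition} $V = V_1 \uplus V_2$ (using Chebyshev just once to control the cut and the two degree sums), greedily extends this single partition outward in both directions to build the nested sequence, and handles high-degree vertices by \emph{interleaving} their removal with that of the low-degree vertices (a high-degree vertex $v$ is removed exactly when at most half of its low-degree neighbours remain in $S$). Your plan is to order all of $V$ at once by a random permutation and concentrate every cut simultaneously.

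\textbf{Gap 1: the union bound does not close.} With your variance bound $\Var(|E[S_i,V\setminus S_i]|) = O(m + m\Delta + m^2/n)$, Chebyshev plus a union bound over all $n$ values of $i$ requires the deviation $t$ to satisfy $n\,\Var/t^2 < 1$, i.e.\ $t \gtrsim \sqrt{n\Var} \ge \sqrt{m^2} = m$. So the conclusion you get is $\max_i \mathrm{cut}_i \le m/2 + \Omega(m)$, which is vacuous. To salvage this you would need to exploit that $|\mathrm{cut}_i - \mathrm{cut}_{i-1}| \le \Delta$ and only union-bound over $\approx n/k$ evenly spaced indices (paying $k\Delta$ in between), which you never mention; and even then, the budget is tight enough that the threshold $\Delta = \sqrt m$ does not leave slack with the stated variance bound.

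\textbf{Gap 2 (fatal as written): removing $H$ at the start or end fails.} If $G$ is complete bipartite $K_{a,b}$ with $a \approx b \approx \sqrt m$ (so that, say, one side $A$ has every vertex of degree above your threshold), then $H \supseteq A$ and committing to remove all of $H$ first means that at step $|H|$ you have $S_{|H|} = V \setminus H$ and $|E[S_{|H|}, V\setminus S_{|H|}]| = |E[B,A]| = m$. No ordering \emph{within} $H$ can avoid this, since the final cut after $H$ is gone is forced. Your sentence ``the positions of these high-degree vertices can be arranged so no individual cut exceeds $m/2 + o(m)$'' is exactly the nontrivial claim that needs a proof, and the start/end strategy makes it false. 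This is precisely the difficulty the paper's ``high-degree correction'' step is engineered to solve: each high-degree vertex is removed at the moment the majority of its low-degree neighbours have already left, which keeps its contribution to the current cut below half of its degree at all times.
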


The above result immediately yields the desired approximation guarantee.  
The overall strategy to prove the above theorem is in fact the same as that of~\cite{Ohsaka-label-cover}; the difference is that we derive a stronger bound $(1/2 + o(1))|E|$ instead of $(1/4 + o(1))|E|$ as in that work. To gain the intuition to our improved bound, notice that, if we pick each $S_i$ at random, at most $\frac{1}{2}|E|$ belong to $E[S_i, V \setminus S_i]$ \emph{in expectation for each $i \in [n]$}. Now, if we were able to achieve a high probability statement (with perhaps a slightly weaker bound), then we could try to use the union bound over all $i \in [n]$ to derive our desired lemma. This is roughly our strategy when the max-degree of the graph is small. However, if some vertices in the graph have large degrees, the standard deviation of $|E[S_i, V \setminus S_i]|$ is so large that one cannot hope for a high probability bound. This brings us to our final approach: we use a simple probabilistic argument on just the low-degree vertices to obtain the sequence of sets. Then, we use these low-degree vertices to ``vote on'' when to remove each high-degree vertex, i.e. removing it from $S_i$ only when at most half of its low-degree neighbor belongs to $S_i$.

\paragraph{\NP-Hardness for GapMaxMin-2-\CSP$_q$.}
Again, our hardness reduction approach is not too different from the previous work of Ohsaka~\cite{ohsaka2023gap}. Namely, we start from a Gap-2-\CSP$_q$\  instance $\tPi = (\tG = (\tV, \tE), \tSigma, \{\tC_e\}_{e \in \tE})$ that is \NP-hard to approximate (with very large inapproximability gap). Then, we introduce two additional special characters $(\sigma^*, 0)$ and $(\sigma^*, 1)$ to the alphabet set; each constraint is then extended such that, if the two special characters appear together, then it is unsatisfied. Otherwise, if only one of the two occurs, then it is satisfied. The starting assignment is then set to every variable being assigned $(\sigma^*, 0)$, while the ending assignment is then set such that every variable is assigned $(\sigma^*, 1)$.

There is a clear barrier to obtaining a $(1/2 + o(1))$-factor hardness of approximation using this reduction: If the constraint graph $G$ contains a bisection with $o(|E|)$ edges across, then we can simply run the aforementioned approximation algorithm  on each side of the bisection. This will yield a $(3/4 - o(1))$-approximation. To overcome such an issue, we use a (folklore) result in \NP-hardness of approximation literature that Gap-2-\CSP$_q$\ remains hard to approximate even when $G$ has very good expansion properties. With this, we arrive at the desired result.

\paragraph{\NP-Hardness for GapMinMax-2-\CSP$_q$.}
We use the same reduction as above for proving the \NP-hardness of GapMinMax-2-\CSP$_q$. The completeness of the reduction proceeds in a similar way. The main difference is in the soundness analysis. Roughly speaking, we argue that, in the reconfiguration sequence, we can find a multi-assignment such that, for each vertex, it is assigned either (i) more than one character or (ii) a single character from the original alphabet $\tSigma$. When restricting to case (ii), this gives us a partial assignment that does not violate any constraint of $\tPi$. By starting with known \NP-hardness of approximation results for Clique, we know that there can only little order of variables/vertices involved  in such a case. Thus, the size of the multi-assignment must be at least $(2 - o(1))|V|$ as desired (where $V$ is the set of variables of the GapMinMax-2-\CSP$_q$ instance).

Finally, the Set Cover Reconfiguration hardness follows immediately from applying the ``hypercube gadget'' reduction of Feige~\cite{Feige98}.
\section{Preliminaries}

\paragraph{Notations.} We use the set theoretic notation of $\uplus$ to mean the disjoint union of two sets. For a graph $G = (V, E)$ and any subset $S \subseteq V$, we use $E[S]$ to denote the set of edges whose both endpoints belong to $S$. Meanwhile, for disjoint $S_1, S_2 \subseteq V$, we use $E[S_1, S_2]$ to denote the set of edges whose one endpoint belongs to $S_1$ and the other belongs to $S_2$. Also, we denote by $\deg_G(v)$, the degree of vertex $v\in V$ in the graph $G$.

For any set $S$, let $\cP(S)$ denote the power set of $S$, i.e. the collection of all subsets of $S$. Furthermore, for two sets $S_1, S_2$, we write $S_1 \Delta S_2$ to denote its symmetric difference, i.e. $S_1 \Delta S_2 = (S_1 \setminus S_2) \cup (S_2 \setminus S_1)$.

Let $\Sigma$ be any non-empty set. For every  $d\in\mathbb{N}$ and every pair of strings $x,y\in\Sigma^d$, we denote their Hamming distance by $\|x-y\|_0$ (or equivalently $\|y-x\|_0$) which is defined as:
$$
\|x-y\|_0=\|y-x\|_0=|\{i\in [d]: x_i\neq y_i\}|,
$$
where $x_i$ denotes the character in the $i^{\text{th}}$ position of $x$. 

\subsection{Constraint Satisfaction Problems}\label{sec:prelimCSP}

In this subsection, we define the variants of Constraint Satisfaction Problems (\CSP) relevant to this paper. 

\paragraph{$k$-\CSP.} A $k$-\CSP$_q$ instance $\Pi = (G = (V, E), \Sigma, \{C_e\}_{e \in E})$ consists of:
\begin{itemize}
\item A $k$-uniform hypergraph $G = (V, E)$ called \emph{constraint graph},
\item Alphabet set $\Sigma$ of size at most $q$,
\item For every hyperedge $e = (u_1, \dots, u_k) \in V$, a constraint $C_e: \Sigma^k \to \{0, 1\}$.
\end{itemize}
An \emph{assignment} $\psi$ is a function from $V$ to $\Sigma$. The \emph{value} of $\psi$ is $$\val_{\Pi}(\psi) := \underset{e = (u_1, \dots, u_k) \sim E}{\E}\left[C_e(\psi(u_1), \dots, \psi(u_k))\right].$$ The value of the instance is $\val(\Pi) := \underset{\psi}{\max}\ \val_{\Pi}(\psi)$.

Given two assignments $\psi$ and $\psi'$, we denote their distance by $\|\psi-\psi'\|_0$ (or equivalently $\|\psi'-\psi\|_0$) and is defined as follows:
$$
\|\psi-\psi'\|_0=\|\psi'-\psi\|_0=\lvert \{v\in V: \psi(v)\neq \psi'(v)\}\rvert.
$$

\paragraph{MaxMin $k$-\CSP.}
A \emph{reconfiguration assignment sequence} $\bpsi$ is a sequence $\psi_0, \dots, \psi_p$ of assignments such that $\|\psi_{i - 1} - \psi_i\|_0 = 1$ for all $i \in [p]$. For two assignments $\psi_s$ and $\psi_t$, we write $\bPsi(\psi_s \lrsg \psi_t)$ to denote the set of all reconfiguration assignment sequences starting from $\psi_s$ and ending at $\psi_t$. 

For two assignments $\psi_s$ and $\psi_t$, we say that a sequence is a \emph{direct} reconfiguration assignment sequence from $\psi_s$ to $\psi_t$ if it is a sequence $\bpsi \in \bPsi(\psi_s \lrsg \psi_t)$ such that for every $\psi \in \bpsi$ and every $v \in V$, we have $\psi(v) \in \{\psi_s(v), \psi_t(v)\}$.

For a reconfiguration assignment sequence $\bpsi$, we let $\val_\Pi(\bpsi) = \underset{\psi \in \bpsi}{\min}\ \val_\Pi(\psi)$. Finally, let $\val_\Pi(\psi_s \lrsg \psi_t) = \underset{\bpsi \in \bPsi(\psi_s \lrsg \psi_t)}{\max}\ \val_\Pi(\bpsi)$.

\paragraph{MinLabel 2-\CSP.} A \emph{multi-assignment} is a function $\psi: V \to \cP(\Sigma)$. A multi-assignment $\psi$ is said to satisfy a 2-\CSP\ instance $\Pi$ iff, for every $e = (u, v) \in E$, there exist $\sigma_u \in \psi(u), \sigma_v \in \psi(v)$ such that $C_e(\sigma_u, \sigma_v) = 1$.

Two multi-assignments $\psi_1, \psi_2$ are \emph{neighbors} iff $\underset{v \in V}{\sum} |\psi_1(v) \Delta \psi_2(v)| = 1$. A \emph{reconfiguration multi-assignment sequence} $\bpsi$ is a sequence $\psi_0, \dots, \psi_p$ of multi-assignments such that $\psi_{i - 1}$ and  $\psi_i$ are neighbors for all $i \in [p]$. 
 A reconfiguration multi-assignment sequence $\bpsi = (\psi_0, \dots, \psi_p)$ is said to satisfy $\Pi$ iff $\psi_i$ satisfies $\Pi$ for all $i \in \{0, \dots, p\}$. We write $\Psi^{\SAT(\Pi)}(\psi_s \lrsg \psi_t)$ to denote the set of all satisfying reconfiguration multi-assignment sequence from $\psi_s$ to $\psi_t$.

The \emph{size} of a multi-assignment $\psi$ is defined as  $|\psi| := \underset{v \in V}{\sum} |\psi(v)|$. The size of a reconfiguration multi-assignment sequence $\bpsi$ is defined as $|\bpsi| := \underset{\psi \in \bpsi}{\max}\ |\psi|$.
Finally, the \emph{min-label value} of $\Pi$ from $\psi_s \lrsg \psi_t$ is defined as: $$\minlab_\Pi(\psi_s \lrsg \psi_t) := \underset{\bpsi \in \Psi^{\SAT(\Pi)}(\psi_s \lrsg \psi_t)}{\min} |\bpsi|.$$

\paragraph{Partial Assignments.} We will also use the concept of partial assignments. A partial assignment is defined as $\psi: V \to \Sigma \cup \{\perp\}$ (where $\perp$ can be thought of ``unassigned''). Its size $|\psi|$ is defined as $|\{v\in V \mid \psi(v) \ne \perp\}|$. We say that a partial assignment satisfies $\Pi$ iff, for all $e = (u, v) \in E$ such that $\psi(u)\ne \perp$ and $\psi(v) \ne \perp$, we have $C_e(\psi(u), \psi(v)) = 1$.

We define $\maxpar(\Pi) = \max |\psi|$ where the maximum is over all satisfying partial assignments of $\Pi$.

\paragraph{Gap Problems.} For the purpose of reductions, it will be helpful to work with (promise) gap problems. For any $0 \leq s \leq c \leq 1$, we define the gap problems as follows:
\begin{itemize}
\item In the $\gapcsp{c}{s}{q}$ problem, we are given as input a 2-\CSP$_q$ instance $\Pi$. The goal is to decide if $\val(\Pi) \geq c$ or $\val(\Pi) < s$.
\item In the $\gapmaxmincsp{c}{s}{q}$ problem, we are given as input a $k$-\CSP$_q$ instance together with two assignments $\psi_s$ and  $\psi_t$. The goal is to decide if $\val_\Pi(\psi_s \lrsg \psi_t) \geq c$ or  $\val_\Pi(\psi_s \lrsg \psi_t) < s$.
\item In the $\gapminmaxcsp{1}{1/s}{q}$ problem, we are given as input a 2-\CSP$_q$ instance $\Pi = (G=(V, E), \Sigma, \{C_e\}_{e \in E})$ together with two assignments $\psi_s$ and $\psi_t$. The goal is to decide if\footnote{Note that even when $\val_{\Pi}(\psi_s \lrsg \psi_t) = 1$, we still have $\minlab(\psi_s \lrsg \psi_t) = |V| + 1$.} $\minlab_\Pi(\psi_s \lrsg \psi_t) \leq |V| + 1$ or $\minlab_\Pi(\psi_s \lrsg \psi_t) > (|V| + 1)/s$.
\end{itemize}

\subsection{Set Cover}\label{sec:prelimSetCover}

In the set cover reconfiguration problem, we are given as input subsets $S_1, \dots, S_m \subseteq [n]$. A \emph{reconfiguration set cover sequence} $\bT$ is a sequence $T_0, \dots, T_p$ such that every $T_i$ is a set of indices of a set cover, i.e., each $T_i$ is a subset of $[m]$ and  $\bigcup_{j \in T_i} S_j = [n]$, and moreover, for all $i \in [p]$, we have that $|T_i \Delta T_{i - 1}| = 1$. We are also given as part of the input to the  set cover reconfiguration problem, two set covers $T_s, T_t$ and the goal is  to find a reconfiguration set cover sequence $\bT$ that minimizes $\max_{T \in \bT} |T|$.

\subsection{Error Correcting Codes} \label{sec:code}

A binary \emph{error correcting code (ECC)} of message length $k$  and block length $n$ is an encoding algorithm $\enc: \{0,1\}^k \to \{0,1\}^n$. Its (absolute) distance $d_{\enc}$ is defined as $\underset{s_1 \ne s_2 \in \{0,1\}^k}{\min}\ \|\enc(s_1)- \enc(s_2)\|_0$. The relative distance $\delta_{\enc}$ is defined as $d_{\enc} / n$. Finally, the rate is defined as $k / n$.

It is well known that ECC with constant\footnote{In fact, our proof (of \Cref{thm:rih}) requires only ECCs with \emph{polynomial} rate.} rate and constant relative distance exists.

\begin{theorem}[{\cite[Theorem E.2]{G09}}] \label{thm:ecc}
There exists $\delta, r > 0$ such that, for all $k\in\mathbb{N}$, there exists an encoding $\enc: \{0, 1\}^k \to \{0, 1\}^n$ that is an ECC of relative distance at least $\delta$ and rate at least $r$. Furthermore, $\enc$ runs in polynomial time and there is a   circuit $\dec:\{0, 1\}^n\to \{0, 1\}^k\cup \{\perp\}$   of polynomial size with the following guarantee: 
$$
\forall x\in\{0,1\}^n,\ \dec(x):= 
\begin{cases}
   y&\text{ if }\enc(y)=x\\ 
   \perp&\text{ otherwise}
\end{cases}.$$
\end{theorem}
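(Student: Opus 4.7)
The plan is to instantiate $\enc$ with any standard explicit construction of an asymptotically good binary \emph{linear} code and then exploit linearity to build the circuit $\dec$. Concretely, I would invoke Justesen's code (a concatenation of a Reed--Solomon outer code with a family of Wozencraft inner codes), which yields, for every $k$, a linear encoder $\enc \colon \{0,1\}^k \to \{0,1\}^n$ with $n = O(k)$ (so rate $\geq r_0$ for an absolute constant $r_0 > 0$) and relative distance $\delta_{\enc} \geq \delta_0$ for an absolute constant $\delta_0 > 0$, all computable in $\poly(k)$ time. Any other explicit asymptotically good linear code family (e.g.\ expander codes) would serve equally well.

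Given such a linear encoder represented by a generator matrix $G \in \mathbb{F}_2^{n \times k}$ with $\enc(y) = G y$, I would construct $\dec$ as follows. Because $\enc$ has positive distance, it is injective, so $G$ has column rank $k$; at circuit-construction time I perform Gaussian elimination to select a row set $I \subseteq [n]$ of size $k$ for which the submatrix $G_I \in \mathbb{F}_2^{k \times k}$ is invertible, compute $G_I^{-1}$, and hardwire both $G_I^{-1}$ and $G$ into the circuit. On input $x \in \{0,1\}^n$, the circuit computes the candidate message $y := G_I^{-1} \cdot x|_I$, re-encodes it as $\tilde x := G y$, and outputs $y$ if $\tilde x = x$ and $\perp$ otherwise. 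Correctness is immediate: if $x = \enc(y^\star)$ for some (necessarily unique) $y^\star$, then $x|_I = G_I y^\star$, whence $G_I^{-1}\cdot x|_I = y^\star$ and the verification step succeeds; conversely, the circuit outputs a non-$\perp$ value only when $\enc(y) = x$, in which case that $y$ is indeed the correct preimage.

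Each of the three ingredients --- matrix-vector multiplication by $G_I^{-1}$, matrix-vector multiplication by $G$, and the final $n$-bit equality test --- is implementable by a Boolean circuit of size $\poly(n)$ over $\{\mathrm{AND},\mathrm{OR},\mathrm{NOT}\}$, so $\dec$ has the required polynomial size. The main ``obstacle,'' if any, is purely one of bookkeeping: pointing at an explicit asymptotically good linear code with polynomial-time encoder. Crucially, no error-correcting decoder in the usual sense is needed, since $\dec$ is only required to recognize uncorrupted codewords and recover the unique message --- a task that reduces entirely to $\mathbb{F}_2$-linear algebra together with one re-encoding.
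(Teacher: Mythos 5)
Your proposal is correct. The paper does not supply a proof of this theorem: it simply cites Goldreich~\cite{G09} (Theorem~E.2), which in turn is proved via exactly the route you describe — instantiating $\enc$ with an explicit asymptotically good linear code (Justesen, or equivalently any good expander/concatenated code) and obtaining the codeword recognizer/decoder circuit by straightforward $\mathbb{F}_2$-linear algebra (invert a full-rank $k \times k$ submatrix of the generator matrix, re-encode, and compare). The only detail worth double-checking in your write-up is the ``soundness'' direction of the decoder, and your argument handles it correctly: if the re-encoding check $\tilde x = x$ passes then $x$ lies in the column space of $G$, i.e.\ $x$ \emph{is} a codeword, so the circuit can never emit a non-$\perp$ answer on a non-codeword. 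The circuit size $O(nk + n) = \poly(n)$ and the polynomial-time encoder are both immediate, so the construction meets every requirement of the statement; nothing is missing.
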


\subsection{Assignment Testers a.k.a.\ Probabilistically Checkable Proofs of Proximity}\label{sec:pcpp}

\emph{Assignment testers} are the main technical tool used in our proof of \RIH. We remark that assignment testers are equivalent to   \emph{Probabilistically Checkable Proofs of Proximity}  (\PCPP) \cite{BGHSV06}, but we use the term assignment tester here to be consistent with \cite{Dinur07}, whose result we use.

\begin{definition}[Assignment Tester {(\PCPP)}~\cite{DinurR06}]\label{def:asgn}
An \emph{assignment tester} {(\PCPP)} with alphabet set $\Sigma$ (where $q:=|\Sigma|$), {proximity parameter $\kappa$,} and rejection probability $\gamma$ is an algorithm $\cP$ whose input is a Boolean circuit $\Phi$ with input variable set $X$, and whose output is a 2-\CSP$_q$ instance $\Pi = (G = (V, E), \Sigma, \{C_e\}_{e \in E})$ where $V = X \uplus A$ (for some non-empty set $A$) such that the following holds for all assignments $\psi: X \to \{0, 1\}$:
\begin{itemize}
\item (Completeness) If $\psi$ is a satisfying assignment to $\Phi$ (i.e., $\Phi$ on input $\psi$ evaluates to 1), then there exists $\psi^*_A: A \to \Sigma$ such that $\val_{\Pi}((\psi , \psi^*_A)) = 1$. 
\item (Soundness) 
Let $\psi^*:= \underset{\substack{\psi':X\to\{0,1\}\\ \psi' \text{ satisfies }\Phi}}{\argmin}\|\psi-\psi'\|_0$. Then, for every $\psi_A: A \to \Sigma$ we have that 
{$\val_{\Pi}((\psi, \psi_A))<1-\gamma$ whenever $\|\psi-\psi^*\|_0\ge\kappa\cdot|X|$.}
\end{itemize}
\end{definition}

We will use the below construction of assignment testers. 

\begin{theorem}[{\cite[Corollary 9.3]{Dinur07}}] \label{thm:din-pcpp}
{For any constant $\kappa_0>0$, there are some constants $q_0\in\mathbb N,\gamma_0>0$ and} a polynomial-time assignment tester with alphabet size $q_0$, {proximity parameter $\kappa_0$,} and rejection probability $\gamma_0$. 
Furthermore, in the completeness case, there is a polynomial time algorithm $\asgnt$ which takes as input $\psi$  and outputs $\psi^*_A$ (we think of $\asgnt$ as a function that maps $\psi$  to $\psi^*_A$).
\end{theorem}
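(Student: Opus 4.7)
The plan is to follow the combinatorial gap-amplification blueprint that Dinur introduced for the \PCP\ theorem, adapted to the assignment-tester (\PCPP) setting. The high-level strategy is to start from a very weak assignment tester whose rejection probability is polynomially small, and then iteratively boost it to a universal constant while keeping the alphabet size bounded by a constant and the running time polynomial.

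First I would build a \emph{base} assignment tester. Given a Boolean circuit $\Phi$ on variables $X$, apply Tseitin's transformation to produce an equivalent 3-CNF formula over $X \uplus A_0$, where $A_0$ holds the values of the internal gates. Interpret this as a 3-CSP over $\{0,1\}$, and then convert it to a 2-CSP$_{q_1}$ instance $\Pi_0$ in the standard way, introducing one auxiliary variable per clause that holds the clause's input triple and pairing it against each of its three literals. The key observation is that for any assignment to $X$ that is $\delta$-far from every satisfying assignment of $\Phi$, at least one 3-CNF clause is falsified under every choice of $A_0$, so at least one of the $O(1)$ 2-CSP edges incident to that clause is violated. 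This gives a rejection probability of $\Omega(1/|E(\Pi_0)|) = 1/\poly(|X|)$, which is a valid but very weak assignment tester.

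Next I would amplify the gap using the three-step Dinur iteration: (i) \emph{preprocessing} to make the constraint graph constant-degree and a good expander while preserving satisfiability and losing only a constant factor in the soundness; (ii) \emph{graph powering} of radius $t$, which raises the rejection probability by a factor of roughly $\sqrt{t}$ at the cost of blowing up the alphabet to $q_1^{d^t}$; and (iii) \emph{alphabet reduction} via composition with a constant-size inner assignment tester obtained by brute-force enumeration over all Boolean circuits of constant input length. Repeating the loop $O(\log |X|)$ times drives the rejection probability to a universal constant $\gamma_0$ while keeping the alphabet at some fixed constant $q_0$ and the total blowup polynomial. At every step, the proximity guarantee from \Cref{def:asgn} is preserved because each operation respects the linear relationship between violated constraints and Hamming distance to the nearest satisfying extension.

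The main obstacle is the \emph{composition} step in alphabet reduction: one must check that the inner assignment tester, when applied to each constraint of the outer instance, yields an overall tester whose soundness degrades only by a constant factor in $\|\psi-\psi^*\|_0 / |X|$ rather than in the (much smaller) edge-violation rate. This requires the outer tester to be \emph{robust}, i.e., an unsatisfied outer constraint must typically be far from any satisfying local view; Dinur's graph-powering construction delivers exactly this robustness, which is why powering must precede composition in the iteration. Finally, to obtain the algorithm $\asgnt$ of the completeness clause, I would trace the construction forwards: given a satisfying $\psi: X \to \{0,1\}$ of $\Phi$, evaluate $\Phi$'s internal gates to fill $A_0$, set each clause-variable to the restriction of $\psi$, and at every amplification round use the inner tester's (constant-time) completeness witness to fill in the new auxiliary variables. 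Since each round is polynomial-time and there are $O(\log|X|)$ rounds, the total cost of $\asgnt$ is $\poly(|X|)$.
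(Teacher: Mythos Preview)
The paper does not prove this statement at all; it is quoted as \cite[Corollary~9.3]{Dinur07} and used as a black box throughout. Your proposal is therefore a sketch of Dinur's own argument rather than an alternative to anything in the present paper, and at that level it is accurate: the three-phase loop of expanderization, graph powering, and composition with a brute-force constant-size inner tester, iterated $O(\log|X|)$ times starting from a Tseitin-based weak tester, is exactly how Corollary~9.3 is obtained. Your identification of robustness of the outer tester as the crux of the composition step, and your forward-tracing description of $\asgnt$, are both correct.

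One small imprecision is worth flagging. Your base-tester analysis shows only that \emph{some} clause is violated whenever $\psi$ fails to satisfy $\Phi$, i.e.\ rejection $\ge 1/|E(\Pi_0)|$ independently of the distance $\delta$. This is formally enough to declare $\gamma_0 = 1/|E(\Pi_0)|$ (because $\delta\le 1$ forces $\gamma_0\delta \le 1/|E(\Pi_0)|$), but the rejection does not actually scale linearly with $\delta$ at this stage, so the sentence ``each operation respects the linear relationship between violated constraints and Hamming distance'' is overstating what you have. It does no harm: once the absolute rejection has been amplified to a constant $c$, the bound $c \ge c\delta$ holds trivially and the assignment-tester soundness of \Cref{def:asgn} is met. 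The genuine work hidden in your sketch is verifying that powering and composition amplify rejection \emph{uniformly over all $\psi$ and over the best choice of the newly introduced auxiliary variables}, which is the content of Dinur's Section~9; you are right to point to robustness as the mechanism that makes this go through.
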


{
\begin{remark}\label{rmk:pcpp_csp_game_grw}
In the statement of \Cref{thm:main_grw} and \Cref{cor:main_grw}, we use the following standard equivalent perspective of \Cref{def:asgn}.
We view the \CSP\ in \Cref{def:asgn} as a verifier checking a uniformly random constraint against an assignment $\psi$ and an auxiliary proof $\psi_A$.
Then $\val_{\Pi}((\psi, \psi_A))$ equals the probability that the verifier accepts $(\psi,\psi_A)$.
In this formulation, the arity $k$ (e.g., $k=2$ in \Cref{def:asgn}) of the \CSP\ corresponds to the query complexity of the verifier.
\end{remark}
}

\section{\PSPACE-Hardness of Approximation of Reconfiguration}

In this section we prove \RIH\ {(\Cref{thm:rih}) and its quantitative trade-off version (\Cref{thm:main_grw})}.

\begin{proof}[Proof of \Cref{thm:rih}]
It is known that $\gapmaxmintwocsp{1}{1}{q}$ is \PSPACE-hard even for $q=3$ (by putting together \cite{GopalanKMP09} and \cite[Lemma 3.4]{Ohsaka23prev}). This is the starting point of our reduction.

Given a $\gapmaxmintwocsp{1}{1}{q}$ instance $\Pi = (G = (V, E), \Sigma, \{C_e\}_{e \in E})$ with two satisfying assignments $\psi_s,\psi_t:V\to \Sigma$, let $\enc: \{0, 1\}^k \to \{0, 1\}^n$ denote the ECC as guaranteed by \Cref{thm:ecc} of relative distance $\delta$ (along with $\dec:\{0, 1\}^n\to \{0, 1\}^k\cup \{\perp\}$) and where $k = |V| \cdot  \lceil \log q \rceil $. 
Let $q_0, {\kappa_0,}\gamma_0$ be as in \Cref{thm:din-pcpp}; we assume w.l.o.g. that $q_0 \geq 4$ {and $\kappa_0<\delta/4$}. Moreover, let $\pi:\Sigma\to \{0,1\}^{\lceil \log q\rceil}$ be some canonical injective map.  Then, for every $x:=(x_1,\ldots ,x_{|V|})\in\{0,1\}^k$, where for all $i\in|V|$, we have $x_i\in\{0,1\}^{\lceil \log q\rceil}$,  if we have that $\pi^{-1}(x_i)$ exists for all $i\in|V|$,  then we denote by $\psi_{x}:V\to \Sigma$ an assignment to $\Pi$, where $\left(\psi_x(v)\right)_{v\in V}:=(\pi^{-1}(x_1),\ldots, \pi^{-1}(x_{|V|}))$.

For the rest of this proof, for every $i\in[4]$, let $\bar{i}_1, \bar{i}_2,$ and $\bar{i}_3$ denote the elements of $[4] \setminus \{i\}$. Let 
{$\eps=\gamma_0/4$.}
We reduce $\Pi$ to an instance of $\gapmaxminthreecsp{1}{1-\eps}{q_0}$, namely, $\tPi = (\tG = (\tV, \tE), \tSigma, \{\tC_e\}_{e \in \tE})$ where, $$\tV := \{v^*\} \uplus \left(\biguplus_{i \in [4]} \tV_i\right) \uplus \left(\biguplus_{i \in [4]} \tA_i\right)$$
and 
$$\tE := \biguplus_{i \in [4]} \tE_i$$ where the variables and constraints are defined as follows.
\begin{description}
\item[Vertex Set:] First, for all $i \in [4]$, let $\tV_i$ denote a set of $n$ fresh variables.  Next, for all $i \in [4]$:
\begin{itemize}
\item We define  a Boolean circuit $\Phi_i$   on variable set $\tV_{\bar{i}_1}\uplus \tV_{\bar{i}_2}\uplus \tV_{\bar{i}_3}$ by specifying exactly which assignments evaluate it to 1. An assignment $\phi_i:\tV_{\bar{i}_1}\uplus \tV_{\bar{i}_2}\uplus \tV_{\bar{i}_3}\to \{0,1\}$ makes $\Phi_i$ evaluate to 1 if and only if  all of the following holds: 
\begin{description}
    \item[Encoding of a Satisfying Assignment Check:] For all $\ell\in [3]$, we have $\dec(\phi_{i}|_{\tV_{\bar{i}_{\ell}}})\neq \perp$ where $\phi_{i}|_{\tV_{\bar{i}_{\ell}}}$ is the assignment $\phi_i$ restricted to the variables in $\tV_{\bar{i}_{\ell}}$. 
    Moreover, let $x:=\dec(\phi_{i}|_{\tV_{\bar{i}_{\ell}}})$. Then $\psi_x$ satisfies all constraints in $\Pi$.
    \item[Reconfiguration Assignment Sequence Membership Check:] For all $\ell,\ell'\in [3]$ let $x:=\dec(\phi_{i}|_{\tV_{\bar{i}_{\ell}}})$ and $x':=\dec(\phi_{i}|_{\tV_{\bar{i}_{\ell'}}})$. Then, we have that $\|\psi_x-\psi_{x'}\|_0\le 1$.
\end{description}
Note that $\Phi_i$ is efficiently computable.
\item Let $\Pi'_i = (G'_i = (V'_i, E'_i), \tilde\Sigma, \{C^i_e\}_{e \in E'})$ be the 2-\CSP$_{q_0}$ instance produced by applying \Cref{thm:din-pcpp} on $\Phi_i$ where $V'_i = \tV_{\bar{i}_1} \uplus \tV_{\bar{i}_2} \uplus \tV_{\bar{i}_3} \uplus \tA_i$.
\end{itemize}
\item[Hyperedge Set and Constraints:] For all $i\in[4]$, and for each $e = (u, v) \in E'_i$, create a hyperedge $\te = (v^*, u, v)$ in $\tE_i$ with the following constraint:
\begin{align*}
\forall \tsigma^*, \tsigma_u, \tsigma_v\in\tSigma,\ \tC_{\te}(\tsigma^*, \tsigma_u, \tsigma_v) = 1 \Longleftrightarrow  \left(\left(\left(\tsigma^* = i\right) \wedge \left(C^i_{e}(\tsigma_u, \tsigma_v)=1\right)\right) \text{ or }\left(\tsigma^* \in [4] \setminus \{i\}\right)\right).
\end{align*}
\item[Beginning and End of the Reconfiguration Assignment Sequence:] In order to define $\tpsi_s$ and $\tpsi_t$, it will be convenient to first define an additional notion. For every satisfying assignment $\psi: V \to \Sigma$ of $\Pi$, we define an assignment $\psi^{\enc}: \tV \to \tilde\Sigma$ of $\tPi$ in the following way.
We use the shorthand notation,  $\enc(\psi):=\enc((\pi(\psi(v)))_{v\in V})$ throughout the proof. First, fix $i\in[4]$ and we will build an assignment $\phi_i$ to $\Phi_i$ which evaluates it to 1 in the following way:
$$\forall \ell \in [3],\ \phi_i|_{\tV_{\bar{i}_\ell}}:= \enc(\psi). 
$$

From the construction of $\phi_i$ and the assumption that $\psi$ is a satisfying assignment to $\Pi$, it is easy to verify that $\phi_i$ evaluates to 1 on $\Phi_i$. 
Let $\phi^*_{\tA_i}:\tA_i\to \tSigma$ be the output of $\asgnt$ on input $\phi_i$ (as guaranteed   in the completeness case of \Cref{thm:din-pcpp}).

Then, we define $\psi^{\enc}$ as follows:
\begin{align*}
\psi^{\enc}(v^*) &= 4, \\
\psi^{\enc}|_{\tV_i} &= \enc(\psi) &\forall i \in [4], \\
\psi^{\enc}|_{\tA_i} &= \asgnt(\phi_i)=\phi^*_{\tA_i} &\forall i \in [4].
\end{align*}
 
  Then, let $\tpsi_s = (\psi_s)^{\enc}$ and $\tpsi_t = (\psi_t)^{\enc}$ and it can be verified that both are satisfying assignments to $\tPi$. 
\end{description}

It is easy to note that the total reduction runs in polynomial time. The rest of the proof is dedicated to showing the completeness and soundness of the reduction.

\paragraph{Completeness Analysis.}
Suppose that $\val_{\Pi}(\psi_s \lrsg \psi_t) = 1$. That is, there exists a reconfiguration assignment sequence $\psi_0, \dots, \psi_p$ (w.r.t. $\Pi$) such that $\psi_0 = \psi_s, \psi_p = \psi_t$ and $\val_{\Pi}(\psi_i) = 1$ for all $i \in \{0, \dots, p\}$. We will show that $\val_{\tPi}(\tpsi_s \lrsg \tpsi_t) = 1$. To do so, it suffices to show that, for any two satisfying assignments $\psi, \psi'$ of $\Pi$ that differ on a single coordinate, we have that  $\val_{\tPi}((\psi)^{\enc} \lrsg (\psi')^{\enc}) = 1$. (After which, we can just concatenate the configuration sequences from $(\psi_0)^{\enc}$ to $(\psi_1)^{\enc}$ and then from $(\psi_1)^{\enc}$ to $(\psi_2)^{\enc}$ and so on.)

Suppose that $\psi$ and  $\psi'$ are two satisfying assignments of $\Pi$ such that $\|\psi-\psi'\|_0=1$.  We now make an important remark. Let $A \subseteq \{0, 1\}^n \times \{0, 1\}^n \times \{0, 1\}^n$ be defined as follows:
$$
(a_1,a_2,a_3)\in A\iff \forall\ \iota\in[3], \text{ we have }a_{\iota}\in \{\enc(\psi),\enc(\psi')\}.
$$
Note that $A$ is of size 8. For every $a\in A$ and for every $i\in [4]$, we have that $\Phi_i$ evaluates to 1 on $a$ because, $\|\psi-\psi'\|_0=1$ (thus passing the Reconfiguration Assignment Sequence Membership Check) and both $\psi$ and  $\psi'$ are  satisfying assignments of $\Pi$ (thus passing the Encoding of a Satisfying Assignment Check). Thus, we can run $\asgnt$ on $a$.

We think of an assignment to $\tPi$ now as a string in: $$\tSigma^{1+\left(\sum_{i\in[n]}|\tV|_i\right)+\left(\sum_{i\in[n]}|\tA|_i\right)}=\tSigma\times \tSigma^{|\tV_1|}\times\tSigma^{|\tV_2|}\times\tSigma^{|\tV_3|}\times\tSigma^{|\tV_4|}\times\tSigma^{|\tA_1|}\times\tSigma^{|\tA_2|}\times\tSigma^{|\tA_3|}\times\tSigma^{|\tA_4|}.$$ Then, we consider a sequence of assignments $\hat\bpsi$  (which is not a reconfiguration assignment sequence) of $\tPi$ given as follows: $$\hat\bpsi:=\langle\hat\psi_0,\ldots ,\hat\psi_{8}\rangle,$$
where we have:
\begin{align*}
    \hat{\psi}_0&=\psi^{\enc}=(4,\psi^{\enc}|_{\tV_1},\psi^{\enc}|_{\tV_2},\psi^{\enc}|_{\tV_3},\psi^{\enc}|_{\tV_4},\psi^{\enc}|_{\tA_1},\psi^{\enc}|_{\tA_2},\psi^{\enc}|_{\tA_3},\psi^{\enc}|_{\tA_4}),\\
     \hat{\psi}_1&= (\mathcolorbox{pink!50!white}{1},\psi^{\enc}|_{\tV_1},\psi^{\enc}|_{\tV_2},\psi^{\enc}|_{\tV_3},\psi^{\enc}|_{\tV_4},\psi^{\enc}|_{\tA_1},\psi^{\enc}|_{\tA_2},\psi^{\enc}|_{\tA_3},\psi^{\enc}|_{\tA_4}),\\
     \hat{\psi}_2&=\Big(1,\mathcolorbox{pink!50!white}{(\psi')^{\enc}|_{\tV_1}},\psi^{\enc}|_{\tV_2},\psi^{\enc}|_{\tV_3},\psi^{\enc}|_{\tV_4},\psi^{\enc}|_{\tA_1},\mathcolorbox{pink!50!white}{\asgnt((\psi')^{\enc}|_{\tV_1},\psi^{\enc}|_{\tV_3},\psi^{\enc}|_{\tV_4})},\\
     &\phantom{10305630}\mathcolorbox{pink!50!white}{\asgnt((\psi')^{\enc}|_{\tV_1},\psi^{\enc}|_{\tV_2},\psi^{\enc}|_{\tV_4})},\mathcolorbox{pink!50!white}{\asgnt((\psi')^{\enc}|_{\tV_1},\psi^{\enc}|_{\tV_2},\psi^{\enc}|_{\tV_3})}\Big),\\
      \hat{\psi}_3&= \Big(\mathcolorbox{pink!50!white}{2} ,{(\psi')^{\enc}|_{\tV_1}},\psi^{\enc}|_{\tV_2},\psi^{\enc}|_{\tV_3},\psi^{\enc}|_{\tV_4},\psi^{\enc}|_{\tA_1},{ \asgnt((\psi')^{\enc}|_{\tV_1},\psi^{\enc}|_{\tV_3},\psi^{\enc}|_{\tV_4})},\\
     &\phantom{10305630}{ \asgnt((\psi')^{\enc}|_{\tV_1},\psi^{\enc}|_{\tV_2},\psi^{\enc}|_{\tV_4})},{ \asgnt((\psi')^{\enc}|_{\tV_1},\psi^{\enc}|_{\tV_2},\psi^{\enc}|_{\tV_3})}\Big),\\
      \hat{\psi}_4&= \Big({2} ,{(\psi')^{\enc}|_{\tV_1}},\mathcolorbox{pink!50!white}{(\psi')^{\enc}|_{\tV_2}},\psi^{\enc}|_{\tV_3},\psi^{\enc}|_{\tV_4},\\
      &\phantom{10305630}\mathcolorbox{pink!50!white}{\asgnt((\psi')^{\enc}|_{\tV_2},\psi^{\enc}|_{\tV_3},\psi^{\enc}|_{\tV_4})},{ \asgnt((\psi')^{\enc}|_{\tV_1},\psi^{\enc}|_{\tV_3},\psi^{\enc}|_{\tV_4})},
\\
&\phantom{10305630}\mathcolorbox{pink!50!white}{\asgnt((\psi')^{\enc}|_{\tV_1},(\psi')^{\enc}|_{\tV_2},\psi^{\enc}|_{\tV_4})},
     \mathcolorbox{pink!50!white}{\asgnt((\psi')^{\enc}|_{\tV_1},(\psi')^{\enc}|_{\tV_2},\psi^{\enc}|_{\tV_3})}\Big),\\
      \hat{\psi}_5&= \Big(\mathcolorbox{pink!50!white}{3} ,{(\psi')^{\enc}|_{\tV_1}},{(\psi')^{\enc}|_{\tV_2}},\psi^{\enc}|_{\tV_3},\psi^{\enc}|_{\tV_4},\\
      &\phantom{10305630}{ \asgnt((\psi')^{\enc}|_{\tV_2},\psi^{\enc}|_{\tV_3},\psi^{\enc}|_{\tV_4})},{ \asgnt((\psi')^{\enc}|_{\tV_1},\psi^{\enc}|_{\tV_3},\psi^{\enc}|_{\tV_4})},\\
     &\phantom{10305630}
     {  \asgnt((\psi')^{\enc}|_{\tV_1},(\psi')^{\enc}|_{\tV_2},\psi^{\enc}|_{\tV_4})},{  \asgnt((\psi')^{\enc}|_{\tV_1},(\psi')^{\enc}|_{\tV_2},\psi^{\enc}|_{\tV_3})}\Big),\\
      \hat{\psi}_6&\Big({3} ,{(\psi')^{\enc}|_{\tV_1}},{(\psi')^{\enc}|_{\tV_2}},\mathcolorbox{pink!50!white}{(\psi')^{\enc}|_{\tV_3}},\psi^{\enc}|_{\tV_4},\\
      &\phantom{10305630}\mathcolorbox{pink!50!white}{\asgnt((\psi')^{\enc}|_{\tV_2},(\psi')^{\enc}|_{\tV_3},\psi^{\enc}|_{\tV_4})},\mathcolorbox{pink!50!white}{\asgnt((\psi')^{\enc}|_{\tV_1},(\psi')^{\enc}|_{\tV_3},\psi^{\enc}|_{\tV_4})},\\
     &\phantom{10305630}{  \asgnt((\psi')^{\enc}|_{\tV_1},(\psi')^{\enc}|_{\tV_2},\psi^{\enc}|_{\tV_4})},\mathcolorbox{pink!50!white}{\asgnt((\psi')^{\enc}|_{\tV_1},(\psi')^{\enc}|_{\tV_2},(\psi')^{\enc}|_{\tV_3})}\Big),\\
      \hat{\psi}_7&= \Big(\mathcolorbox{pink!50!white}{4} ,{(\psi')^{\enc}|_{\tV_1}},{(\psi')^{\enc}|_{\tV_2}},{ (\psi')^{\enc}|_{\tV_3}},\psi^{\enc}|_{\tV_4},\\
      &\phantom{10305630}{  \asgnt((\psi')^{\enc}|_{\tV_2},(\psi')^{\enc}|_{\tV_3},\psi^{\enc}|_{\tV_4})},{  \asgnt((\psi')^{\enc}|_{\tV_1},(\psi')^{\enc}|_{\tV_3},\psi^{\enc}|_{\tV_4})},
     \\
     &\phantom{10305630}{  \asgnt((\psi')^{\enc}|_{\tV_1},(\psi')^{\enc}|_{\tV_2},\psi^{\enc}|_{\tV_4})},
     {  \asgnt((\psi')^{\enc}|_{\tV_1},(\psi')^{\enc}|_{\tV_2},(\psi')^{\enc}|_{\tV_3})}\Big),\\
      \hat{\psi}_8&= \Big({4} ,{(\psi')^{\enc}|_{\tV_1}},{(\psi')^{\enc}|_{\tV_2}},{(\psi')^{\enc}|_{\tV_3}},\mathcolorbox{pink!50!white}{(\psi')^{\enc}|_{\tV_4}},\\
     &\phantom{10305630}\mathcolorbox{pink!50!white}{\asgnt((\psi')^{\enc}|_{\tV_2},(\psi')^{\enc}|_{\tV_3},(\psi')^{\enc}|_{\tV_4})},\mathcolorbox{pink!50!white}{\asgnt((\psi')^{\enc}|_{\tV_1},(\psi')^{\enc}|_{\tV_3},(\psi')^{\enc}|_{\tV_4})},\\
     &\phantom{10305630}\mathcolorbox{pink!50!white}{\asgnt((\psi')^{\enc}|_{\tV_1},(\psi')^{\enc}|_{\tV_2},(\psi')^{\enc}|_{\tV_4})},{  \asgnt((\psi')^{\enc}|_{\tV_1},(\psi')^{\enc}|_{\tV_2},(\psi')^{\enc}|_{\tV_3})}\Big)\\&=(\psi')^{\enc},
\end{align*}
where we have highlighted in pink color the entries which have changed from the immediate predecessor in the sequence. It is easy to verify that each assignment in $\hat\bpsi$ is a satisfying assignment to $\tPi$.

Next, an important observation is that for all $j\in[8]$,  let $\hat{\bpsi}_j$, be the \emph{direct} reconfiguration assignment sequence from $\hat{\psi}_{j-1}$ to $\hat{\psi}_j$ obtained by changing  $\hat\psi_{j-1}$ to $\hat\psi_j$ in some canonical shortest way possible. Then,  for every intermediate assignment $\hat{\psi}_{j-1,j}\in \hat{\bpsi}_j$  (i.e., $\|\hat{\psi}_{j-1,j}-\hat{\psi}_{j-1}\|_0+\|\hat{\psi}_{j-1,j}-\hat{\psi}_{j}\|_0=\|\hat{\psi}_{j-1}-\hat{\psi}_{j}\|_0$), we have that $\hat{\psi}_{j-1,j}$ is also a \emph{satisfying} assignment. This is because, if $j$ is even then the variables whose assignment is modified in $\hat{\psi}_{j-1,j}$ (compared to $\hat{\psi}_{j-1}$) is not in $\Pi_i'$,   where $i$ is the first coordinate of $\hat{\psi}_{j-1,j}$, and if $j$ is odd then note that for all $j'\in[9]$, we have by construction of the assignment that in $\hat{\psi}_{j'-1}$ if we change the assignment of $v^*$ to any arbitrary value in $[4]$, the modified assignment continues to satisfy all constraints of $\tPi$. 

Thus, our reconfiguration assignment sequence $\bpsi$ from $(\psi)^{\enc}$ to $(\psi')^\enc$ is as follows. Set $j=0$, and while $j<8$, first include $\hat{\psi}_j$ to $\bpsi$, and then sequentially introduce the assignments in the   the \emph{direct} reconfiguration assignment sequence $\hat{\bpsi}_{j+1}$ which ends with the assignment $\hat{\psi}_{j+1}$.  

It is simple to verify that the above construction gives a valid reconfiguration assignment sequence from $(\psi)^{\enc}$ to $(\psi')^{\enc}$ such that every assignment is satisfying all constraints in $\tPi$.

\paragraph{Soundness Analysis.}
Suppose contrapositively that $\val_{\tilde \Pi}(\tpsi_s \lrsg \tpsi_t) \geq 1 - \eps$. We will show that $\val_{\Pi}(\psi_s \lrsg \psi_t) = 1$. 

Since $\val_{\tilde \Pi}(\tpsi_s \lrsg \tpsi_t) \geq 1 - \eps$, there exists a reconfiguration assignment sequence $\tpsi_0, \dots, \tpsi_p$ (w.r.t. $\tPi$) such that $\tpsi_0 = \tpsi_s, \tpsi_p = \tpsi_t$ and $\val_{\tPi}(\tpsi_j) \geq 1 - \eps$ for all $j \in \{0, \dots, p\}$. We further assume that for all $j \in \{0, \dots, p\}$, we have $\tpsi_j(v^*)\in[4]$, as otherwise we have all the constraints are violated in $\tPi$.

For every $j \in \{0, \dots, p\}$, we construct an assignment $\psi_j$ to $\Pi$ as follows:
\begin{itemize}
\item Let $i_j := \tpsi_j(v^*)$.
\item For each $\ell \in [4] \setminus \{i_j\}$, let $\psi^{\ell}_j := \underset{\psi:V\to\Sigma}{\argmin} \ \|\enc(\psi)- \tpsi_{j}|_{\tV_{\ell}}\|_0$.
\item For every $v \in V$, let $\psi_j(v)$ be the most frequent element in $\{\psi^{\ell}_j(v)\}_{\ell \in [4] \setminus \{i_j\}}$ (ties broken arbitrarily).
\end{itemize}

Our main   observation is the following claim:
\begin{claim} \label{obs:dec}
For every $j \in \{0,\ldots ,p\}$ the following holds.
\begin{enumerate}
\item For every $\ell \in [4] \setminus \{i_j\}$, we have $\|\enc(\psi_j^{\ell})- \tpsi_{j}|_{\tV_{\ell}}\|_0 < \frac{\delta n}{4}  $.
\item For every $\ell\in [4] \setminus \{i_j\}$, we have $\psi^{\ell}_j$ is a satisfying assignment of $\Pi$.
\item For every $\ell, \ell' \in [4] \setminus \{i_j\}$, we have $\|\psi^{\ell}_j- \psi^{\ell'}_j\|_0 \leq 1$.
\end{enumerate}
\end{claim}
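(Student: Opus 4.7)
The plan is to fix an arbitrary $j \in \{0, \ldots, p\}$, write $i := i_j$ and $X := \tV_{\bar{i}_1} \uplus \tV_{\bar{i}_2} \uplus \tV_{\bar{i}_3}$, and combine three ingredients: (a) the variable $v^*$ localizes all violated constraints of $\tpsi_j$ to the sub-instance $\tE_i$, (b) the soundness of the assignment tester turns this into closeness between $\tpsi_j|_X$ and some satisfying assignment of $\Phi_i$, and (c) the good minimum distance of $\enc$ promotes that closeness into a unique identification of the decoded message.

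First, since $\tpsi_j(v^*) = i \in [4]$, every constraint in $\tE_{i'}$ with $i' \neq i$ is vacuously satisfied by the disjunct $\tsigma^* \in [4] \setminus \{i'\}$. The four sets $\tE_1, \ldots, \tE_4$ have equal cardinality (each arises from applying the same assignment tester to isomorphic circuits $\Phi_1, \ldots, \Phi_4$), so the hypothesis $\val_{\tPi}(\tpsi_j) \geq 1 - \eps$ upgrades to $\val_{\Pi'_i}(\tpsi_j|_{V'_i}) \geq 1 - 4\eps$ on the sub-instance produced by the tester.

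Next, I would invoke the soundness clause of \Cref{def:asgn} to produce a satisfying assignment $\phi^*$ of $\Phi_i$ with
\[
\|\tpsi_j|_X - \phi^*\|_0 \;\le\; \frac{4\eps \cdot |X|}{\gamma_0} \;=\; \frac{12\,\eps\, n}{\gamma_0} \;<\; \frac{\delta n}{4},
\]
where the last inequality uses the choice $\eps = \gamma_0 \delta / 50$. Splitting the distance by coordinate gives $\|\tpsi_j|_{\tV_\ell} - \phi^*|_{\tV_\ell}\|_0 < \delta n / 4$ for every $\ell \in [4] \setminus \{i\}$. Because $\phi^*$ satisfies $\Phi_i$, the Encoding-of-a-Satisfying-Assignment Check forces $\phi^*|_{\tV_\ell} = \enc(\tpsi^*_\ell)$ for some satisfying assignment $\tpsi^*_\ell$ of $\Pi$, and the Reconfiguration-Assignment-Sequence-Membership Check forces $\|\tpsi^*_\ell - \tpsi^*_{\ell'}\|_0 \le 1$ for every $\ell, \ell' \in [4] \setminus \{i\}$.

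Finally, I would read off all three parts. Part 1 is immediate because $\enc(\tpsi^*_\ell)$ is a valid candidate in the $\argmin$ defining $\psi^\ell_j$, so $\|\enc(\psi^\ell_j) - \tpsi_j|_{\tV_\ell}\|_0 \le \|\enc(\tpsi^*_\ell) - \tpsi_j|_{\tV_\ell}\|_0 < \delta n / 4$. For Part 2, if $\psi^\ell_j \neq \tpsi^*_\ell$, the code distance gives $\|\enc(\psi^\ell_j) - \enc(\tpsi^*_\ell)\|_0 \ge \delta n$, which contradicts the triangle inequality through $\tpsi_j|_{\tV_\ell}$ (which bounds this by strictly less than $\delta n / 2$); hence $\psi^\ell_j = \tpsi^*_\ell$, a satisfying assignment. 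Part 3 then follows directly from the membership check on $\phi^*$ together with the identifications $\psi^\ell_j = \tpsi^*_\ell$ and $\psi^{\ell'}_j = \tpsi^*_{\ell'}$. The only real obstacle is the quantitative bookkeeping in the second step, namely tracking the factor-of-$4$ loss from restricting to $\tE_i$ and the factor $|X| = 3n$ from the tester, and verifying that $\eps = \gamma_0 \delta / 50$ leaves enough slack so that the two distances, each below $\delta n/4$, sum to strictly less than the code's unique-decoding radius $\delta n / 2$.
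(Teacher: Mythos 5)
Your proposal is correct and follows essentially the same approach as the paper: localize violations to $\tE_i$ via the value of $v^*$ (with the factor-$4$ loss), invoke the assignment-tester soundness to get a nearby satisfying assignment of $\Phi_i$ within $\tfrac{\delta n}{4}$ of $\tpsi_j|_{\hat V_i}$, then use the code's minimum distance to pin down each $\psi^\ell_j$ as the unique decoded assignment, after which Items 1--3 follow from the two checks built into $\Phi_i$. The arithmetic also matches ($12\eps n/\gamma_0 = \tfrac{6}{25}\delta n < \tfrac{1}{4}\delta n$), so there is nothing to add.
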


Before we prove \Cref{obs:dec},
let us first see how the above claim helps us finish the soundness analysis. 

Note that, for all  $j\in\{0,\ldots ,p\}$, due to the third item in \Cref{obs:dec}, at least two of the three assignments among $\psi^{\ell}_j$s (for $\ell \in [4] \setminus \{i_j\}$) are identical. Since $\psi_j$ will be equal to this, the second item of \Cref{obs:dec} implies that $\val_{\Pi}(\psi_j) = 1$.

Fix some $j\in[p]$. We claim  that  $\|\psi_j-\psi_{j-1}\|_0\le 1$. To see this consider two cases: 
\begin{description}
    \item[Case I:] $\tpsi_{j - 1}(v^*) = \tpsi_j(v^*)$. In this case, from the first item of \Cref{obs:dec} we can deduce $\psi_{j - 1}^\ell = \psi_j^\ell$, for all $\ell \in [4] \setminus \{i_j\}$. This is because, for any fixing of $\ell \in [4] \setminus \{i_j\}$ we have from triangle inequality:
    \begin{align*}
&        \left\|\enc(\psi_j^{\ell})- \enc(\psi_{j-1}^{\ell})\right\|_0 \\
\le\ & \left\|\enc( \psi_{j-1}^{\ell})- \tpsi_{j-1}|_{\tV_{\ell}}\right\|_0 +\left\|\enc( \psi_j^{\ell})- \tpsi_{j}|_{\tV_{\ell}}\right\|_0 +\left\|\tpsi_{j}|_{\tV_{\ell}}-\tpsi_{j-1}|_{\tV_{\ell}}\right\|_0\\
\le\ & \frac{\delta n}{2}+1.
    \end{align*}

    Since $d_{\enc}\ge \delta n$, this implies that for all $v\in V$, we have  $\pi(\psi_j^{\ell}(v))= \pi(\psi_{j-1}^{\ell}(v))$, which further implies that $\psi_j^{\ell}=\psi_{j-1}^{\ell}$ (as $\pi$ is injective). Since this last equality holds for all $\ell \in [4] \setminus \{i_j\}$, we thus have $\psi_{j - 1} = \psi_j$. 
\item[Case II:] $\tpsi_{j - 1}(v^*) \ne \tpsi_j(v^*)$. Let $\bar{i}_1$ and $\bar{i}_2$ denote the elements in $[4] \setminus \{\tpsi_{j - 1}(v^*), \tpsi_j(v^*)\}$. 
From the construction of $\psi_{j -1}$ and $ \psi_j$, we have that $\psi_{j -1}\in \{\psi_{j -1}^{\bar{i}_1},\psi_{j -1}^{\bar{i}_2}\}$ and $\psi_{j }\in \{\psi_{j }^{\bar{i}_1},\psi_{j }^{\bar{i}_2}\}$. However, since $\|\tpsi_{j - 1}- \tpsi_j\|_0=1$ we have that $\tpsi_{j - 1}|_{\tV\setminus \{v^*\}} = \tpsi_j|_{\tV\setminus \{v^*\}}$. This implies that $\psi_{j-1 }^{\bar{i}_1}=\psi_{j }^{\bar{i}_1}$ and $\psi_{j-1 }^{\bar{i}_2}=\psi_{j }^{\bar{i}_2}$. Thus, $\psi_{j -1}\in \{\psi_{j }^{\bar{i}_1},\psi_{j }^{\bar{i}_2}\}$. But we know from Item 3 in \Cref{obs:dec} that $\|\psi_{j }^{\bar{i}_1}-\psi_{j }^{\bar{i}_2}\|_0\le 1$. Thus, we conclude that $\|\psi_{j }-\psi_{j -1}\|_0\le 1$.
\end{description}

Now, consider the sequence of assignments, $\langle \psi_0, \dots, \psi_p\rangle$. We remove all contiguous duplicates, i.e., for all $j\in[p]$, if $\psi_j=\psi_{j-1}$, then we remove $\psi_j$ from the sequence. Let the resulting sequence be $\bpsi$. It is easy to see that $\bpsi$ is a valid reconfiguration assignment sequence. This completes the proof. 

\begin{proof}[Proof of \Cref{obs:dec}]
Fix $j\in \{0,\ldots ,p\}$. For ease of notation, we will use the shorthand $i$ for $i_j$ in this proof.  Since $\val_{\tPi}(\tpsi_j) \geq 1 - \eps$ we have that at most $4\eps \cdot |\tE_{i}|$ many constraints are violated by $\tpsi_j$ in $\{\tC_{\tilde{e}}\}_{\te\in\tE_{i}}$,  which in turn implies that $\tpsi_j|_{V'_{i}}$ violates at most $4\eps \cdot |E'_{i}|$ many constraints in $\Pi'_{i}$. 
Let $\hat V_i:=\tV_{\bar{i}_1}\uplus \tV_{\bar{i}_2}\uplus \tV_{\bar{i}_3}$. 
Let $\psi^*:\hat{V}_i\to\{0,1\}$ be defined as follows:
$$\psi^*:=\underset{\substack{\psi':\hat{V}_i\to\{0,1\}\\ \psi' \text{ satisfies }\Phi_i}}{\argmin}\|\tpsi_j|_{\hat{V}_{i}}-\psi'\|_0.$$

From the soundness guarantee of \Cref{def:asgn} we have
{
$$
1-\gamma_0\le1-4\eps\le\val_{\Pi'_i}(\tpsi_j|_{V'_{i}})
\implies \left\|\tpsi_j|_{\hat{V}_{i}}-\psi^*\right\|_0\le \kappa_0\cdot n<\frac{\delta}{4}\cdot n.$$
}

Fix  $\ell\in [4]\setminus \{i\}$. We then have: 
$$\left\|\tpsi_j|_{{\tV}_{\ell}}-\psi^*|_{{\tV}_{\ell}}\right\|_0\le\left\|\tpsi_j|_{\hat{V}_{i}}-\psi^*\right\|_0 <\frac{\delta}{4}\cdot n.$$

 Since $\psi^*$ satisfies $\Phi_i$ we have that $\dec(\psi^*|_{\tV_{\ell}})\neq \perp$. Let $x^{\ell}:=\dec(\psi^*|_{\tV_{\ell}})$.  We also have that $\enc(\psi_{x^{\ell}})=\psi^*|_{{\tV}_{\ell}}$. From the definition of $\psi^{\ell}_j$ we have that:
 $$\left\|\enc(\psi_j^{\ell})- \tpsi_{j}|_{\tV_{\ell}}\right\|_0\le \left\|\enc( \psi_{x^{\ell}})- \tpsi_{j}|_{\tV_{\ell}}\right\|_0=\left\|\tpsi_j|_{{\tV}_{\ell}}-\psi^*|_{{\tV}_{\ell}}\right\|_0<\frac{\delta}{4}\cdot n$$

Thus we proved Item 1 of the claim statement. Next to prove Item 2, observe that from triangle inequality we have:
\begin{align*}
&\left\|\enc( \psi_j^{\ell})- \enc( \psi_{x^{\ell}})\right\|_0\\
\le & \left\|\enc( \psi_j^{\ell})- \tpsi_j|_{{\tV}_{\ell}}\right\|_0+\left\|\tpsi_j|_{{\tV}_{\ell}}- \enc( \psi_{x^{\ell}})\right\|_0\\
<&\frac{\delta}{2}\cdot n.
\end{align*}
    Since $d_{\enc}\ge \delta n$, this implies that for all $v\in V$, we have  $\pi(\psi_j^{\ell}(v))= \pi(\psi_{x^{\ell}}(v))$, which further implies that $\psi_j^{\ell}=\psi_{x^{\ell}}$ (as $\pi$ is injective).  Since, $\psi_{x^{\ell}}$ satisfies all  the constraints in $\Pi$ (because $\psi^*$ satisfies $\Phi_i$ and thus passed the Encoding of a Satisfying Assingment Check), we have that $\psi_j^{\ell}$ satisfies all the constraints in $\Pi$. 

    Finally, to prove Item 3 of the claim statement, fix  some $\ell'\in [4]\setminus \{i\}$. Let $x^{\ell'}:=\dec(\psi^*|_{\tV_{\ell'}})$ and $\enc(\psi_{x^{\ell'}})=\psi^*|_{{\tV}_{\ell'}}$. Then, since $\psi^*$ passed the Reconfiguration Assignment Sequence Membership Check in $\Phi_i$, we have $1 \geq \|\psi_{x^{\ell}}-\psi_{x^{\ell'}}\|_0 = \|\psi_j^{\ell} - \psi_j^{\ell'}\|_0$ as desired.
\end{proof}
\end{proof}

\subsection{The Soundness and Arity Trade-Off}

{
In this section, we we apply the parallelization trick to the above construction to get a better tradeoff between query complexity and soundness (\Cref{thm:main_grw}). 

Recall \Cref{rmk:pcpp_csp_game_grw}. We will work with the more convenient proof-verifier perspective of the assignment tester and refer to the \CSP\ instance $\Phi$ in \Cref{def:asgn} as verifier $\mathcal V$.

\begin{definition}[Parallelizable Assignment Tester]\label{def:paraPCPP_grw}
    For each $i\in[t]$, assume $\mathcal V_i$ is a $k$-query assignment tester for Boolean circuit $\Phi_i$ on length-$n$ assignment $\psi^{(i)}$ and length-$m$ auxiliary proof $\psi_A^{(i)}$ with alphabet $\Sigma$.
    For each randomness $r$, we use $I_r^{(i)}$ to denote the $k$ locations that the verifier $\mathcal V_i$ queries on $\psi^{(i)}\circ\psi_A^{(i)}$ given the choice of $r$.
    
    We say these $\mathcal V_i$'s are parallelizable if $I_r^{(i)}=I_r^{(j)}$ holds for every $i,j\in[t]$ and randomness $r$, i.e., they query the same set of $k$ locations in their respective $\psi\circ\psi_A$.
\end{definition}

We think of the input of the $t$ verifiers as written in a $t \times (n+m)$ table:
\begin{itemize}
    \item We use $x_1,\ldots,x_n,\pi_1,\ldots,\pi_m$ to denote the variables on each column, which are supposed to take values in $\Sigma^t$.
    \item We use $x^{(1)},\ldots,x^{(t)}, \pi^{(1)},\ldots,\pi^{(t)}$ to denote the variables on each row, where $x^{(i)}$ is supposed to take values in $\Sigma^n$ and $\pi^{(i)}$ is supposed to take values in $\Sigma^m$, corresponding to $\psi^{(i)}$ and $\psi_A^{(i)}$ respectively.
    \item We abbreviate variables $(x_1, \ldots, x_n)$ and $(\pi_1, \ldots, \pi_m)$ as $x$ and $\pi$ for assignment and auxiliary proof respectively.
\end{itemize}

\begin{theorem}\label{thm:pcpp2csp_grw}
    For any constant $t\ge 1$, let $\mathcal V_1,\ldots,\mathcal V_t$ be $t$ parallelizable $k$-query assignment testers for circuits $\Phi_1,\ldots,\Phi_t$ with length-$n$ assignment, length-$m$ auxiliary proof, and alphabet $\Sigma$ of size $q\ge2$.
    There is a $(k+1)\text{-}\CSP_{q^t}$ instance $\Pi$ which can be built in polynomial time, such that:
    \begin{itemize}
        \item (Variables and Alphabet) 
        $\Pi$ has $n+m+1$ variables $\{v^*\} \cup x \cup \pi$ over alphabet $\Sigma^t$.
        \item (Completeness) Fix an assignment $\psi$ of $\Pi$. 
        If for some $i \in [t]$, $\mathcal V_i$ accepts $\psi(x^{(i)}) \circ \psi(\pi^{(i)})$ with probability 1, then $\val_\Pi(\psi)=1$, where $\psi(x^{(i)}),\psi(\pi^{(i)})$ are the assignments of $x^{(i)},\pi^{(i)}$ in $\psi$ respectively.
        \item (Soundness) Fix an assignment $\psi$ of $\Pi$. If for every $i \in [t]$, $\mathcal V_i$ rejects $\psi(x^{(i)}) \circ \psi(\pi^{(i)})$ with probability at least $\gamma$, then $\val_{\Pi}(\psi) \le1-\gamma$.
    \end{itemize}
\end{theorem}
\begin{proof}
    $\Pi$ is constructed by simply ``parallelizing'' the $t$ verifiers.
    See \Cref{fig:1_grw} for an example.
    
    Specifically, for each randomness in $r$, let $I$ be the size-$k$ local window that every $\mathcal V_i$ queries on their respective $x \circ \pi$. We add a $(k+1)$-ary constraint $c$ to $\Pi$ as follows:
    \begin{itemize}
        \item $c$ is on variables $\{v^*\} \cup (x \circ \pi)_I$.
        \item $c$ is satisfied if and only if:
        \begin{itemize}
            \item $v^*$ takes value in $[t]$ (We treat $\Sigma^t$, the alphabet of $v^*$, as a super-set of $[t]$).
            \item Suppose $v^*$ takes value $i \in [t]$, then $\mathcal V_i$ accepts the assignment on $(x^{(i)} \circ \pi^{(i)})_I$.
        \end{itemize}
    \end{itemize}
    The completeness is obvious.
    The soundness is due to the parallelization trick that we can check the same constraint for all verifier at once.
\end{proof}

\begin{figure}[htbp]
  \centering
  \includegraphics[width=0.8\textwidth]{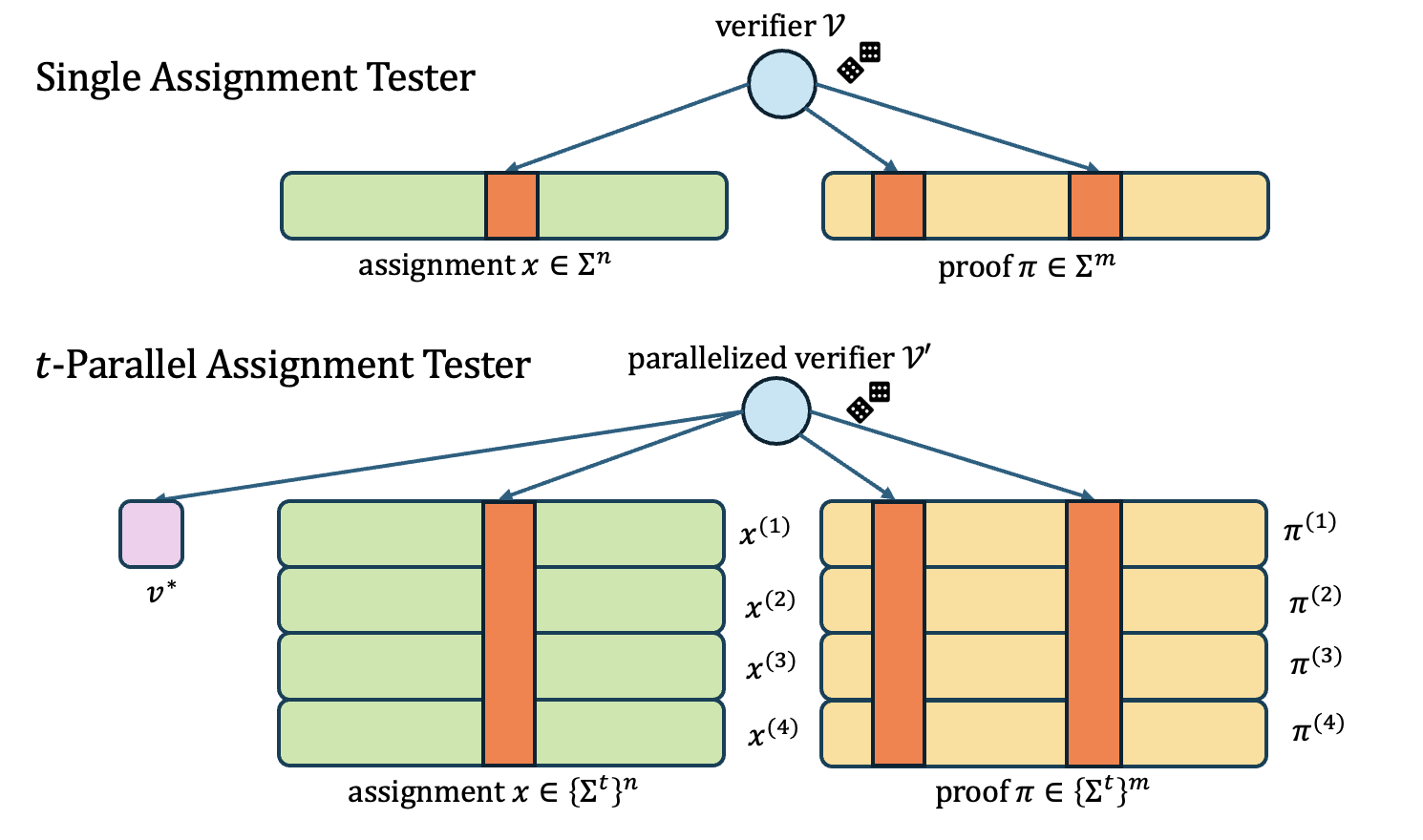}
  \caption{An illustration of the parallel construction with $t=4$. 
  }
  \label{fig:1_grw}
\end{figure}

With \Cref{thm:pcpp2csp_grw}, we now prove \Cref{thm:main_grw}.

\begin{proof}[Proof of \Cref{thm:main_grw}]
As mentioned in \Cref{sec:proof-overview}, the proof of \Cref{thm:rih} actually implies the \PSPACE-hardness of the following reconfiguration task: given four parallelizable $k$-query assignment testers $\mathcal V_1,\mathcal V_2,\mathcal V_3,\mathcal V_4$ and two assignments $\psi^{\textsf{ini}},\psi^{\textsf{tar}}:(x \cup \pi) \to \Sigma^4$, distinguish between the following two cases:
    \begin{itemize}
        \item (Yes Case) There is a reconfiguration assignment sequence $\psi^{\textsf{ini}}\lrsg\psi^\textsf{tar}$, such that for every intermediate assignment $\psi$, there exists $i \in [t]$ such that $\mathcal V_i$ accepts $\psi(x^{(i)})\circ \psi(\pi^{(i)})$ with probability $1$.
        \item (No Case) For any reconfiguration assignment sequence $\psi^\textsf{ini}\lrsg\psi^\textsf{tar}$, there exists an intermediate assignment $\psi$, such that for all $i \in [t]$, $\mathcal V_i$ accepts $\psi(x^{(i)}) \circ \psi(\pi^{(i)})$ with probability less than $1-\gamma$.
    \end{itemize}
Here $\gamma$ is the rejection probability of each assignment tester.

Note that in the above formulation, we only need $\delta>\kappa$ (instead of $\delta>4\kappa$ as in the original proof), where $\kappa$ is the proximity parameter of the assignment testers and $\delta$ is the relative distance of the ECC used in the proof of \Cref{thm:rih}.
Since $\kappa<1$, this is not an issue.

Let $\Sigma$ be the alphabet of the above assignment testers and define $q=|\Sigma|$.
By \Cref{thm:pcpp2csp_grw}, the above task is reduced to a $(k+1)\text{-}\CSP_{q^4}$ instance $\Pi$, for which we need to distinguish between the following two cases:
    \begin{itemize}
        \item (Yes Case) There is a reconfiguration assignment sequence $\psi^{\textsf{ini}}\lrsg\psi^\textsf{tar}$, such that for every intermediate assignment $\psi$ is a solution to $\Pi$.
        \item (No Case) For any reconfiguration assignment sequence $\psi^\textsf{ini}\lrsg\psi^\textsf{tar}$, there exists an intermediate assignment $\psi$, such that $\val_\Pi(\psi)<1-\gamma$.
    \end{itemize}
This problem is precisely $\gapmaxminvarcsp{1}{1-\gamma}{(k+1)}{q^4}$ and it completes the proof by noticing $q=O(1)$.
\end{proof}
}

\section[NP-Hardness of Approximation with Tight Ratios]{\NP-Hardness of Approximation with Tight Ratios}

In this section we prove \Cref{thm:maxmin-csp-np-hardness}, \Cref{thm:minmax-setcover-np-hardness},~and~ \Cref{thm:minmax-csp-np-hardness}, i.e., our tight \NP-hardness results. 

\subsection{\NP-Hardness of GapMaxMin-2-\CSP$_q$}

\begin{sloppypar}
    In this subsection, we will prove our (nearly) tight \NP-hardness of approximation of GapMaxMin-2-\CSP$_q$~(\Cref{thm:maxmin-csp-np-hardness}). 
\end{sloppypar}

We will reduce from the \NP-hardness of Gap-2-\CSP$_q$ problem with ``balanced'' edges. 
\begin{definition} \label{def:balance}
We say that a graph $G = (V, E)$ is \emph{$\delta$-balanced} if and only if for any partition $V = V_1 \cup V_2$ such that $|V_1|, |V_2| \leq \lceil |V| / 2\rceil$, we have $|E[V_1]| + |E[V_2]| \leq (1 + \delta) |E| / 2$. We say that a 2-\CSP\ instance is $\delta$-balanced if and only if it's constraint graph is $\delta$-balanced.
\end{definition}

For $\delta$-balanced 2-\CSP{}s, i.e., instances of Gap-2-\CSP$_q$ whose constraint graph is $\delta$-balanced (for some $\delta>0$), it is not hard to show the following result:

\begin{theorem} \label{thm:balanced-2csp}
For every constant $\delta > 0$, there exists $q \in \N$ such that $\gapcsp{1}{\delta}{q}$ is \NP-hard even on $\delta$-balanced instances.
\end{theorem}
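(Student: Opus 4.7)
The plan is to deduce this from the \NP-hardness of Gap-2-\CSP$_q$ on good spectral expander constraint graphs, and then translate spectral expansion into the balance condition via the expander mixing lemma.

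Specifically, I will use the following standard consequence of the PCP theorem combined with Dinur's combinatorial gap amplification~\cite{Dinur07}: for every constant $\eta>0$, there exist constants $q = q(\eta)$ and $d = d(\eta)$ such that $\gapcsp{1}{\eta}{q}$ is \NP-hard even on instances whose constraint graph is $d$-regular with second-largest eigenvalue $\lambda$ satisfying $\lambda/d \le \eta$. Indeed, Dinur's \emph{preprocessing} turns the constraint graph into a constant-degree expander at only constant-factor cost in the gap; iterated \emph{graph powering} amplifies the gap arbitrarily while preserving (in fact improving) spectral expansion, since powers of expanders are expanders; and \emph{alphabet reduction} via an assignment tester (\Cref{thm:din-pcpp}) returns the alphabet to a constant size at a constant multiplicative cost to the gap. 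Iterating these steps drives the soundness below any target $\eta$ while maintaining $d$-regularity and $\lambda/d \le \eta$.

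Setting $\eta := \delta/2$ and applying the above yields a hard instance $\Pi$ whose constraint graph $G = (V, E)$ satisfies $\lambda/d \le \delta/2$. To verify $\delta$-balance, let $V = V_1 \uplus V_2$ be any partition with $|V_1|, |V_2| \le \lceil |V|/2 \rceil$. The expander mixing lemma gives
\begin{align*}
|E[V_1, V_2]| \ge \frac{d |V_1| |V_2|}{|V|} - \lambda \sqrt{|V_1| |V_2|} \ge \frac{|E|}{2}\left(1 - \frac{2\lambda}{d}\right) - o(|E|) \ge \frac{(1-\delta)|E|}{2},
\end{align*}
where I used $|E| = d|V|/2$, $|V_1|\cdot|V_2| \ge \lfloor |V|/2\rfloor \lceil |V|/2\rceil$, and $2\lambda/d \le \delta$ (the $o(|E|)$ parity term is negligible for large $|V|$; tiny instances can be absorbed by padding with a few always-satisfied constraints). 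Consequently $|E[V_1]| + |E[V_2]| = |E| - |E[V_1, V_2]| \le (1+\delta)|E|/2$, so $\Pi$ is $\delta$-balanced. Since $\eta = \delta/2 < \delta$, the \NP-hardness of $\gapcsp{1}{\delta/2}{q}$ transfers to \NP-hardness of $\gapcsp{1}{\delta}{q}$ on $\delta$-balanced instances.

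The main obstacle (and the reason this is a ``folklore'' rather than textbook fact) is to track the spectral expansion through each step of Dinur's construction, in particular through alphabet reduction, which modifies the constraint graph by composing with the assignment tester's graph. This can be handled either by choosing an assignment tester whose output graph preserves expansion, or more robustly by post-composing the final instance with a sufficiently good explicit expander on $V$ (weighted to dominate the original edge count by a constant factor), which top-ups the spectral ratio without materially affecting the soundness gap.
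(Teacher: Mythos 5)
Your high-level plan (reduce from a ``balanced'' hardness result, then translate a structural expansion/mixing property into the balance condition via a mixing lemma) parallels the paper's, which uses Moshkovitz's fortification result that $\gapcsp{1}{\delta}{q}$ is \NP-hard on $(\gamma,\gamma^2)$-\emph{extractor} graphs, together with the extractor analog of the expander mixing lemma (Vadhan's Proposition 6.21). Your expander-mixing calculation ($\lambda/d \le \delta/2 \Rightarrow \delta$-balanced) is also correct. However, the input to that calculation --- that $\gapcsp{1}{\eta}{q}$ is \NP-hard on $d$-regular graphs with $\lambda/d \le \eta$ for \emph{arbitrarily small} $\eta$ --- is not established by the argument you sketch, and this is the actual content of the theorem. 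Dinur's iteration (preprocess, power, alphabet-reduce) necessarily re-runs preprocessing at the start of each round, which resets the constraint graph to a \emph{fixed}-quality expander (constant $\lambda/d$ determined by the explicit expander family used), and the soundness it drives to is likewise only a fixed constant, not an arbitrarily small $\eta$; a single powering at the end cannot simultaneously push $\lambda/d$ below an arbitrary threshold and keep soundness small, because Dinur's amplification lemma caps the achievable gap at $O(1/\sqrt{t})$ per powering.

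Your fallback fix --- post-composing with an explicit expander $H$ on $V$ whose edges carry always-satisfied constraints and ``dominate the original edge count by a constant factor'' --- does not work either, and in fact fails in the opposite direction from what you want. If $|E(H)| = c\,|E(G)|$, then in the NO case the maximum satisfiable fraction rises from $s$ to $(s+c)/(1+c)$, since the $H$-constraints are free. To keep the normalized second eigenvalue of $G \cup H$ at most $\delta/2$ you need $H$ to carry roughly a $1 - \delta/2$ fraction of the edges (i.e.\ $c \gtrsim 2/\delta$), but then the NO-case value is at least $c/(1+c) \ge 1 - \delta/2$, which destroys the gap entirely. Conversely, keeping soundness below $\delta$ forces $c \lesssim \delta$, which is far too small for $H$'s spectrum to dominate. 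This tension is exactly what Moshkovitz's fortification resolves: it modifies the instance in a way that simultaneously keeps soundness small and forces the (bipartite) constraint graph to be a strong extractor, which is why the paper cites \Cref{thm:fortification} directly rather than trying to trace expansion through Dinur's construction.
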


The proof of the above theorem is deferred to \Cref{subsec:balance}.

\subsubsection[From Balanced Instances of Gap-2-CSP to GapMaxMin-2-CSP]{From Balanced Instances of Gap-2-\CSP$_q$ to GapMaxMin-2-\CSP$_q$: Proof of \Cref{thm:maxmin-csp-np-hardness}}

{Before we prove \Cref{thm:balanced-2csp}, let us  show how to use it to prove \Cref{thm:maxmin-csp-np-hardness}. We note here that our reduction below is slightly different from the one presented in \Cref{sec:proof-overview}. Specifically, we do not add two new characters to the alphabet set, but instead we introduce a new character $\sigma^*$ and then  for each character $\tsigma \in \tSigma\cup \{\sigma^*\}$, we  make two copies of it $(\tsigma, 0)$ and $(\tsigma, 1)$, and we set the constraint so that $(\sigma^*, i)$ is \emph{not} compatible with $(\tsigma, 1 - i)$. This change helps avoid some strategy, such as changing a fraction of assignments to some character from $\tSigma$ before using the approximation algorithm on the remaining assignments, that can prevent the reduction in \Cref{sec:proof-overview} from showing $1/2 + o(1)$ factor hardness of approximation.}

\begin{proof}[Proof of \Cref{thm:maxmin-csp-np-hardness}]
\begin{sloppypar}Let $\delta = \eps/2$. For any two bits $a,b\in\{0,1\}$, we define the indicator function   $\ind[a,b]$ to evaluate to 1 if $a=b$ and to evaluate to 0 otherwise.

Given an instance $\tPi = (\tG = (\tV, \tE), \tSigma, \{\tC_e\}_{e \in \tE})$ of $\delta$-balanced 2-\CSP\ from \Cref{thm:balanced-2csp}. We create an GapMaxMin-2-\CSP$_q$\ instance $(\Pi = (G = (V, E), \Sigma, \{C_e\}_{e \in E}), \psi_s, \psi_t)$ as follows:\end{sloppypar}
\begin{itemize}
\item Let $G = \tG$.
\item Let $\Sigma = (\tSigma \cup \{\sigma^*\}) \times \{0, 1\}$ where $\sigma^*$ is a new character and $\Sigma$ denote two copies of $(\tSigma \cup \{\sigma^*\})$, indexed by the second coordinate. Moreover, for any $\sigma=(\tsigma,a)\in \Sigma$ where $\tsigma\in\tSigma\cup \{\sigma^*\}$ and $a\in\{0,1\}$, we denote by $\sigma_1:=\tsigma$ and $\sigma_2:=a$. 
\item For $e = (u, v) \in E$, we define $C_e : \Sigma \times \Sigma\to \{0,1\}$ as follows:
\begin{align*}
\forall (\sigma^u, \sigma^v)\in \Sigma\times \Sigma,\ \  C_e(\sigma^u, \sigma^v) := 
\begin{cases}
\tC_e(\sigma^u_1, \sigma^v_1) & \text{ if } (\sigma^u_1, \sigma^v_1) \in \tSigma\times \tSigma, \\
\ind[\sigma^u_2 , \sigma^v_2]&\text{ otherwise.}
\end{cases}
\end{align*}
\item Finally, for all $v \in V$, we define $\psi_s(v) := (\sigma^*, 0)$ and $\psi_t(v) := (\sigma^*, 1)$.
\end{itemize}

\paragraph{Completeness.} Suppose that $\val(\tPi) = 1$. That is, there exists an assignment $\tpsi^*$ that satisfies all constraints in $\tPi$. Let $\psi^*_0$ and $\psi^*_1$ be defined by $\psi^*_0(v) := (\tpsi^*(v), 0)$ and $\psi^*_1(v) := (\tpsi^*(v), 1)$ for all $v \in V$. Let $\bpsi$ be a concatenation of any direct sequence from $\psi_s$ to $\psi^*_0$, from $\psi^*_0$ to $\psi^*_1$ and from $\psi^*_1$ to $\psi_t$. It is simple to see that $\val_{\Pi}(\bpsi) = 1$ as desired.

\paragraph{Soundness.} Suppose that $\val(\tPi) < \delta$. Consider any $\bpsi = (\psi_0 = \psi_s, \dots, \psi_p = \psi_t) \in \bPsi(\psi_s \lrsg \psi_t)$ for some $p\in\mathbb{N}$. Let $n_i = |\{v \in V \mid \psi_i(v)_2 = 0\}|$ for all $i = 0, \dots, p$. Since $n_1 = 0, n_p = n$ and $|n_i - n_{i-1}| \leq 1$, there must be $i^*$ such that $n_{i^*} = \lceil n/2 \rceil$. Consider $\psi_{i^*}$. Let $\tpsi:\tV\to\tSigma$ be such that, for all $v \in V$, $\tpsi(v) = \psi_{i^*}(v)_1$ if $\psi_{i^*}(v)_1 \in \tSigma$ and $\tpsi(v)$ can be set arbitrarily otherwise. Furthermore, let $V_0 := \{v \in V \mid \psi_{i^*}(v)_2 = 0\}$ and $V_1 := \{v \in V \mid \psi_{i^*}(v)_2 = 1\}$. We have
\begin{align*}
\val_{\Pi}(\bpsi)\le \val_{\Pi}(\psi_{i^*})
&= \underset{{e = (u, v) \sim E}}{\E}\left[C_e(\psi_{i^*}(u), \psi_{i^*}(v))\right] \\
&\leq \underset{{e = (u, v) \sim E}}{\E}\left[\tC_e(\tpsi(u), \tpsi(v)) + \ind[\sigma^u_2 , \sigma^v_2]\right] \\
&\leq \val_{\tPi}(\tpsi) + \frac{|E[V_0] + E[V_1]|}{|E|} \\
&< \delta + (1/2 + \delta) = 1/2 + \eps,
\end{align*}
where the last inequality follows from $\val(\tPi) < \delta$ and that $\tPi$ is $\delta$-balanced.
\end{proof}

\subsubsection[Hardness of Balanced Gap-2-CSP]{Hardness of Balanced Gap-2-\CSP$_q$}
\label{subsec:balance}

In this subsubsection, we provide a short proof of \Cref{thm:balanced-2csp}. We note that this result seems to be folklore in literature. However, since we are not aware of the result stated exactly in this form, we show how to derive it from an explicitly stated result in~\cite{Moshkovitz14} for completeness. 

\paragraph{Additional Preliminaries.} To state this result, we some additional definitions.

For a distribution $P$ and a possible outcome $x$, we write $P(x)$ to denote the probability that the outcome is $x$. For a set $S$, we write $P(S)$ to denote $\sum_{x \in S} P(S)$. The \emph{total variation (TV) distance} between two distributions $P, Q$ is defined as $d_{TV}(P, Q) := \frac{1}{2} \sum_{x} |P(x) - Q(x)| = \max_S P(S) - Q(S)$.
The \emph{min-entropy} $P$ is defined to be $H_{\infty}(P) := \min_x \log(1/P(x))$. We write $U_S$ to denote the uniform distribution on a set $S$. 

For a bipartite graph $G=(X\uplus Y, E)$ and a distribution $P_X$ on $X$, let $G \circ P_X$ denote the distribution of sampling $x \sim P_X$ and then picking a uniformly random neighbor $y$ of $x$ in $G$.

An \emph{$(\delta, \eps)$-extractor graph} is a bi-regular bipartite graph $G=(X\uplus Y, E)$ that satisfies the following: For any distribution $P_X$ over $X$ with $H_{\infty}(P) \geq \log(\delta|X|)$, we have $d_{TV}(G \circ P_X, U_Y) \leq \eps$, where $U_Y$ is the uniform distribution over $Y$.

Moshkovitz~\cite{Moshkovitz14} gave a transformation from any 2-\CSP\ instance on arbitrary bi-regular graphs to one which is a good extractor, while preserving the value of the instance. This gives the following hardness of 2-\CSP{}s on extractor graphs.

\begin{theorem}[\cite{Moshkovitz14}] \label{thm:fortification}
For any constants $\gamma, \delta > 0$, there exists $q \in \N$ such that $\gapcsp{1}{\delta}{q}$ is \NP-hard even when the constraint graph is an $(\gamma, \gamma^2)$-extractor graph.
\end{theorem}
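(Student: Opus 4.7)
Since the theorem is attributed to Moshkovitz's fortification technique (and the authors promise a derivation from her ``explicitly stated result''), my plan is to start from a standard Gap-2-\CSP\ \NP-hardness and apply a composition with an explicit bipartite extractor, so that the resulting constraint graph itself becomes a $(\gamma,\gamma^2)$-extractor graph in the sense of this paper.

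\textbf{Construction.} First, invoke the \PCP\ theorem to obtain $\gapcsp{1}{s}{q_0}$-\NP-hardness for some $s < 1$ and constant $q_0 \in \N$, on a bi-regular bipartite constraint graph $G_0 = (X_0 \uplus Y_0, E_0)$; bi-regularity can be enforced by standard vertex-cloning. Choose an explicit $(\gamma',{\gamma'}^2)$-extractor $H = (U \uplus W, E_H)$ with $\gamma' \leq \gamma$ of appropriate size (existence via spectral/expander constructions). Form the ``fortified'' 2-\CSP\ whose vertex sets are $X_0 \times U$ and $Y_0 \times W$, with an edge $((x_0,u),(y_0,w))$ iff $(x_0,y_0) \in E_0$ and $(u,w) \in E_H$, and whose constraints check the original constraint $C_{(x_0,y_0)}$ on the first coordinates---possibly composed with a constant-size alphabet-reduction gadget that drives the alphabet up to the final $q$.

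\textbf{Completeness and soundness.} Completeness is immediate by lifting any satisfying assignment $\psi$ of the original instance to the ``constant-across-copies'' assignment $\psi'(x_0,u) := \psi(x_0)$ (and analogously on the $Y$-side). For soundness, given any assignment of value $\geq \delta$ to the fortified instance, perform a plurality decoding across the $U$-copies of each original vertex to extract a candidate assignment of the original instance; the extractor property of $H$ bounds the advantage an adversary can gain from inconsistent copies, forcing the decoded assignment to beat the threshold $s$ in the original instance---contradicting the \NP-hardness starting point provided $\gamma'$ is chosen sufficiently small relative to $\delta$ and $s$.

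\textbf{Main obstacle.} The key technical content is verifying that the composed graph $G_0 \otimes H$ is itself a $(\gamma,\gamma^2)$-extractor, i.e.\ for any subset $S \subseteq X_0 \times U$ of density $\geq \gamma$, a random-walk step in $G_0 \otimes H$ lands within total-variation distance $\gamma^2$ of uniform on $Y_0 \times W$. Intuitively, averaging $S$ over $X_0$-slices yields a dense distribution on $U$ to which $H$'s extractor property applies, and bi-regularity of $G_0$ propagates this mixing to the first coordinate; the challenge is that $G_0$ is adversarial, so one must carefully track how the first-coordinate walk interacts with the $U$-slice structure of $S$. Calibrating $\gamma'$ against $\gamma$ (together with translating between Moshkovitz's original formulation, which is phrased for projection games/label cover, and the extractor-graph definition used here) is the main bookkeeping step.
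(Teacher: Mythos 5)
The paper does not prove \Cref{thm:fortification} at all: it is imported verbatim as a black-box statement from Moshkovitz's fortification paper~\cite{Moshkovitz14}, with no derivation given. (The remark about ``showing how to derive it from an explicitly stated result in~\cite{Moshkovitz14}'' in the text refers to the \emph{next} step, deriving \Cref{thm:balanced-2csp} about $\delta$-balancedness from \Cref{thm:fortification}, not to proving \Cref{thm:fortification} itself.) So you are proposing a proof of a cited theorem, and the comparison is between your sketch and what Moshkovitz actually does.

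Your sketch has a fatal flaw precisely at the point you flag as the ``main obstacle'': the tensor-product graph $G_0 \otimes H$ you build (with an edge between $(x_0,u)$ and $(y_0,w)$ iff $(x_0,y_0)\in E_0$ \emph{and} $(u,w)\in E_H$) is in general \emph{not} an extractor, no matter how good $H$ is, because it inherits whatever bottlenecks $G_0$ has. Concretely, let $G_0$ be the disjoint union of two complete bipartite graphs on halves $X_1\uplus Y_1$ and $X_2\uplus Y_2$; this is bi-regular and arises perfectly well from the \PCP\ theorem after cloning. Take $S := X_1 \times U$, which has density $1/2 \geq \gamma$. The neighbor distribution $G_0\otimes H \circ U_S$ is supported entirely on $Y_1 \times W$, which is only half of $Y_0\times W$, so its TV distance from uniform is at least $1/2$, and $(\gamma,\gamma^2)$-extraction fails for any $\gamma < 1/2$. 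Bi-regularity of $G_0$ buys you nothing here; you would need $G_0$ itself to already be an extractor, which is exactly what the theorem is trying to establish. Thus the approach is circular.

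The actual route in~\cite{Moshkovitz14} is structurally different: the fortified game does \emph{not} take a graph tensor of the original constraint graph with an extractor. Instead, each fortified vertex is a block/tuple of original vertices, the \emph{new constraint graph is (a biregularization of) the extractor graph itself}, and the adjacency structure of the original \CSP\ is absorbed into the constraint predicates (each new edge checks all original constraints induced between the two blocks). This is why the resulting constraint graph can be made an extractor unconditionally, independent of what $G_0$ looks like. A secondary issue in your write-up: starting from $\gapcsp{1}{s}{q_0}$ for some fixed $s<1$, your tensoring does not reduce soundness, so you would already have to start from $\gapcsp{1}{\delta}{q_0}$ (via parallel repetition, say) before composing; the ``alphabet-reduction gadget that drives the alphabet up'' comment does not do this job. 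In short, the construction you describe cannot be repaired by calibrating $\gamma'$, and the obstacle you note at the end is not a bookkeeping step but the point at which the proposal fails.
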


\paragraph{From Extractor to Balancedness.}
Given \Cref{thm:fortification}, it suffices for us to show that a good extractor graph also satisfies balancedness (\Cref{def:balance}). Before we show this, it will be helpful to state the following lemma, which is analogous to the ``expander mixing lemma'' but for extractors.

\begin{lemma} \label{lem:mixing}
Let $G = (X, Y, E)$ be any $(\gamma, \eps)$-extractor graph. Then, for any $X' \subseteq X, Y' \subseteq Y$, we have
\begin{align*}
\left|\frac{|E[X', Y']|}{|E|} - \frac{|X'||Y'|}{|X||Y|}\right| \leq \gamma + \eps
\end{align*}
\end{lemma}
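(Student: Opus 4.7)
The plan is to dispatch the lemma by a direct case analysis on the size of $X'$ relative to $\gamma|X|$. The key observation is that the extractor property is only useful when the source distribution on $X'$ has enough min-entropy, i.e., when $|X'| \geq \gamma|X|$; the other case must be handled by a trivial counting bound. Throughout, we use the bi-regularity of $G$: let $d_X$ denote the common $X$-side degree so that $|E| = |X| \cdot d_X$ and, for any $X' \subseteq X$, $|E[X', Y]| = |X'| \cdot d_X$, i.e., $\frac{|E[X', Y']|}{|E|} = \frac{|X'|}{|X|} \cdot \frac{|E[X', Y']|}{|X'| \cdot d_X}$.

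First, suppose $|X'| < \gamma|X|$. Then $\frac{|X'|\,|Y'|}{|X|\,|Y|} \leq \frac{|X'|}{|X|} < \gamma$, and likewise $\frac{|E[X', Y']|}{|E|} \leq \frac{|E[X', Y]|}{|E|} = \frac{|X'|}{|X|} < \gamma$. Since both quantities lie in $[0, \gamma)$, their difference is at most $\gamma$, which is at most $\gamma + \eps$.

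Second, suppose $|X'| \geq \gamma|X|$. Let $P_X = U_{X'}$, the uniform distribution on $X'$. Then $H_\infty(P_X) = \log|X'| \geq \log(\gamma|X|)$, so the extractor property yields $d_{TV}(G \circ P_X, U_Y) \leq \eps$. By definition of $G \circ P_X$ and bi-regularity, $(G \circ P_X)(Y') = \frac{|E[X', Y']|}{|X'| \cdot d_X}$, and $U_Y(Y') = \frac{|Y'|}{|Y|}$, so $\bigl|(G \circ P_X)(Y') - \frac{|Y'|}{|Y|}\bigr| \leq \eps$. Multiplying both sides by $\frac{|X'|}{|X|} \leq 1$ and using the identity $\frac{|E[X', Y']|}{|E|} = \frac{|X'|}{|X|} \cdot (G \circ P_X)(Y')$, we obtain
\[
\left|\frac{|E[X', Y']|}{|E|} - \frac{|X'|\,|Y'|}{|X|\,|Y|}\right| \leq \frac{|X'|}{|X|} \cdot \eps \leq \eps \leq \gamma + \eps.
\]

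The only real subtlety is the algebraic identification of $(G \circ P_X)(Y')$ as the normalized edge count, which needs bi-regularity; the rest is bookkeeping. There is no substantial obstacle beyond being careful about which side of the bipartition the extractor's source distribution lives on.
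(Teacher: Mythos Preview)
Your proof is correct and follows essentially the same approach as the paper: a case split on whether $|X'| \geq \gamma|X|$, handling the small case by the trivial bound and the large case via the extractor property applied to the uniform distribution on $X'$. The only cosmetic difference is that the paper cites the large-$|X'|$ case as a black-box lemma from Vadhan's survey (\Cref{lem:mixing-vadhan}), whereas you unpack that computation inline.
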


In fact, an almost identical lemma was already shown in~\cite{vadhan-survey}, as stated below. The only difference is this version requires the size of $X'$ to be sufficiently large\footnote{In fact, Vadhan~\cite{vadhan-survey} proves ``if and only if'' statement but for $|X'| = \gamma |X|$. However, it is clear that the forward direction holds for any $|X'| \geq \gamma |X|$}.

\begin{lemma}[{\cite[Proposition 6.21]{vadhan-survey}}]\label{lem:mixing-vadhan}
Let $G = (X, Y, E)$ be any $(\gamma, \eps)$-extractor graph. Then, for any $X' \subseteq X, Y' \subseteq Y$ such that $|X'| \geq \gamma |X|$, we have
\begin{align*}
\left|\frac{|E[X', Y']|}{|E|} - \frac{|X'||Y'|}{|X||Y|}\right| \leq \eps
\end{align*}
\end{lemma}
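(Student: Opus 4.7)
The plan is to instantiate the extractor guarantee on the uniform distribution over $X'$, so that the closeness to uniform in total variation translates directly into the stated mixing bound.

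First I would set $P_X := U_{X'}$, the uniform distribution on $X'$. Since $\max_x P_X(x) = 1/|X'|$ and $|X'| \geq \gamma|X|$ by hypothesis, one has $H_\infty(P_X) = \log |X'| \geq \log(\gamma|X|)$, so $P_X$ qualifies as an input to the $(\gamma,\eps)$-extractor property, yielding $d_{TV}(G \circ P_X, U_Y) \leq \eps$.

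Next I would unpack $(G \circ P_X)(Y')$ combinatorially using bi-regularity. Let $d_L := |E|/|X|$ denote the left-degree. Sampling $x \sim U_{X'}$ and then a uniformly random neighbor $y$ of $x$ puts mass on $Y'$ equal to
\[
(G \circ P_X)(Y') \;=\; \sum_{x \in X'} \frac{1}{|X'|} \cdot \frac{|\{y \in Y' : (x,y) \in E\}|}{d_L} \;=\; \frac{|X|\cdot|E[X',Y']|}{|X'|\cdot|E|}.
\]
Meanwhile $U_Y(Y') = |Y'|/|Y|$. Specializing $d_{TV}(P,Q) = \max_S |P(S)-Q(S)|$ to the set $Y' \subseteq Y$ then gives
\[
\left| \frac{|X|\cdot|E[X',Y']|}{|X'|\cdot|E|} \;-\; \frac{|Y'|}{|Y|} \right| \;\leq\; \eps.
\]

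Finally I would multiply both sides by $|X'|/|X| \leq 1$, which preserves the inequality and produces
\[
\left| \frac{|E[X',Y']|}{|E|} \;-\; \frac{|X'|\cdot|Y'|}{|X|\cdot|Y|} \right| \;\leq\; \eps \cdot \frac{|X'|}{|X|} \;\leq\; \eps,
\]
as desired. I do not expect any real obstacle: the argument is simply the standard dictionary between the probabilistic ``distribution-to-uniform'' form of extractors and the combinatorial ``edge-count versus product'' form. The hypothesis $|X'|\geq\gamma|X|$ enters in exactly one place, namely to certify the min-entropy lower bound that activates the extractor guarantee, and bi-regularity is what permits the clean identification of $(G\circ P_X)(Y')$ with the normalized edge count above.
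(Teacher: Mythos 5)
Your proof is correct. One point of comparison: the paper does not actually prove this lemma at all --- it imports it verbatim from Vadhan's survey (Proposition 6.21), and merely remarks in a footnote that Vadhan states it for $|X'| = \gamma|X|$ while ``it is clear'' the forward direction extends to $|X'| \geq \gamma|X|$. Your argument is the standard dictionary between the distributional and combinatorial forms of extraction, and it has the small added value of directly establishing the $|X'| \geq \gamma|X|$ version that the paper needs, rather than only the equality case: the hypothesis enters exactly once, to certify $H_\infty(U_{X'}) = \log|X'| \geq \log(\gamma|X|)$, after which bi-regularity gives $(G \circ U_{X'})(Y') = \frac{|X|\,|E[X',Y']|}{|X'|\,|E|}$, the total-variation bound applied to the event $Y'$ gives the inequality in ``conditional'' form, and multiplying by $|X'|/|X| \leq 1$ yields the stated bound (in fact the slightly stronger bound $\eps\,|X'|/|X|$). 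All steps are sound, so your proposal is a complete, self-contained justification of a statement the paper treats as a black-box citation.
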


Our version of the lemma follows almost trivially from the one above, as stated below.

\begin{proof}[Proof of \Cref{lem:mixing}]
Consider two cases based on the size of $X'$. If  $|X'| \geq \gamma |X|$, then this follows directly from \Cref{lem:mixing-vadhan}. Otherwise, if $|X'| < \gamma |X|$, we have both $\frac{|X'||Y'|}{|X||Y|} < \gamma$ and $\frac{|E[X', Y']|}{|E|} \leq \frac{|E[X', Y]|}{|E|} < \gamma$ (where the latter is from bi-regularity). Thus, we have  $\left|\frac{|E[X', Y']|}{|E|} - \frac{|X'||Y'|}{|X||Y|}\right| < \gamma$.
\end{proof}


From the above, we can conclude that any good extractor satisfies balancedness:

\begin{lemma} \label{lem:ext-balanced}
Any $(\gamma, \eps)$-extractor graph is $4(\gamma + \eps)$-balanced.
\end{lemma}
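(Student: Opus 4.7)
The plan is to combine the mixing lemma (\Cref{lem:mixing}) with an elementary algebraic identity and an integrality trick.

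First, I use the bipartite structure of the extractor graph $G = (X \uplus Y, E)$: every edge joins $X$ to $Y$, so for any subset $V_i \subseteq V$, writing $X_i := X \cap V_i$ and $Y_i := Y \cap V_i$, we have $E[V_i] = E[X_i, Y_i]$. Applying \Cref{lem:mixing} to both $(X_1, Y_1)$ and $(X_2, Y_2)$ and summing yields
\[
|E[V_1]| + |E[V_2]| \;\leq\; |E| \cdot \left(\frac{|X_1||Y_1| + |X_2||Y_2|}{|X||Y|} + 2(\gamma + \eps)\right).
\]
So it suffices to show $|X_1||Y_1| + |X_2||Y_2| \leq |X||Y|/2$, which would give the claimed $4(\gamma+\eps)$-balancedness exactly.

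Next, I verify (by direct expansion, using $|X_1|+|X_2|=|X|$ and $|Y_1|+|Y_2|=|Y|$) the identity
\[
2\bigl(|X_1||Y_1| + |X_2||Y_2|\bigr) \;=\; |X||Y| + (|X_2|-|X_1|)(|Y_2|-|Y_1|),
\]
which reduces the task to showing $(|X_2|-|X_1|)(|Y_2|-|Y_1|) \leq 0$.

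The crux of the argument — and the step I'd expect to require the most care — is this last nonpositivity claim, since, \emph{a priori}, nothing prevents $|X_1|$ and $|Y_1|$ from both being much smaller than $|X|/2$ and $|Y|/2$ respectively (which would make the product positive). The key observation to unlock it is integrality: set $u := |X_2|-|X_1|$ and $v := |Y_2|-|Y_1|$, and note $u + v = |V_2| - |V_1|$. The balancedness hypothesis $|V_1|, |V_2| \leq \lceil |V|/2 \rceil$ combined with $|V_1|+|V_2|=|V|$ forces $|u + v| \leq 1$. Since $u, v \in \mathbb{Z}$, this rules out $u, v$ both being nonzero with the same sign (which would make $|u + v| \geq 2$). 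The remaining cases are: $u + v = 0$, giving $uv = -u^2 \leq 0$; or $u + v = \pm 1$ with at least one of $u, v$ zero, giving $uv = 0$; or $u, v$ of opposite signs, giving $uv \leq 0$. In every case $uv \leq 0$, and substituting back gives $|E[V_1]| + |E[V_2]| \leq (1 + 4(\gamma+\eps))|E|/2$, establishing the lemma.
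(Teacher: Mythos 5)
Your proof is correct and follows essentially the same route as the paper's: apply \Cref{lem:mixing} to the two diagonal blocks, use the algebraic identity $2(|X_1||Y_1|+|X_2||Y_2|) = |X||Y| + (|X_1|-|X_2|)(|Y_1|-|Y_2|)$, and use integrality plus $||V_1|-|V_2||\le 1$ to kill the cross term. Your case analysis in the final step is actually slightly more careful than the paper's, which states only that $|X_1|-|X_2|$ and $|Y_1|-|Y_2|$ ``cannot be both positive'' (omitting the symmetric both-negative case, which your sign argument handles explicitly), but the underlying idea is the same.
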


\begin{proof}
Consider any partition of $V = X \uplus Y$ into two balanced parts $V_1, V_2$ (i.e. such that $||V_1| - |V_2|| \leq 1$. Let $X_1 := V_1 \cap X, X_2 := V_2 \cap X, Y_1 := V_1 \cap Y$ and $Y_2 := V_2 \cap Y$. Now, we have
\begin{align*}
\frac{|E[V_1]| + |E[V_2]|}{|E|} &= \frac{|E[X_1, Y_1]|}{|E|} + \frac{|E[X_2, Y_2]|}{|E|} \\
 &\leq 2(\gamma + \eps) + \frac{|X_1||Y_1| + |X_2||Y_2|}{|X||Y|} & \text{(\Cref{lem:mixing})} \\
 &= 2(\gamma + \eps) + \frac{1}{2} + \frac{1}{2} \cdot \frac{(|X_1| - |X_2|)(|Y_1| - |Y_2|)}{|X| |Y|} \\
 &\leq 2(\gamma + \eps) + \frac{1}{2},
\end{align*}
where the last inequality follows from the fact that $||V_1| - |V_2|| \leq 1$ (which implies that $|X_1| - |X_2|, |Y_1| - |Y_2|$ cannot be both positive).
\end{proof}

Proof of \Cref{thm:balanced-2csp} is now complete by just combining the above results.

\begin{proof}[Proof of \Cref{thm:balanced-2csp}]
This follows immediately from \Cref{thm:fortification} with $\gamma = \delta/8$ since \Cref{lem:ext-balanced} asserts that any $(\gamma, \gamma^2)$-extractor graph is $\delta$-balanced.
\end{proof}

We end this subsection by noting that, when the constraint graph is the complete bipartite graph, then the instance is 0-balanced. This corresponds to the so-called \emph{free games}, which admits a PTAS for constant alphabet size $q$ but becomes hard to approximate when $q$ is large~\cite{AaronsonIM14,ManurangsiR17}. Such a hardness result is weaker than~\Cref{thm:balanced-2csp} in two ways: $q$ has to be super constant and the hardness is only under the Exponential Time Hypothesis (ETH)~\cite{IP01,IPZ01}.

\subsection[NP-Hardness of GapMinMax-2-CSP]{\NP-Hardness of GapMinMax-2-\CSP$_q$}

In this subsection, we will prove our (nearly) tight \NP-hardness of approximation of   GapMinMax-2-\CSP$_q$\ (\Cref{thm:minmax-csp-np-hardness}).

To do so, we will need the following hardness of Gap-2-\CSP$_q$\ in terms of partial assignment. Note that this can be easily derived from taking any  \PCP\ that reads $O_\delta(1)$-bits with $\delta$-soundness and plug it into the FGLSS reduction~\cite{FGLSS96}.
(This is usually stated in terms of the hardness of Maximum Clique problem, but it can be stated in the form below.)

\begin{theorem}[\cite{ALMSS98,FGLSS96}] \label{thm:csp-part}
For any $\delta > 0$, there exists $q \in \N$ such that, it is \NP-hard, given a Gap-2-\CSP$_q$ instance $\Pi = (G = (V, E), \Sigma, \{C_e\}_{e \in E})$, to distinguish between $\val(\Pi) = 1$ or $\maxpar(\Pi) < \delta \cdot |V|$.
\end{theorem}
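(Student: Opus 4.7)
The plan is to derive this by the standard FGLSS-style reduction~\cite{FGLSS96} applied on top of the PCP theorem of ALMSS. Fix $\delta > 0$ and invoke the PCP theorem to obtain, for a chosen \NP-complete language (say $3$SAT), a non-adaptive verifier with perfect completeness, soundness $\delta/2$, logarithmic randomness $r(n) = O(\log n)$, and constant query complexity $q = q(\delta)$.

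Given an input $3$SAT instance $\phi$, I would build a 2-\CSP\ instance $\Pi = (G = (V, E), \Sigma, \{C_e\}_{e \in E})$ as follows. The variables $V$ are indexed by the random strings of the verifier, so $|V| = 2^{r(n)} = \poly(n)$. For each random string $\rho$, let $Q_\rho \subseteq [\ell]$ denote the queried proof-bit positions (where $\ell$ is the proof length), and restrict the alphabet of $v_\rho$ to the \emph{accepting} local views $\alpha \in \{0,1\}^q$ on $\rho$; thus $|\Sigma| \leq 2^q$, a constant depending only on $\delta$. For every pair of random strings $\rho_1 \neq \rho_2$ with $Q_{\rho_1} \cap Q_{\rho_2} \neq \emptyset$, add an edge $e = (v_{\rho_1}, v_{\rho_2})$ whose constraint $C_e$ checks that the two assigned local views agree on the bits indexed by $Q_{\rho_1} \cap Q_{\rho_2}$.

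For completeness, if $\phi$ is satisfiable, take a PCP proof $\pi$ on which the verifier accepts for every $\rho$ and set $\psi(v_\rho) := \pi|_{Q_\rho}$; this is a satisfying assignment of $\Pi$, so $\val(\Pi) = 1$. For soundness, I argue contrapositively: suppose there is a satisfying partial assignment $\psi$ of $\Pi$ with support $S \subseteq V$ of size $|S| \geq \delta |V|$. Define a proof $\pi^* \in \{0,1\}^\ell$ by $\pi^*[i] := \psi(v_\rho)[i]$ for any $v_\rho \in S$ with $i \in Q_\rho$ (well-defined by the pairwise consistency constraints that $\psi$ satisfies), and setting the remaining bits arbitrarily. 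Every $v_\rho \in S$ then contributes an accepting local view of $\pi^*$, so the verifier accepts $\pi^*$ on at least a $\delta$ fraction of random strings, contradicting PCP soundness $\delta/2$.

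The subtlety worth emphasizing is that $\maxpar$ only demands satisfaction of constraints between \emph{pairs of assigned variables}, which is precisely what the FGLSS-style construction captures: the support $S$ of any satisfying partial assignment automatically yields a globally consistent sub-proof that can be completed without hurting acceptance. No additional gap amplification is required beyond what the PCP theorem gives directly, so the main work is the routine bookkeeping to confirm polynomial-time construction and that the alphabet size stays bounded in terms of $\delta$ only.
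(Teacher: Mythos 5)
Your proof is correct and takes essentially the same route the paper indicates: the paper itself only sketches this result as "plug a PCP with $O_\delta(1)$ queries and $\delta$-soundness into the FGLSS reduction, reformulating the usual Max-Clique statement in terms of $\maxpar$," which is exactly what you carry out in detail. The one thing worth flagging for precision is that "restrict the alphabet of $v_\rho$ to accepting local views" is a per-variable alphabet, whereas the paper's 2-\CSP\ definition uses a single alphabet $\Sigma$; this is routine to patch (e.g., have $C_e$ itself reject any label that is not an accepting local view for its endpoint's random string), but doing it cleanly matters for the soundness step, since otherwise a random string whose query set is disjoint from every other could sit in the support of a satisfying partial assignment with a non-accepting label and fail to contribute to $\pi^*$. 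Also, getting soundness error down to $\delta/2$ does require $O(\log(1/\delta))$-fold sequential repetition of the base PCP verifier (which is why $q$ must depend on $\delta$), so "no additional gap amplification" is a slight overstatement, though the amplification needed is indeed the routine kind.
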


We can now give a reduction from the hard Gap-2-\CSP$_q$\ instance above to prove \Cref{thm:minmax-csp-np-hardness} in a similar (but slightly simpler) manner as in the proof of \Cref{thm:maxmin-csp-np-hardness}.

\begin{proof}[Proof of \Cref{thm:minmax-csp-np-hardness}]
Given an instance $\tPi = (\tG = (\tV, \tE), \tSigma, \{\tC_e\}_{e \in \tE})$ of 2-\CSP\ from \Cref{thm:csp-part}. We assume w.l.o.g. that $\tG$ is a complete graph, i.e. $\tE = \binom{\tV}{2}$, as we can add trivial constraints over the non-edges in $\tE$ without changing the $\minlab$ value. We create an 2-\CSP\ Reconfiguration instance $(\Pi = (G = (V, E), \Sigma, \{C_e\}_{e \in E}), \psi_s, \psi_t)$ exactly as in the proof of \Cref{thm:maxmin-csp-np-hardness}.

\paragraph{Completeness.} Suppose that $\val(\tPi) = 1$. As shown in the proof of \Cref{thm:maxmin-csp-np-hardness}, there exists a reconfiguration assignment sequence $\bpsi = (\psi_0 = \psi_s, \dots, \psi_p = \psi_t)$ such that $\val_{\Pi}(\psi_i) = 1$ for all $i \in [p]$. We create a reconfiguration multi-assignment sequence $\bpsi' = (\psi_0, \psi'_1, \psi_1, \dots, \psi'_p, \psi_p)$ where we let $\psi'_i(v) = \{\psi_{i - 1}(v), \psi_i(v)\}$ for all $v \in V$ and $i \in [p]$. It is clear that this is a satisfying sequence and that $|\bpsi'| = |V| + 1$ as desired.

\paragraph{Soundness.} 
Let $\delta = \eps/2$.
We may assume w.l.o.g. that $|V| \geq 4/\eps$.
Suppose that $\maxpar(\Pi) < \delta \cdot |V|$. Consider any $\bpsi = (\psi_0 = \psi_s, \dots, \psi_p = \psi_t) \in \bPsi^{\SAT(\Pi)}(\psi_s \lrsg \psi_t)$. Let $i \in [p]$ be the smallest index\footnote{Such an index always exists since $\psi_t$ satisfies this condition.} for which $\psi_i(v) \ne \{(\sigma^*, 0)\}$ for all $v \in V$. Consider $\psi_{i - 1}$; there must exist $v_s$ such that $\psi_{i-1}(v_s) = \{(\sigma^*, 0)\}$. Due to the definition of $C_{(v_s, u)}$ and since $\psi_{i-1}$ satisfies $\Pi$, it must be the case that $\psi_{i-1}(u) \ne \{(\sigma^*, 1)\}$ for all $u \in V$. As a result, we also have that $\psi_i(u) \ne \{(\sigma^*, 1)\}$.

Now, consider $V_1 := \{v \mid |\psi_i(v)| = 1\}$. For each $v \in V_1$, let $\sigma_v$ denote the only element of $\psi(v)$. From the above paragraph, we must have $(\sigma_v)_1 \in \tSigma$ for all $v \in V_1$. Thus, we may define the $\psi': V \to \Sigma \cup \{\perp\}$ by
\begin{align*}
\psi'(v) =
\begin{cases}
(\sigma_v)_1 & \text{ if } v \in V_1, \\
\perp & \text{otherwise.}
\end{cases}
\end{align*}
Since $\psi$ satisfies $\Pi$, $\psi'$ is a satisfying partial assignment to $\tPi$. From our assumption that $\maxpar(\Pi) < \delta \cdot |V|$, we must have $|V_1| < \delta \cdot |V|$. As a result, we have
\begin{align*}
|\psi_i| \geq 2 \cdot |V \setminus V_1| + |V_1| = 2 \cdot |V| - |V_1| > (2 - \delta) \cdot |V| \geq (2 - \eps) \cdot (|V| + 1).
\end{align*}
This implies that $|\bpsi| > (2 - \eps) \cdot (|V| + 1)$ as claimed.
\end{proof}

\subsection{MinMax Set Cover Reconfiguration}

In this subsection, we will prove our (nearly) tight \NP-hardness of approximation of the  Set Cover Reconfiguration problem (\Cref{thm:minmax-setcover-np-hardness}).

It turns out that the classic reduction from Gap-2-\CSP$_q$\ to Set Cover of Lund and Yannakakis~\cite{LY94} also yields a gap preserving reduction from  the GapMinMax-2-\CSP$_q$ problem to the Set Cover Reconfiguration problem. This reduction was also used by Ohsaka~\cite{ohsaka2023gap}.
We summarize the properties of the reduction below.\footnote{Note that a different ``hypercube gadget'' reduction of Feige~\cite{Feige98} also have similar properties. See e.g.~\cite{ChalermsookCKLM20,SLM19} for description of this reduction in the non-reconfiguration setting in the MinLabel terminology.}

\begin{theorem}[\cite{LY94,ohsaka2023gap}]
There is a reduction that takes in as input a Gap-2-\CSP$_q$ instance $\Pi = (G = (V, E), \Sigma, \{C_e\}_{e \in E})$ and produces a Set Cover  instance $(S_{v,\sigma})_{v \in V, \sigma \in \Sigma}$ such that
\begin{itemize}
\item a multi-assignment $\psi: V \to \cP(\Sigma)$ satisfies $\Pi$ if and only if $\{S_{v, \sigma}\}_{v \in V, \sigma \in \psi(v)}$ is a set cover
, and,
\item $m = |V|$ and $|S_i| \leq |E| \cdot 2^q$ for all $i \in [m]$.
\end{itemize} 
Moreover, the algorithm runs in polynomial time in $(|V| + |E|) \cdot 2^q$.
\end{theorem}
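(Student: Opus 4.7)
The plan is to implement the standard Lund--Yannakakis style partition gadget on a per-edge basis, in its asymmetric form. For each edge $e = (u, v) \in E$ with an arbitrary fixed orientation, I would introduce a local universe $U_e := \{A \mid A \subseteq \Sigma\}$ of $2^q$ elements (one per subset $A$), and take the global universe to be $U := \biguplus_{e \in E} U_e$, of size at most $|E| \cdot 2^q$. The sets $S_{w, \sigma}$ are built edge by edge: for each edge $e = (u, v)$ incident to $w$, if $w = u$ I include an element $A \in U_e$ in $S_{u, \sigma}$ whenever $\sigma \notin A$, while if $w = v$ I include $A \in U_e$ in $S_{v, \sigma}$ whenever there exists $\tau \in A$ with $C_e(\tau, \sigma) = 1$. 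Each $S_{w, \sigma}$ then receives at most $2^q$ elements per incident edge, matching the required bound $|S_i| \leq |E| \cdot 2^q$, and the whole construction plainly runs in $\poly((|V| + |E|) \cdot 2^q)$ time.

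The heart of the argument is the satisfaction iff, which I would reduce to a per-edge statement via the following ``uncovered'' characterization: for a multi-assignment $\psi$, an element $A \in U_e$ fails to be covered by $\{S_{w, \tau}\}_{w \in V, \tau \in \psi(w)}$ exactly when (a) $\psi(u) \subseteq A$ (so no $S_{u, \tau}$ with $\tau \in \psi(u)$ hits $A$) and (b) no pair $(\tau, \sigma) \in A \times \psi(v)$ satisfies $C_e$ (so no $S_{v, \sigma}$ with $\sigma \in \psi(v)$ hits $A$). For the ``$\Rightarrow$'' direction, I would fix for each edge a witness $\tau^* \in \psi(u), \sigma^* \in \psi(v)$ with $C_e(\tau^*, \sigma^*) = 1$; then any $A \in U_e$ is covered because either $\psi(u) \not\subseteq A$, so some $\tau \in \psi(u) \setminus A$ gives $A \in S_{u, \tau}$, or else $\psi(u) \subseteq A$, in which case $\tau^* \in A$ and hence $A \in S_{v, \sigma^*}$. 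For the contrapositive of ``$\Leftarrow$'', if $\psi$ violates some edge $e = (u, v)$, the canonical choice $A := \psi(u) \in U_e$ simultaneously satisfies (a) trivially and (b) because the violation means no pair in $\psi(u) \times \psi(v) = A \times \psi(v)$ is $C_e$-satisfying, so $A$ is uncovered.

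I expect the only real subtlety to be the asymmetry of the gadget and the degenerate case $\psi(u) = \emptyset$; the latter is handled by taking $A = \emptyset$ in the contrapositive, since no $S_{u, \tau}$ is then selected at all and condition (b) holds vacuously. A naive symmetric gadget indexed by pairs $(A, B) \in 2^{\Sigma} \times 2^{\Sigma}$ would also yield the iff but would blow up the per-edge universe to $2^{2q}$, exceeding the stated size bound; the asymmetric choice above is the standard way around this. Crucially, since the statement only asserts an exact iff rather than a gap-preserving reduction, no partition system or gap amplification ingredient is needed, so once the gadget is in place both directions are short and essentially mechanical.
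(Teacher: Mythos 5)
Your construction and verification are correct. You have essentially reconstructed Feige's ``hypercube gadget'': the per-edge universe $U_e \cong 2^{\Sigma}$, the left-vertex sets $S_{u,\sigma} = \{A : \sigma \notin A\}$, and the right-vertex sets $S_{v,\sigma} = \{A : \exists \tau \in A,\ C_e(\tau,\sigma)=1\}$ are precisely that gadget written in subset notation, and your case analysis (covered iff $\psi(u)\not\subseteq A$ or $A$ meets some $\tau$ compatible with $\psi(v)$, with the canonical uncovered witness $A = \psi(u)$ under violation) is the standard argument. The paper states this theorem with a citation to Lund--Yannakakis and Ohsaka rather than giving its own proof, and it explicitly notes in a footnote that Feige's hypercube gadget achieves the same properties, so your route is one of the two the paper itself endorses; since only an exact iff and polynomial size are claimed (no gap amplification and hence no partition-system machinery), the hypercube gadget is arguably the cleaner choice. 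Two small remarks: the orientation of $e = (u,v)$ should not be ``arbitrary'' but the one inherited from the hyperedge on which $C_e$ is defined, so that $C_e(\tau,\sigma)$ is evaluated in the right order; and the theorem's claim ``$m = |V|$'' is a typo in the paper --- both your construction and the cited ones produce $m = |V|\cdot|\Sigma|$ sets (one per variable-value pair, which is what the indexing $(S_{v,\sigma})_{v\in V,\sigma\in\Sigma}$ already says), so you were right not to try to establish that equality.
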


Notice that the first property implies that each set cover reconfiguration sequence from $T_s := \{S_{v, \sigma}\}_{v \in V, \sigma \in \psi_s(v)}$ to $T_t := \{S_{v, \sigma}\}_{v \in V, \sigma \in \psi_t(v)}$ in the set cover instance has a one-to-one correspondence with a satisfying reconfiguration multi-assignment
sequence from $\psi_s$ to $\psi_t$ in $\Pi$ where the size is preserved. As a result, plugging the above into \Cref{thm:minmax-csp-np-hardness}, we immediately arrive at \Cref{thm:minmax-setcover-np-hardness}.

\section[Approximate Algorithm for GapMaxMin-2-\CSP]{Approximate Algorithm for GapMaxMin-2-\CSP$_q$}

In this section, we give the approximation algorithm for GapMaxMin-2-\CSP$_q$ (\Cref{thm:approx-algo-2csp}). Our main result is actually a structural theorem showing that, in any graph, we can find sequence of downward subsets of the vertices such that at most roughly half edges are cut by these sets:

\begin{theorem} \label{thm:sequence-balanced-full}
For any graph $G = (V, E)$ with $m$ edges, there exists a downward sequence $V = S_0 \supsetneq \dots \supsetneq S_n = \emptyset$ such that $\underset{i \in [n]}{\max}\ \left|E[S_i, V \setminus S_i]\right| \leq m/2 + 7m^{4/5}$. Furthermore, such a sequence can be computed in (randomized) polynomial time.
\end{theorem}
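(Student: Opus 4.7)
The plan is to follow the strategy outlined in Section~\ref{sec:proof-overview}: partition $V$ by degree and handle the two parts differently. Fix the threshold $D := m^{3/5}$ and set $V_H := \{v \in V : \deg_G(v) > D\}$ and $V_L := V \setminus V_H$. Because $\sum_{v \in V_H} \deg_G(v) \le 2m$, we have $|V_H| \le 2m/D = 2m^{2/5}$, which gives the crucial cheap bound $|E[V_H]| \le \binom{|V_H|}{2} \le 2m^{4/5}$.

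To construct the sequence, first draw a uniformly random permutation $\pi$ of $V_L$ and let $L^j := V_L \setminus \{\pi(1),\dots,\pi(j)\}$. For each $v \in V_H$ let $\deg_L(v)$ be its number of $V_L$-neighbors and define $j^*(v)$ to be the smallest $j$ such that at most $\deg_L(v)/2$ of $v$'s $V_L$-neighbors lie in $L^j$ (set $j^*(v):=0$ if $\deg_L(v)=0$). The final sequence $S_0 = V,\, S_1,\, \dots,\, S_n = \emptyset$ is obtained by interleaving: first remove every $v \in V_H$ with $j^*(v)=0$, then $\pi(1)$ followed by every $v$ with $j^*(v)=1$ (in any order), then $\pi(2)$ and every $v$ with $j^*(v)=2$, and so on.

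The analysis decomposes the cut $|E[S_i, V\setminus S_i]|$ at each step $i$ into three pieces: edges inside $V_L$, edges between $V_L$ and $V_H$, and edges inside $V_H$. The $V_H$-internal piece is bounded trivially by $|E[V_H]| \le 2m^{4/5}$. For the $V_L$-$V_H$ piece, the voting rule guarantees that whenever $v\in V_H$ is in $S_i$ strictly more than $\deg_L(v)/2$ of its $V_L$-neighbors are also in $S_i$ (so its contribution to the cut is at most $\deg_L(v)/2$), and whenever $v \notin S_i$ at most $\deg_L(v)/2$ are in $S_i$ (same bound); a short case check of the ``transient'' phase, where $\pi(j)$ has just been removed but not all $v$ with $j^*(v)=j$ have yet been processed, shows that the per-vertex overshoot is only a constant, so this piece is at most $|E[V_L,V_H]|/2 + O(|V_H|) = |E[V_L,V_H]|/2 + O(m^{2/5})$. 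For the $V_L$-internal piece, applying Azuma--Hoeffding to the Doob martingale that exposes $\pi$ one coordinate at a time (each revealed coordinate shifts the cut by at most $D$) and union-bounding over the $|V_L|+1$ prefixes yields, with high probability over $\pi$, the bound $|E[V_L]|/2 + O(m^{4/5})$ uniformly in $j$.

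Summing the three pieces gives $|E[S_i,V\setminus S_i]| \le \tfrac{|E[V_L]|}{2} + \tfrac{|E[V_L,V_H]|}{2} + |E[V_H]| + O(m^{4/5}) = \tfrac{m}{2} + \tfrac{|E[V_H]|}{2} + O(m^{4/5}) \le \tfrac{m}{2} + O(m^{4/5})$, as required, and the construction is clearly polynomial time. The main obstacle is pushing the concentration argument in the $V_L$-internal piece to recover an explicit constant like $7m^{4/5}$ without logarithmic slack; this calls for replacing the pessimistic Azuma variance $|V_L| D^2$ by the sharper $\sum_{v\in V_L} \deg_L(v)^2 \le 2D\cdot|E[V_L]| = O(mD) = O(m^{8/5})$ inside a Bernstein-type inequality, together with a chaining step that exploits the $D$-Lipschitzness of $j \mapsto |E[V_L]\cap \mathrm{cut}(L^j)|$ to restrict the union bound to $O(m^{1/5})$ representative values of $j$.
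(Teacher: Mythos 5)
Your high-level decomposition (split by degree threshold $\Theta(m^{3/5})$, bound $E[V_H]$ trivially, use the low-degree vertices to ``vote'' on when each high-degree vertex should leave) is the same as the paper's, and your voting rule for $V_H$ is essentially identical to the paper's ``high-degree correction'' step. The genuine divergence, and the gap, is in how you handle the low-degree subgraph.

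You propose: take a uniformly random permutation of $V_L$ and prove that \emph{every} prefix cut is at most $|E[V_L]|/2 + O(m^{4/5})$ with positive probability, via a martingale bound plus a union bound over prefixes. As you yourself flag, this does not close cleanly. With Azuma the increment bound $D = m^{3/5}$ and $|V_L| = \Theta(m)$ give deviation $\Theta(m^{11/10})$, which is vacuous. Upgrading to a Bernstein/Freedman-type bound with variance proxy $\sum_v \deg_L(v)^2 = O(Dm) = O(m^{8/5})$ brings the typical deviation down to $\Theta(m^{4/5})$ — but then you still need the tail to beat a union bound over the checkpoints, and even with your chaining reduction to $\Theta(m^{1/5})$ checkpoints you must pay a factor $\sqrt{\log(m^{1/5})}$, i.e.\ you land at $m/2 + \Theta(m^{4/5}\sqrt{\log m})$, not $m/2 + 7m^{4/5}$. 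This weaker bound would actually still suffice for the downstream approximation guarantee in \Cref{thm:approx-algo-2csp} (the additive term divided by $m$ still vanishes), but it does not prove \Cref{thm:sequence-balanced-full} as stated, and the chaining step you sketch does not, on its own, remove the logarithmic slack.

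The paper sidesteps uniform concentration entirely. For the low-degree subgraph it proves \Cref{lem:balanced-partition}: by Chebyshev, a \emph{single} random bipartition of $V_L$ simultaneously (with constant probability) has cut at most $m_1/2 + \sqrt{m_1}$ and degree-balanced sides, $\sum_{v\in V_i}\deg(v) \le m_1 + 2\sqrt{m_1\Delta}$. Then \Cref{lem:balanced-rearrangement-low-deg} extends this one bipartition to a full downward sequence by a \emph{greedy local-improvement} argument in both directions: if the cut ever exceeded $m_1/2 + \sqrt{m_1\Delta} + \Delta$, degree balance forces the existence of a vertex whose move strictly decreases the cut, contradicting greediness. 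This is a deterministic potential argument that controls all $n_1$ prefixes at once with no union bound and therefore no logarithmic loss. You should either adopt that greedy-extension argument for the $V_L$ part, or explicitly weaken the theorem statement to allow a $\sqrt{\log m}$ factor and check that this still supports the corollary.
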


Note that it is not clear if the $\Theta(m^{4/5})$ additive factor is tight. The best known lower bound we are aware of is $m/2 + \Theta(\sqrt{m})$ which happens when we take an $n$-clique such that $n$ is odd.

We also note that \Cref{thm:sequence-balanced-full} is an improvement on a similar theorem in \cite{ohsaka2023gap} where $m/2$ is replaced with $3m/4$ (and different lower order term). Using a similar strategy, we can immediately get an approximation algorithm for GapMaxMin-2-\CSP$_q$, as formalized below.

\begin{proof}[Proof of \Cref{thm:approx-algo-2csp}]
Let $(\Pi = (G = (V, E), \Sigma, \{C_e\}_{e \in E}), \psi_s, \psi_t)$ be the input instance of GapMaxMin-2-\CSP$_q$. If $m \leq 10^6/\eps$, then use the exponential-time exact algorithm to solve the problem. Otherwise, use \Cref{thm:sequence-balanced-full} to first find a sequence $V = S_0 \supsetneq \dots \supsetneq S_n = \emptyset$ such that $\underset{i \in [n]}{\max}\ \left|E[S_i, V \setminus S_i]\right| \leq m/2 + 7m^{4/5}$. We define the direct reconfiguration assignment sequence $\psi_0, \dots, \psi_n$ by 
\begin{align*}
\psi_i(v) = 
\begin{cases}
\psi_s(v) &\text{ if } v \in S_i, \\
\psi_t(v) &\text{ otherwise.}
\end{cases}
\end{align*}
It is simple to see that for all $i \in [n]$, we have $\val_{\Pi}(\psi_i) \geq \frac{1}{|E|} \cdot \left(|E| - |E[S_i, V \setminus S_i]|\right) \geq 1/2 - 7/m^{1/5} \geq 1/2 - \eps$. This completes the proof.
\end{proof}

The rest of this section is dedicated to the proof of \Cref{thm:sequence-balanced-full}.

\subsection{Low-Degree Case}

At a high-level, it seems plausible that a random sequence satisfies this property with $1 - o(1)$ probability. However, this is not true: if our graph is a star, then, with probability $1 - \gamma$, the maximum cut size $\max_{i \in [n]} |E[S_i, V \setminus S_i]|$ is at least $1/2 + \Omega(\gamma)$ of the entire graph. The challenge in this setting is the high-degree vertex. Due to this, we start by assuming that the max-degree of the graph is bounded and prove the following:

\begin{lemma} \label{lem:balanced-rearrangement-low-deg}
For any graph $G = (V, E)$ with $m$ edges such that each vertex has degree at most $\Delta$, there exists a downward sequence $V = S_0 \supsetneq \dots \supsetneq S_n = \emptyset$ such that $\underset{i \in [n]}{\max}\ \left|E[S_i, V \setminus S_i]\right| \leq m/2 + \sqrt{m\Delta} + \Delta$. Furthermore, such a sequence can be computed in (randomized) polynomial time.
\end{lemma}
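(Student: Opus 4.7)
The plan is to apply the probabilistic method. I would sample a uniformly random permutation $\pi$ of $V$ and define $S_i := \{\pi(i+1), \ldots, \pi(n)\}$; then $V = S_0 \supsetneq S_1 \supsetneq \cdots \supsetneq S_n = \emptyset$ automatically, so I only need to control the cut sizes $X_i := |E[S_i, V \setminus S_i]|$. Linearity of expectation is immediate: for each edge $e = (u,v)$, $e \in E[S_i, V \setminus S_i]$ iff exactly one of $u, v$ is among $\pi(1), \ldots, \pi(i)$, an event of probability $\frac{2i(n-i)}{n(n-1)} \leq \frac{1}{2}$. Summing over $e \in E$ gives $\mathbb{E}[X_i] \leq m/2$ uniformly in $i$.

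The heart of the argument is a concentration bound for $X_i$ at scale $\sqrt{m\Delta}$. Writing $X_i = \sum_{e \in E} Z_e$ with $Z_e = \mathbf{1}[e \in E[S_i, V \setminus S_i]]$, only pairs of edges sharing an endpoint contribute non-negligible covariance under the random-permutation measure, and such pairs number at most $\sum_v \binom{\deg(v)}{2} \leq m\Delta$; this yields $\mathrm{Var}(X_i) = O(m\Delta)$. Equivalently, a Doob/swap-martingale exposition of $\pi$ has increments bounded by $2\Delta$, since swapping two entries of $\pi$ can flip the cut-status only of edges incident to the two swapped vertices. Either viewpoint gives, for each individual $i$, a tail bound that pushes $X_i$ within $O(\sqrt{m\Delta})$ of $m/2$ with good probability.

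To handle all $n$ indices simultaneously I would exploit the fact that $|X_i - X_{i-1}| \leq \Delta$, since adding one vertex $v$ to $V \setminus S_i$ can only alter the cut-status of edges incident to $v$. This $\Delta$-Lipschitzness in $i$ means that it suffices to verify the bound on a sparse subgrid of indices (of size roughly $m/\Delta$), and the additive $+\Delta$ summand in the target bound absorbs the slack between consecutive grid points. The key obstacle is delivering the concentration step with a genuinely $\log n$-free $\sqrt{m\Delta}$ deviation: naive Chebyshev plus union bound loses a $\sqrt{n}$ factor, and Azuma a $\sqrt{\log n}$ factor, so the $\Delta$-Lipschitz-in-$i$ observation is what makes the discretized union bound essentially free and justifies precisely the additive $\sqrt{m\Delta} + \Delta$ term. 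Finally, since the probabilistic argument succeeds with constant positive probability, sampling $\pi$ uniformly at random and verifying the bound gives the claimed randomized polynomial-time construction.
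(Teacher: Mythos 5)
Your approach (sample a uniformly random permutation $\pi$, set $S_i := \{\pi(i+1),\ldots,\pi(n)\}$, and argue by concentration) is genuinely different from the paper's, which avoids a random \emph{sequence} entirely: the paper draws a single random \emph{middle set} $S_{n_1}$ satisfying both a bisection bound $|E[V_1,V_2]|\le m/2+\sqrt{m}$ and a degree-balance bound $\sum_{v\in V_j}\deg(v)\le m+2\sqrt{m\Delta}$, then fills out the sequence deterministically and greedily in both directions, with a contradiction argument showing that the cut can never rise past $m/2+\sqrt{m\Delta}+\Delta$ because the degree balance guarantees that whenever the cut is large some vertex can be moved to strictly decrease it. The key point is that the paper needs the randomness to succeed only at \emph{one} index, so there is no union bound over $i$ at all.

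That contrast is exactly where the gap in your argument lies. You correctly identify the obstacle — ``delivering the concentration step with a genuinely $\log n$-free $\sqrt{m\Delta}$ deviation'' — but the last sentence asserting that the $\Delta$-Lipschitz-in-$i$ observation ``makes the discretized union bound essentially free'' does not follow, and nothing in the proposal delivers it. Concretely: (i) Chebyshev gives per-index failure probability $O(m\Delta/t^2)$, so to union bound over even $m/\Delta$ grid points you would need $t = \Omega(m)$, which is vacuous; (ii) Azuma/McDiarmid via a swap martingale with increments $O(\Delta)$ gives per-index failure probability $\exp(-\Theta(t^2/(n\Delta^2)))$, which only reaches a constant at $t = \Theta(\Delta\sqrt{n})$, already as large as $\sqrt{m\Delta}$ only when $m = \Theta(n\Delta)$ (near-regular graphs), and worse otherwise; union-bounding on top of this introduces the $\sqrt{\log}$ factor you yourself flag. (iii) The grid-sparsification accounting is also off: spacing $n\Delta/m$ between grid points incurs slack $\Theta(n\Delta^2/m)$ by $\Delta$-Lipschitzness, and since $2m\le n\Delta$ this slack is always at least $\Delta$ and can be far larger than $\Delta$ for irregular graphs, so it is not absorbed by the $+\Delta$ term. (A minor further issue: $\E[X_i] = \frac{2i(n-i)}{n(n-1)}m$ exceeds $m/2$ near $i=n/2$ by $\Theta(m/n)\le\Theta(\Delta)$; small, but worth noting since you claim $\E[X_i]\le m/2$.) To salvage your route you would need a genuine maximal inequality or chaining argument for the process $(X_i)_i$, not a pointwise bound plus union bound; as written the proposal does not establish that a random permutation works with positive probability, and the randomized algorithm claim at the end inherits this gap.
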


To prove \Cref{lem:balanced-rearrangement-low-deg}, we start by showing that the following lemma on graph partitioning, that roughly balances the degree and cuts half the edges.

\begin{lemma} \label{lem:balanced-partition}
For any graph $G = (V, E)$ with $m$ edges such that each vertex has degree at most $\Delta$, there exists a partition of $V$ into $V_1 \uplus V_2$ such that
\begin{align}
|E[V_1, V_2]| &\leq m/2 + \sqrt{m} \label{eq:small-bisect} \\
\sum_{v \in V_1} \deg_G(v)&\leq m + 2\sqrt{m\Delta},\quad \sum_{v \in V_2} \deg_G(v) \leq m + 2\sqrt{m\Delta}. \label{eq:balanced-degree}
\end{align}
Furthermore, such a partition can be computed in (randomized) polynomial time.
\end{lemma}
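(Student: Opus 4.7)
\smallskip

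The plan is to establish the lemma by a straightforward second-moment argument on a uniformly random partition. Specifically, I would put each vertex $v$ into $V_1$ or $V_2$ independently with probability $1/2$, verify that both inequalities each hold with constant probability, and then conclude by a union bound that a constant number of independent samples suffices to find a valid partition (giving the randomized polynomial time algorithm).

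For the degree-balance inequality \eqref{eq:balanced-degree}, let $D_1 := \sum_{v \in V_1} \deg_G(v) = \sum_v \ind[v \in V_1] \cdot \deg_G(v)$. Since the indicators are independent, $\E[D_1] = m$ and
$$\Var[D_1] = \frac{1}{4} \sum_{v \in V} \deg_G(v)^2 \leq \frac{\Delta}{4} \sum_{v \in V} \deg_G(v) = \frac{m\Delta}{2}.$$
Chebyshev's inequality with threshold $2\sqrt{m\Delta}$ (i.e., $2\sqrt{2}$ standard deviations) gives $\Pr\bigl[|D_1 - m| > 2\sqrt{m\Delta}\bigr] \leq 1/8$. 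On the complementary event, using $\sum_{v \in V_1}\deg_G(v) + \sum_{v \in V_2}\deg_G(v) = 2m$, both sums lie within $2\sqrt{m\Delta}$ of $m$ and in particular satisfy \eqref{eq:balanced-degree}.

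For the cut-size inequality \eqref{eq:small-bisect}, introduce $Y_v := 2\ind[v \in V_1] - 1 \in \{-1, +1\}$, so that
$$|E[V_1, V_2]| = \sum_{(u,v) \in E} \frac{1 - Y_u Y_v}{2} = \frac{m}{2} - \frac{1}{2} Z, \qquad\text{where } Z := \sum_{(u,v) \in E} Y_u Y_v.$$
Since the $Y_v$ are independent with $\E[Y_v] = 0$, we have $\E[Z] = 0$. Expanding the variance,
$$\Var[Z] = \sum_{e, e' \in E} \E[Y_{e_1} Y_{e_2} Y_{e'_1} Y_{e'_2}],$$
and any pair of distinct edges contributes $0$, because at least one vertex appears an odd number of times in the product, forcing a factor of $\E[Y_w] = 0$. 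Thus $\Var[Z] = \sum_{e \in E} 1 = m$, and Chebyshev gives $\Pr[|Z| > 2\sqrt{m}] \leq 1/4$, i.e., \eqref{eq:small-bisect} holds with probability $\geq 3/4$.

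By the union bound, both \eqref{eq:small-bisect} and \eqref{eq:balanced-degree} hold simultaneously with probability at least $1 - 1/4 - 1/8 = 5/8$, so drawing $O(1)$ independent partitions and returning the first one that passes both checks yields the desired randomized polynomial time algorithm. There is no serious obstacle here; the only point that requires a moment's thought is the computation of $\Var[Z]$, where it is crucial to observe that edges sharing a single endpoint still contribute $0$ in expectation. Without this cancellation one would only get $\Var[Z] = O(m\Delta)$, which would weaken \eqref{eq:small-bisect} to $m/2 + O(\sqrt{m\Delta})$ and would be insufficient for the target additive error of $\sqrt{m}$.
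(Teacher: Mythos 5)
Your proof is correct and follows essentially the same second-moment argument as the paper: a uniformly random bipartition, Chebyshev on the cut size, Chebyshev on the degree sum, and a union bound. The only difference is cosmetic: you compute $\Var(|E[V_1,V_2]|)$ via the $\pm 1$ reparametrization $Y_v = 2\ind[v\in V_1]-1$, whereas the paper expands $\E[|E[V_1,V_2]|^2]$ directly and observes that the joint cut probability for any two \emph{distinct} edges is exactly $1/4$ (even when the edges share a vertex); both routes express the same cancellation and give $\Var = m/4$. Your constant $5/8$ vs.\ the paper's $1/4$ in the success probability comes from using the exact factor $\tfrac14\sum_v \deg_G(v)^2$ rather than the paper's looser upper bound, but this is immaterial.
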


\begin{proof}
We will show that a random partition $V_1, V_2$ (where, for each $v \in V$, we pick $i(v)$ uniformly at random from $\{1, 2\}$ and let $v$ be in $V_{i(v)}$) satisfies the desired condition with probability\footnote{We can repeat this process to make the failure probability arbitrarily small.} 1/4. First, observe that $\E[|E[V_1, V_2]|] \geq m/2$. Meanwhile, we have
\begin{align*}
\E[|E[V_1, V_2]|^2] = \sum_{(u, v) \in E, (u', v') \in E} \Pr[\{i(u), i(v)\} = \{1, 2\}, \{i(u'), i(v')\} = \{1, 2\}]. 
\end{align*}
Notice that
\begin{align*}
\Pr[\{i(u), i(v)\} = \{1, 2\}, \{i(u'), i(v')\} = \{1, 2\}] =
\begin{cases}
1/4 & \text{ if } \{u, v\}  \ne \{u', v'\}, \\
1/2 & \text{ if } \{u, v\}  = \{u', v'\}.
\end{cases}
\end{align*}
Plugging this into the above, we have $\E[|E[V_1, V_2]|^2] = m^2/4 + m/4$. This means that $\Var(|E[V_1, V_2]|) \leq m/4$ and, thus, by Chebyshev's inequality, we have
\begin{align} \label{eq:cherbyshev-one}
\Pr[|E[V_1, V_2]| > m/2 + \sqrt{m}] < 1/4.
\end{align}
Meanwhile, we also have $\E[\sum_{v \in V_1} \deg_G(v)] = m$ and 
\begin{align*}
\Var\left(\sum_{v \in V_1} \deg_G(v)\right)
&= \Var\left(\sum_{v \in V} \deg_G(v) \cdot \ind[i(v) = 1]\right) \\
&= \sum_{v \in V} \Var\left(\deg_G(v) \cdot \ind[i(v) = 1]\right) \\
&\leq \sum_{v \in V} \deg_G(v)^2 \\
&\leq 2m\Delta,
\end{align*}
where the last inequality is due to the assumption that the maximum degree is at most $\Delta$.

Again, by Chebyshev's inequality, we have
\begin{align} \label{eq:cherbyshev-two}
\Pr\left[\left|\sum_{v \in V_1} \deg_G(v) - m\right| > 2\sqrt{m\Delta}\right] < 1/2.
\end{align}
Combining \eqref{eq:cherbyshev-one} and \eqref{eq:cherbyshev-two}, we can conclude that a random partition satisfies the two condition with probability at least 1/4 at desired.
\end{proof}

\Cref{lem:balanced-rearrangement-low-deg} can now be proved using the vertex set partitioning guaranteed by the above lemma and then building the sequence (to the left and to the right) using a ``greedy'' strategy.

\begin{proof}[Proof of \Cref{lem:balanced-rearrangement-low-deg}]
First, apply \Cref{lem:balanced-partition} to obtain a partition $V = V_1 \uplus V_2$ that satisfies \eqref{eq:small-bisect} and \eqref{eq:balanced-degree}. Let $n_1 = |V_1|$ and $S_{n_1} = V_2$. We then define the remaining $S_t$s using a ``greedy'' approach as follows.
\begin{itemize}
\item For $t = n_1 - 1, \dots, 0$, pick $v_t := \underset{v \in V \setminus S_{t+1}}{\argmin}\ |E[S_{t + 1} \cup \{v\}, V \setminus (S_{t + 1} \cup \{v\})]|$. Then, let $S_t = S_{t + 1} \cup \{v_t\}$.
\item For $t = n_1 + 1, \dots, n$, pick $v_t := \underset{v \in S_{t-1}}{\argmin}\ |E[S_{t - 1} \setminus \{v\}, V \setminus (S_{t - 1} \setminus \{v\})]|$. Then, let $S_t = S_{t - 1} \setminus \{v_t\}$.
\end{itemize}
Suppose for the sake of contradiction that $\underset{t \in [n]}{\max}\ |E[S_t, V \setminus S_t]| > m/2 + \sqrt{m\Delta} + \Delta$. Let $t^* = \underset{t \in [n]}{\argmax}\ |E[S_t, V \setminus S_t]|$. Note that \eqref{eq:small-bisect} implies that $t^* \ne n_1$. Thus, we must have that either $t^* < n_1$ or $t^* > n_1$. Since the two cases are symmetric, we assume w.l.o.g. that $t^* < n_1$.

Since the degree of every vertex is at most $\Delta$, we also have that $|E[S_{t^* + 1}, V \setminus S_{t^* + 1}]| > m/2 + \sqrt{m\Delta}$. Note that $(V \setminus S_{t^* + 1}) \subseteq (V \setminus S_{n_1}) = V_1$. Thus, \eqref{eq:balanced-degree} implies that $\sum_{v \in (V \setminus S_{t^*+1})} \deg_G(v) \leq m + 2\sqrt{m\Delta} < 2 \cdot |E[S_{t^* + 1}, V \setminus S_{t^* + 1}]|$. This means that there exists $v' \in (V \setminus S_{t^*+1})$ such that $v$ has more edges to $S_{t^* + 1}$ than that within $(V \setminus S_{t^*+1})$. In other words, $|E[S_{t^* + 1} \cup \{v'\}, V \setminus (S_{t^* + 1} \cup \{v'\})]| < |E[S_{t^* + 1}, V \setminus S_{t^* + 1}]|$. By the algorithm's greedy choice of $v_{t^*}$, we must also have $|E[S_{t^*}, V \setminus S_{t^*}]| < |E[S_{t^* + 1}, V \setminus S_{t^* + 1}]|$. However, this contradicts with our choice of $t^*$.
\end{proof}

\subsection{Handling High-Degree Vertices}

We now prove \Cref{thm:sequence-balanced-full}. At a high level, this is done by applying the previous subsection's result on the low-degree vertices to get a sequence of sets on those. We then interleave the high-degree vertices using a ``greedy'' strategy where we move a high-degree vertex out of the set only when it decreases the cut size w.r.t.\ edges to the low-degree vertices. We formalize and analyze this strategy below.

\begin{proof}[Proof of \Cref{thm:sequence-balanced-full}]
Let $\Delta := 2m^{3/5}$. We partition $V$ into $V^{> \Delta}$ and $V^{\leq \Delta}$. In $V^{> \Delta}$ (resp.\ $V^{\leq \Delta}$) we have those vertices with degree more than  $\Delta$ (resp.\ vertices with degree at most $\Delta$). For brevity, let $E_1 := E[V^{\leq \Delta}],\ E_2 := E[V^{> \Delta}, V^{\leq \Delta}],\ E_3 = E[V^{> \Delta}]$ and $m_1 = |E_1|,\ m_2 = |E_2|,\ m_3 = |E_3|$. Note that $E_1 \uplus E_2 \uplus E_3$ is a partition of $E$.


First, we invoke \Cref{lem:balanced-rearrangement-low-deg} to get a sequence $V^{\leq \Delta} = \tS_0 \supsetneq \cdots \supsetneq \tS_{n_1} = \emptyset$ such that for all $\ttt \in \{0, \dots, n_1\}$ we have $|E[\tS_{\ttt}, V^{\leq \Delta} \setminus \tS_{\ttt}]| \leq m_1/2 + \sqrt{m_1\Delta} + \Delta$. For the next step, it will be convenient to define $\tv_{\ttt}$ as the only vertex in $\tS_{\ttt} \setminus \tS_{\ttt + 1}$ for all $\ttt \in \{0, \dots, n_1 - 1\}$.
Then, we construct the full set sequence through the following procedure:
\begin{enumerate}
\item Let $j = 0$ and define $S_j := V$.
\item For $i = 0, \dots, n_1 - 1$:
\begin{enumerate}
\item[2.1.] \textbf{Low-Degree Move:} Define $S_{j+1} := S_j \setminus \{\tv_i\}$. (Note that we have $|S_{j+1} \cap V^{\leq \Delta}| = \tS_{i+1}$) Then, increase $j$ by one. 
\item[2.2.] \textbf{High-Degree Correction:} Next, while there exists $v \in V^{> \Delta} \cap S_j$ such that $|E[v, \tS_i]| \leq |E[v, V \setminus \tS_i]|$:
\begin{itemize}
\item Define $S_{j + 1} := S_j \setminus \{v\}$. Then, increase $j$ by one. 
\end{itemize}
\end{enumerate}
\end{enumerate}
Note that, since $\Delta \geq 2\sqrt{m}$, we have $|V^{> \Delta}| < 2m / \Delta \leq \Delta / 2$. This means that every high-degree vertex $v \in V^{>\Delta}$ has at least as many edges to $V^{\leq \Delta}$ as it has within $V^{>\Delta}$. This implies that, after the last low-degree move (i.e. $i = n_1 - 1$), the high-degree correction will move all vertices out of $S$ as desired.

Next, we will argue that $\underset{t \in [n]}{\max}\ \left|E[S_t, V \setminus S_t] \cap (E_1 \cup E_2)\right| \leq  m/2 + \sqrt{m\Delta} + 2\Delta$. Before we do so, note that since $m_3 \leq |V^{> \Delta}|^2/2 < 2(m/\Delta)^2$. This implies that $\underset{t \in [n]}{\max}\  \left|E[S_t, V \setminus S_t]\right| \leq  m/2 + 2(m/\Delta)^2 + \sqrt{m\Delta} + 2\Delta$, which is at most $m/2 + 7m^{4/5}$ as claimed.

To bound $\underset{t \in [n]}{\max}\ \left|E[S_t, V \setminus S_t] \cap (E_1 \cup E_2)\right   |$, first observe that high-degree correction never increases $|E[S_t, V \setminus S_t] \cap (E_1 \cup E_2)|$. Thus, it suffices to argue that $|E[S_t, V \setminus S_t] \cap (E_1 \cup E_2)| \leq m/2 + \sqrt{m\Delta} + 2\Delta$ immediately after a low-degree move. Since this is a low-degree move, we have 
\begin{align*}
|E[S_t, V \setminus S_t] \cap (E_1 \cup E_2)| \leq |E[S_{t - 1}, V \setminus S_{t - 1}] \cap (E_1 \cup E_2)| + \Delta.
\end{align*} 
Moreover, since $S_{t - 1}$ is a result after all possible high-degree corrections, for every $v \in V^{> \Delta}$, at most half of its edges to $V^{\leq \Delta}$ belong to the cut. As a result, $$|E[S_{t - 1}, V \setminus S_{t - 1}] \cap E_2| \leq m_2 / 2.$$
Meanwhile, from our choice of $\tS_0, \dots, \tS_{n_1}$, the number of edges in $E_1$ that are cut is at most $m_1 / 2 + \sqrt{m_1\Delta} + \Delta$, i.e.,
\begin{align*}
|E[S_{t - 1}, V \setminus S_{t - 1}] \cap E_1| \leq m_1 / 2 + \sqrt{m_1\Delta} + \Delta.
\end{align*}
Combining these three inequalities, we have
$$|E[S_t, V \setminus S_t]| + m_2 / 2 + m_1 / 2 + \sqrt{m_1\Delta} + \Delta + \Delta \leq m/2 + \sqrt{m/\Delta} + 2\Delta,$$ which concludes our proof.
\end{proof}

\section{Conclusion and Open Questions}
\begin{sloppypar}
We positively resolved the Reconfiguration Inapproximability Hypothesis~(\RIH)~\cite{Ohsaka23prev}, which in turn shows that a host of reconfiguration problems are \PSPACE-hard even to approximate within some constant factor. Meanwhile, we prove tight \NP-hardness of approximation results for GapMaxMin-2-\CSP$_q$ and Set Cover Reconfiguration.
\smallskip

Subsequent to \cite{KM24}, Hirahara and Ohsaka proved the \emph{optimal} inapproximability threshold for the Minmax Set Cover Reconfiguration problem, showing that it is \PSPACE-hard to approximate within a factor of $2-\tfrac{1}{\mathrm{polyloglog}\,N}$ (hence $2-o(1)$), matching the known $2$-approximation and therefore pinning down the threshold at~$2$~\cite{HiraharaO24setcover}. In contrast, the tight \PSPACE-hardness threshold for GapMaxMin-2-\CSP$_q$ remains open.
\smallskip

In light of \Cref{thm:main_grw}, a direct route to closing the gap for GapMaxMin-2-\CSP$_q$ is to establish concrete and sharp bounds on the trade-offs between the query complexity and the soundness gap of \PCPP\ (assignment testers). On the other hand, a negative answer---by placing the gap version in a complexity class believed to be a strict subset of \PSPACE\ (e.g., $\Sigma^P_2$)---would also be very interesting (indeed, arguably more so than a \PSPACE-hardness result).

Apart from the aforementioned question, there are several other more technical and specific problems. For example, what are the best $\eps$ in terms of $n$ (and $q$) that we can get in our \NP-hardness results? Specifically, \Cref{thm:maxmin-csp-np-hardness} requires the alphabet size $q$ to grow as $\eps \to 0$. Can we get rid of such a dependency (or show that this is not possible by giving an approximation algorithm)?
\end{sloppypar}

\bibliographystyle{alpha}
\bibliography{merge.bib}

\end{document}